\definecolor{light-gray}{gray}{0.95}
\theoremstyle{plain}
  \newtheorem{theorem}{Theorem}[section]
  \newtheorem{proposition}[theorem]{Proposition}
  \newtheorem{cor}[theorem]{Corollary}
\theoremstyle{definition}
  \newtheorem{definition}{Definition}[section]
  \newtheorem{example}{Example}[section]
\definecolor{linkcol}{HTML}{5388C8}
\newcommand{\set}[1]{\{\,#1\,\}}
\newcommand{\sql}[1]{\texttt{#1}}
\newcommand{\str}[1]{\mathcal{#1}}
\newcommand{\conf}[1]{( #1 )}
\newcommand{\rem}[1]{}
\renewcommand{\rem}[1]{\textcolor{red}{[#1]}}
\newcommand{\mm}[1]{}
\renewcommand{\mm}[1]{[\textcolor{blue}{MM: #1}]}
\newcommand{\teq}{\approx}
\newcommand{\tneq}{\not\approx}
\newcommand{\ter}[1]{\mathcal{T}(#1)}
\newcommand{\sint}{\mathsf{Int}}
\newcommand{\sbool}{\mathsf{Bool}}
\newcommand{\ite}{\mathsf{ite}}
\newcommand{\ltrue}{\top} 
\newcommand{\Mo}{\mathbf{I}}
\newcommand{\Sc}{\mathsf{S}}
\newcommand{\Ac}{\mathsf{A}}
\newcommand{\Ec}{\mathsf{E}}
\newcommand{\Bc}{\mathsf{B}}
\newcommand{\card}[1]{\lvert #1 \rvert}
\newcommand{\define}[1]{\textsl{#1}}
\newcommand{\unknown}{\ensuremath{\mathsf{unknown}}\xspace}
\newcommand{\unsat}{\ensuremath{\mathsf{unsat}}\xspace}
\newcommand{\sat}{\ensuremath{\mathsf{sat}}\xspace}
\newcommand{\cvc}{{\small cvc5}\xspace}
\newcommand{\nia}{\tname{NIA}}
\newcommand{\E}{\tname{EL}}
\newcommand{\nth}{\T_\mathsf{\nia}}
\newcommand{\bool}{\ensuremath{\mathsf{Bool}}\xspace}
\newcommand{\I}{\mathcal{I}}
\renewcommand{\vec}[1]{\bar{#1}}
\newcommand{\tname}[1]{\mathsf{#1}}
\newcommand{\ent}[1][]{\models_{#1}}
\newcommand{\arithmodel}{\mathcal{A}}
\newcommand{\bagEq}[1]{\equiv_{#1}}
\newcommand{\thsym}[1]{\ensuremath{\mathsf{#1}}\xspace}
\newcommand{\ms}{\thsym{Bag}}
\newcommand{\s}{\thsym{Set}}
\renewcommand{\t}{\thsym{Table}}
\renewcommand{\r}{\thsym{Relation}}
\newcommand{\tuple}{\thsym{Tuple}}
\renewcommand{\int}{\thsym{Int}}
\newcommand{\stringSort}{\thsym{String}}
\newcommand{\map}{\pi\xspace}
\newcommand{\bagFilter}{\thsym{bag.filter}}
\newcommand{\eleSort}{\ensuremath{\varepsilon}\xspace}
\newcommand{\sqin}{%
  \mathrel{\vphantom{\sqsubset}\text{%
      \mathsurround=0pt
      \ooalign{$\sqsubset$\cr$-$\cr}%
    }}%
}
\newcommand{\dsetminus}{\mathbin{{\setminus}\mspace{-5mu}{\setminus}}}
\newcommand{\setof}{\mathsf{setof}\xspace}
\newcommand{\msym}{\mathsf{m}}
\newcommand{\m}[2]{\msym(#1, #2)}
\newcommand{\filter}{\sigma}
\newcommand{\bag}{\mathsf{bag}}
\newcommand{\product}{\otimes}
\newcommand\squplus{\mathbin{\ooalign{$\sqcup$\cr%
      \hfil\raise0.42ex\hbox{$\scriptscriptstyle +$}\hfil\cr}}}
\newcommand{\bempty}[1]{\varnothing_{#1}}
\newcommand{\rewritesto}{~\longrightarrow~}
\newif \ifFANCYOPER
\newcommand{\vars}{\mathrm{Vars}}
\newcommand{\arity}{\mathrm{ar}}
\newcommand{\tup}[1]{\langle#1\rangle}
\newcommand{\seq}[2]{\bar{#1}_{#2}}
\newcommand{\symFont}[1]{\ensuremath{\mathsf{#1}}}
\newcommand{\tupleSort}{\symFont{Tup}}
\newcommand{\opemptyset}{[\,]}
\newcommand{\sempty}{[\,]}
\newcommand{\opprod}{\product}
\newcommand{\opunion}{\sqcup}
\newcommand{\opinter}{\sqcap}
\newcommand{\opsetminus}{\setminus}
\newcommand{\opin}{\mathrel{\ooalign{$\sqsubset$\cr{$-$}}}}
\newcommand{\opnotin}{\not\opin}
\newcommand{\opequal}{\approx}
\newcommand{\opsingleton}[1]{[\,#1\,]}
\newcommand{\Scclosed}{\Sc^*}
\newcommand{\termsof}[1]{\mathcal{T}(#1)}
\newcommand{\cadd}[2]{#1,\, #2}
\newcommand{\rn}[1]{\textsc{\small#1}\xspace}
\newcommand{\ruleInterDown}{\rn{Inter Down}}
\newcommand{\ruleInterUp}{\rn{Inter Up}}
\newcommand{\ruleUnionDown}{\rn{Union Down}}
\newcommand{\ruleUnionUp}{\rn{Union Up}}
\newcommand{\ruleDifferenceDown}{\rn{Diff Down}}
\newcommand{\ruleDifferenceUp}{\rn{Diff Up}}
\newcommand{\ruleProductUp}{{\rn{Prod Up}}\xspace}
\newcommand{\ruleProductDown}{{\rn{Prod Down}}\xspace}
\newcommand{\ruleJoinDown}{{\rn{Join Down}}\xspace}
\newcommand{\ruleSingleUp}{{\rn{Single Up}}}
\newcommand{\ruleSingleDown}{\rn{Single Down}}
\newcommand{\ruleSetDiseq}{\rn{Set Diseq}}
\newcommand{\ruleEqUnsat}{\rn{Eq Unsat}}
\newcommand{\ruleEmptyUnsat}{\rn{Empty Unsat}}
\newcommand{\nullable}{\mathsf{Nullable}}
\newcommand{\T}{\mathbb{T}}
\newcommand{\nullN}{\mathsf{null}}
\newcommand{\someN}{\mathsf{some}}
\newcommand{\valN}{\mathsf{val}}
\newcommand{\isSome}{\mathsf{isSome}}
\newcommand{\isNull}{\mathsf{isNull}}
\newcommand{\lift}{\mathsf{lift}}
\newcommand{\liaStar}{\(\text{QFPA}^\star\)\xspace}
\newcommand{\tablesTheory}{\T_\mathsf{Tab}}
\newcommand{\tablesSig}{\Sigma_\mathsf{Tab}}
\newcommand{\relationsTheory}{\T_\mathsf{Rel}}
\newcommand{\ver}{\; \vert \;}
\newcommand{\tSelect}{\mathsf{select}}
\newcommand{\tupleProject}{\mathsf{tuple.proj}}
\newcommand{\tableProject}{\mathsf{table.proj}}
\newcommand{\relationProject}{\mathsf{rel.proj}}
\newcommand{\select}{\mathsf{select}}
\newcommand{\project}{\mathsf{proj}}
\newcommand{\tabjoin}{\bowtie}
\newcommand{\ruleAConf}{\rn{A-Conf}}
\newcommand{\ruleBConf}{\rn{B-Conf}}
\newcommand{\ruleEConf}{\rn{E-Conf}}
\newcommand{\ruleBAProp}{\rn{B-A-Prop}}
\newcommand{\ruleAProp}{\rn{A-Prop}}
\newcommand{\ruleEProp}{\rn{E-Ident}}
\newcommand{\ruleBEProp}{\rn{B-E-Prop}}
\newcommand{\ruleBagDisequality}{\rn{Diseq}} 
\newcommand{\ruleNonNegative}{\rn{Nonneg}} 
\newcommand{\ruleBagEmpty}{\rn{Empty}}
\newcommand{\ruleBagConstructorOne}{\rn{Cons1}} 
\newcommand{\ruleBagConstructorTwo}{\rn{Cons2}} 
\newcommand{\ruleDisjointUnion}{\rn{Disj Union}} 
\newcommand{\ruleMaxUnion}{\rn{Max Union}}
\newcommand{\ruleBagIntersection}{\rn{Inter}} 
\newcommand{\ruleDifferenceSubtract}{\rn{Diff Sub}} 
\newcommand{\ruleDifferenceRemove}{\rn{Diff Rem}} 
\newcommand{\ruleSetof}{\rn{Setof}} 
\newcommand{\ruleTableProductUp}{\rn{Prod Up}} 
\newcommand{\ruleTableProductDown}{\rn{Prod Down}} 
\newcommand{\ruleTableJoinUp}{\rn{Join Up}}
\newcommand{\ruleTableJoinDown}{\rn{Join Down}}
\newcommand{\ruleRelationJoinUp}{\rn{Join Up}}
\newcommand{\ruleRelationJoinDown}{\rn{Join Down}}
\newcommand{\ruleBagFilterUp}{\rn{Filter Up}}
\newcommand{\ruleBagFilterDown}{\rn{Filter Down}}
\newcommand{\ruleBagMapUp}{\rn{Map Up}}
\newcommand{\ruleBagMapDownInjective}{\rn{Inj Map Down}} 
\newcommand{\ruleBagMapUpNonInjectiveUp}{\rn{NotInj Up}} 
\newcommand{\ruleBagMapUpNonInjectiveDown}{\rn{NotInj Down}} 
\newcommand{\ruleSetFilterUp}{\rn{Filter Up}}
\newcommand{\ruleSetFilterDown}{\rn{Filter Down}}
\newcommand{\ruleSetMapUp}{\rn{Map Up}}
\newcommand{\ruleSetMapDown}{\rn{Map Down}}
\newcommand{\inferR}[3][2]{\infer[#1]{#3}{#2}}
\newcommand{\calcite}{calcite\xspace}
\newcommand{\postgres}{PostgreSQL\xspace}
\newcommand{\sqlSolver}{SQLSolver\xspace}
\newcommand{\spes}{SPES\xspace}
\newcommand{\terminatingBagCount}{17}
\newcommand{\terminatingSetCount}{15}
\newcommand{\elementIndex}{\mathit{ind}}
\newcommand{\delem}{\mathit{delem}}
\newcommand{\mapSum}{\mathit{sum}}
\begin{document}
\title{Verifying SQL Queries \\ using Theories of Tables and Relations}
%
%
\author{Mudathir Mohamed\inst{1} \and 
  Andrew Reynolds\inst{1} \and 
  Cesare Tinelli\inst{1} \and 
  Clark Barrett\inst{2}} 
\institute{
  The University of Iowa
  \and
  Stanford University
}
\authorrunning{Mohamed et al.}
\titlerunning{Verifying SQL queries}

\maketitle              
%

\begin{abstract}
  We present a number of first- and second-order extensions to SMT theories
  specifically aimed at representing and analyzing SQL queries with
  join, projection, and selection operations.
  We support reasoning about SQL queries with either bag or set semantics for database tables.
  We provide the former via an extension of a theory of finite bags
  and the latter via an extension of the theory of finite relations.
  Furthermore, we add the ability to reason about tables with null values
  by introducing a theory of nullable sorts based on an extension of the theory of algebraic datatypes.
  We implemented solvers for these theories in the SMT solver cvc5 and
  evaluated them on a set of benchmarks derived from public sets of SQL equivalence problems.
  %
  %
  %
\end{abstract}
%
%
\section{Introduction}\label{sec:introduction}
%

%
The structured query language (SQL) is the dominant declarative query language in relational databases.
Two queries are equivalent in SQL if and only if they return the same table
for every database instance of the same schema.
Query equivalence problems are undecidable in general~\cite{foundationsOfDatabases}.
For \emph{conjunctive} queries, the problem is NP-complete 
under set semantics~\cite{cqNP}
and \(\prod_2^p\)-hard under bag semantics~\cite{pi2Hard}.
%
SQL query equivalence problems have many applications in
databases and software development,
including query optimization and sharing sub-queries in cloud databases.
There is a financial incentive to reduce the cost of these subqueries,
since cloud databases charge for data storage, network usage, and computation.

Recently, these problems got some attention from researchers
in formal verification
who have developed software tools to prove query equivalence 
in some SQL fragments.
To our knowledge, the state of the art of these tools is currently represented 
by \sqlSolver~\cite{sqlSolver}
which supports a large subset of SQL, and 
SPES, an earlier tool that was used to verify queries 
from cloud-scale applications~\cite{spes}.

We present an alternative solution for the analysis of SQL queries
based on a reduction to constraints in a new SMT theory of tables
with bag (i.e., multiset) semantics.
This work includes the definition of the theory and 
the development of a specialized subsolver for it
within the \cvc SMT solver.
We have extended a previous theory of finite bags~\cite{bagsPaper2008} 
with map and filter operators which are needed to support SQL \textit{select}
and \textit{where} clauses, respectively.
We represent table rows as tuples and define tables as bags of tuples.
We also support the \define{product} operator over tables.
While multiset semantics captures faithfully the way tables are treated 
in relational database management systems,
there is a lot of work in the database literature that is based on set semantics.
We provide set semantics as an alternative encoding based on a theory of finite relations 
by Meng et al.~\cite{relationsPaper}, extended in this case too to accommodate
SQL operations.

%
An initial experimental comparison of our implementation with \sqlSolver and SPES
places it between the two in terms of performance and supported features.
%
While there are several opportunities for further performance improvements, 
our solution has two main advantages with respect to previous work: 
$(i)$ it is not limited to SQL equivalence problems, and 
$(ii)$ it comes fully integrated in a state-of-the-art SMT solver
with a rich set of background theories.
Additionally, it could be further extended to provide support
for SQL queries combining set and bag-set semantics~\cite{Cohen06}. 
This opens up the door to other kinds of SQL query analyses
(including, for instance, query containment and query emptiness problems)
over a large set of types for query columns
(various types of numerical values, strings, enumerations, and so on).

%
\paragraph{\bf Specific Contributions}
We introduce a theory of finite tables by extending a theory of bags
with support for product, filter and map operators.
We also extend a theory of finite relations with map, filter, and inner join operators.
We introduce a theory of nullable sorts as an extension of a theory of algebraic datatypes.
These new theories enable the encoding in SMT of a large fragment of SQL 
under either multiset or set semantics
and the automated analysis of problems such as query equivalence.
We extend the \cvc SMT solver~\cite{DBLP:conf/tacas/BarbosaBBKLMMMN22}
with support for quantifier-free constraints
over any combination of the theories above and those already defined in \cvc.
We discuss an initial experimental evaluation on query equivalence benchmarks.

Our contribution does not include support for aggregations in SQL yet
but we plan to add that in future work.

\subsection{Related work}
A decision procedure for quantifier-free formulas in the theory of bags (QFB), 
or multisets, based on a reduction to quantifier-free Presburger arithmetic (QFPA) 
first appeared in Zarba~\cite{bagsZarba}.
The theory signature did not include cardinality constraints or difference operators.
These are supported in a new decision procedure 
by Logozzo et al.~\cite{bagsPaper2008}.
The new decision procedure reduces QFB to \liaStar, which extends QFPA with formulas
\(\vec{u} \in \{\vec{x} \; \vert \; \phi\}\), where \(\phi \in \) QFPA. 
Then, the \liaStar formula is translated into a QFPA formula, but 
with the addition of space overhead \cite{liaStarApproximation}.
An improved decision procedure which addresses the space overhead issue
by using approximations and interpolation with a set of Constrained Horn Clauses
was provided in Levatich et al.~\cite{liaStarApproximation}.
Our work is closest to Zarba's~\cite{bagsZarba},
with additional support for map and filter operators.
As in that work, we do not support the cardinality operator yet.\footnote{%
Indirect support for that will be provided in future work
through the support for SQL aggregations.
}
%
%
%
\cvc already supports the theory of finite sets~\cite{DBLP:journals/lmcs/BansalBRT18} and its extension to finite relations~\cite{relationsPaper}.
We add support for the map and filter operators to \cvc's theory solver for sets and,
by extension, to the theory solver for relations,
proving the decidability of the satisfiability problem
in a restricted fragment of the theory of finite sets.
That fragment is enough to handle the benchmarks considered in our experiments
under set semantics.

%
Cosette is an automated tool specifically written to prove 
SQL query equivalence~\cite{cosette}.
%
%
%
To do that, it translates the two SQL queries into algebraic expressions
over an unbounded semiring,
which it then normalizes to a \emph{sum-product} normal form.
Finally, it searches for an isomorphism between the two normal forms using
a custom decision procedure.
If an isomorphism is found, the two queries are declared equivalent.
To show that queries are inequivalent, Cosette translates the SQL queries into bounded lists
and uses an SMT solver to find a counterexample to the equivalence.
Cosette is a significant step forward in checking SQL query equivalences.
However, it has several limitations.
For instance, it does not support null values, intersection, difference, arithmetic operations, or string operations, which are all common in SQL queries.


EQUITAS ~\cite{equitas} and its successor SPES~\cite{spes}
are used to identify shared subqueries automatically in cloud databases.
Both support aggregate queries and null values. 
They use symbolic representations to prove query equivalence.
EQUITAS follows set semantics, whereas SPES follows bag semantics for tables.
EQUITAS starts by assigning symbolic tuples for the input tables in the queries,
and then applies specialized algorithms to build two formulas representing 
output tuples for the two queries.
If the two formulas are equivalent, then the two queries are classified as such.
SPES also uses  symbolic representations for queries.
However, it reduces the query equivalence problem to the existence
of an identity map between the tuples returned by the two queries~\cite{spes}.
In experimental evaluations~\cite{spes}, SPES proved more queries than EQUITAS and was 3 times as fast.
Both tools have the limitation of only supporting queries 
with similar structure.
They do not process queries with 
structurally different abstract syntax trees
(e.g., queries with different number of joins),
or queries that use basic operations such as difference or intersection.
They also do not support concrete tables, built using the keyword \verb|VALUES|.

\sqlSolver, which to our knowledge represents the current state of the art, 
was released recently and addresses many of the limitations highlighted above~\cite{sqlSolver}.
It follows bag semantics and, similar to Cosette, reduces the input queries
to unbounded semiring expressions.
However, it then translates them into formulas in an extension 
of \liaStar that supports nested, parametrized, or nonlinear summation.
This extension supports projection, product, and aggregate functions.
\sqlSolver implements algorithms similar to those 
in Levatich et al.~\cite{liaStarApproximation} to translate these formulas into QFPA.
\sqlSolver dominates other tools both in terms of performance and 
expressiveness of the supported SQL fragment.
However, it lacks the ability to generate counterexamples 
for inequivalent queries~\cite{sqlSolver}.
%

We follow a different approach from all the equivalence checkers discussed above.
Our solver, incorporated into \cvc, supports difference and intersection operations,
as well as evaluation on concrete tables.
Thanks to the rich set of background theories provided by \cvc,
it also supports arithmetic and string operations, as well as null values.
%
%
%
Finally, our solver is not restricted to SQL query equivalence,
as it supports in general any quantifier-free statements over SQL queries.
As a consequence, it can also be used for other applications such as, for instance,
checking for query containment or emptiness~\cite{qex1}.



\subsection{Formal Preliminaries}\label{sec:preliminaries}
We define our theories and our calculi in the context of many-sorted logic 
with equality and polymorphic sorts and functions.
%
We assume the reader is familiar with the following notions from that logic:
signature, term, 
formula, free variable,
interpretation, and satisfiability of a formula in an interpretation.
%
%
Let \(\Sigma\) be a many-sorted signature.
We will denote sort parameters in polymorphic sorts with \(\alpha\) and \(\beta\),
and denote monomorphic sorts with \(\tau\).
We will use \(\teq\) as the (infix) logical symbol for equality --- which
has polymorphic rank \(\alpha \times \alpha\) and is always interpreted 
as the identity relation over \(\alpha\).
We assume all signatures \(\Sigma\) contain the Boolean sort \(\sbool\),
always interpreted as the binary set \(\{\mathit{true},\mathit{false}\}\),
and two Boolean constant symbols, \(\ltrue\) and \(\bot\), 
for \(\mathit{true}\) and \(\mathit{false}\).
Without loss of generality, we assume \(\teq\) is the only predicate symbol
in \(\Sigma\), as all other predicates can be modeled as functions
with return sort \(\sbool\).
We will write, e.g., \(p( x )\) as shorthand for \(p( x ) \teq \ltrue\),
where \(p( x )\) has sort \(\sbool\).
We write \(s \tneq t \) as an abbreviation for \(\lnot\,s \teq t\).
\begin{report}
When talking about sorts, we will normally say just ``sort'' to mean monomorphic sort,
and say ``polymorphic sort'' otherwise.
\end{report}

%
%

A \define{\(\Sigma\)-term} is a well-sorted term, all of whose function symbols 
are from \(\Sigma\).
A \define{\(\Sigma\)-formula} is defined analogously.
If \(\varphi\) is a \(\Sigma\)-formula and \(\I\) a \(\Sigma\)-interpretation,
we write \(\I \models \varphi\) if \(\I\) satisfies \(\varphi\).
If \(t\) is a term, we denote by \(\I(t)\) the value of \(t\) in \(\I\).
%
%
A \define{theory} is a pair \(\T = (\Sigma, \Mo)\), where
\(\Sigma\) is a signature and  \(\Mo\) is a class of \(\Sigma\)-interpretations
that is closed under variable reassignment
(i.e., every \(\Sigma\)-interpretation that differs from one in \(\Mo\)
only in how it interprets the variables is also in \(\Mo\)).
\(\Mo\) is also referred to as the \define{models} of \(\T\).
A \(\Sigma\)-formula \(\varphi\) is
\define{satisfiable} (resp., \define{unsatisfiable}) in \(\T\)
if it is satisfied by some (resp., no) interpretation in \(\Mo\).
A set \(\Gamma\) of \(\Sigma\)-formulas \define{entails} in \(\T\)
a \(\Sigma\)-formula \(\varphi\), written \(\Gamma \ent[\T] \varphi\),
if every interpretation in \(\Mo\) that satisfies all formulas in \(\Gamma\)
satisfies \(\varphi\) as well.
%
%
We write \(\Gamma \ent \varphi\) to denote that
\(\Gamma\) entails \(\varphi\) in the class of all \(\Sigma\)-interpretations.
Two $\Sigma$-formulas are \define{equisatisfiable in \(\T\)}
if for every interpretation \(\str A\) of \(\T\) that satisfies one, there is
an interpretation
of \(\T\) that satisfies the other and differs from \(\str A\) at most
in how it interprets the free variables not shared by the two formulas.
%

%

\section{Theory of Tables}\label{sec:tables_theory}


\begin{figure}[tbp]
  \scriptsize
\centering
  \(
    \begin{array}{l}
      \begin{array}{@{}l@{\quad}l@{\quad}l@{\quad}l@{}}
        \toprule
        \textbf{Symbol} & \textbf{Type}                                                             & \textbf{SMT-LIB syntax}                & \textbf{Description}                          \\
        \midrule
        n               & \sint                                                                      & \text{n}                       & \text{All constants } \verb|n| \in \mathbb{N} \\
        +               & \sint \times \sint \to \sint                                               & \verb|+|                       & \text{Integer addition}                       \\
        *               & \sint \times \sint \to \sint                                               & \verb|*|                       & \text{Integer multiplication}                 \\
        -               & \sint \to \sint                                                            & \verb|-|                       & \text{Unary Integer minus}                    \\
        {\leq}          & \sint \times \sint \rightarrow \sbool                                      & \verb|<=|                      & \text{Integer inequality}                     \\
        \midrule        
        \bempty{\alpha}         & \ms(\alpha)                                                                 & \verb|bag.empty|               & \text{Empty bag}                              \\
        \bag             & \alpha \times \int \rightarrow \ms(\alpha)                                   & \verb|bag|                     & \text{Bag constructor}                        \\
        \msym           & \alpha \times \ms(\alpha) \rightarrow \int                                   & \verb|bag.count|               & \text{Multiplicity}                           \\
        \setof          & \ms(\alpha) \rightarrow \ms(\alpha)                                          & \verb|bag.setof|                & \text{Duplicate remove }                     \\
        \sqcup          & \ms(\alpha) \times \ms(\alpha) \rightarrow \ms(\alpha)                        & \verb|bag.union_max|           & \text{Max union}                              \\
        \squplus        & \ms(\alpha) \times \ms(\alpha) \rightarrow \ms(\alpha)                        & \verb|bag.union_disjoint|      & \text{Disjoint union}                         \\
        \sqcap          & \ms(\alpha) \times \ms(\alpha) \rightarrow \ms(\alpha)                        & \verb|bag.inter_min|           & \text{Intersection}                           \\
        \setminus       & \ms(\alpha) \times \ms(\alpha) \rightarrow \ms(\alpha)                        & \verb|bag.diff_subtract| & \text{Difference subtract}                    \\
        \dsetminus      & \ms(\alpha) \times \ms(\alpha) \rightarrow \ms(\alpha)                        & \verb|bag.diff_remove|   & \text{Difference remove}                      \\
        \sqin           & \alpha \times \ms(\alpha) \rightarrow \bool                                  & \verb|bag.member|              & \text{Member }                                \\
        \sqsubseteq     & \ms(\alpha) \times \ms(\alpha) \rightarrow \bool                             & \verb|bag.subbag|              & \text{Subbag}                                 \\
        \midrule
        %
        \filter          & 
        \left(\alpha \rightarrow \bool\right) \times \ms(\alpha) \rightarrow \ms(\alpha)                & \verb|bag.filter|              & \text{Bag filter}                             \\
        \map             & 
        \left(\alpha_1 \rightarrow \alpha_2\right) \times \ms(\alpha_1) \rightarrow \ms(\alpha_2)        & \verb|bag.map|                 & \text{Bag map}                                \\
        \midrule
        \tup{ \ldots} &
        \alpha_0 \times \dots \times \alpha_k \rightarrow \tuple(\alpha_0, \ldots, \alpha_k)                  & \verb|tuple|                    & \text{Tuple constructor}                      \\ 
        
        \select_i          & \tuple(\alpha_0, \ldots, \alpha_k) \rightarrow \alpha_i        & \verb|(_ tuple.select i)|    & \text{Tuple selector}                         
        \\[.8ex]
        \tupleProject_{i_1 \ldots i_n}        & 
        \tuple(\alpha_0, \ldots, \alpha_k) \rightarrow \tuple(\alpha_{i_1}, \ldots, \alpha_{i_n})
          & (\verb|_ tuple.proj | i_1 \cdots i_n)    & \text{Tuple projection}                         \\
        \midrule
        \product        & \t(\bm{\alpha}) \times \t(\bm{\beta}) \rightarrow \t(\bm{\alpha}, \bm{\beta}) & \verb|table.product|           & \text{Table cross join}                       \\
        \tabjoin_{i_1j_1 \cdots i_pj_p}      & 
          \t(\bm{\alpha}) \times \t(\bm{\beta}) \rightarrow \t(\bm{\alpha}, \bm{\beta}) 
        & (\verb|_ table.join | i_1j_1 \cdots i_pj_p)             & \text{Table inner join}
        \\[.8ex]
        \tableProject_{i_1 \ldots i_n}   &
          \t(\alpha_0, \ldots, \alpha_k) \rightarrow \t(\alpha_{i_1}, \ldots, \alpha_{i_n})
            & (\verb|_ table.proj | i_1 \cdots i_n)    & \text{Table projection}                         \\
        \bottomrule
      \end{array}
      \\[4ex]
    \end{array}
  \)
  \caption{
    Signature \(\tablesSig\) for the theory of tables.
    Here \(\t(\bm{\alpha}, \bm{\beta})\) is a shorthand for \(\t(\alpha_0, \ldots, \alpha_p, \beta_0, \ldots, \beta_q)\)
    when \(\bm{\alpha} = \alpha_0, \ldots, \alpha_p\) and \(\bm{\beta} = \beta_0, \ldots, \beta_q\).
  }
  \label{fig:tables_sig}
\end{figure}

We define a many-sorted theory \(\tablesTheory\) of (database) tables.
Its signature \(\tablesSig\) is given in Figure~\ref{fig:tables_sig}.
%
We use \(\alpha\) and \(\beta\), possibly with subscripts, as sort parameters
in polymorphic sorts.
%
%
%
The theory includes the integer sort \int and a number of integer operators,
with the same interpretation as in the theory of arithmetic.
Additionally, \(\tablesTheory\) has three classes of sorts,
with a corresponding polymorphic sort constructor:
function sorts, tuple sorts, and bag sorts.
\define{Function sorts} are monomorphic instances of 
\(\alpha_1 \times \dots \times \alpha_k \rightarrow \alpha\) for all \(k \geq 0\).
%
%
\define{Tuple sorts} are constructed by the varyadic constructor \(\tuple\) 
which takes zero or more sort arguments.
With no arguments, \(\tuple\) denotes the singleton set containing the empty tuple.
With \(k + 1\) arguments for \(k \geq 0\),
\(\tuple (\tau_0, \ldots, \tau_k)\) denotes the set of tuples of size \(k+1\) 
with elements of sort \(\tau_0, \ldots, \tau_k\), respectively.
\define{Bag sorts} are monomorphic instances \(\ms(\tau)\) of \(\ms(\alpha)\).
The sort \(\ms(\tau)\) denotes the set of all \emph{finite} bags 
(i.e., finite multisets) of elements of sort \(\tau\).
%
%
We model tables as bags of tuples.
We write \(\t(\tau_0, \ldots, \tau_k)\) as shorthand for the sort
\(\ms(\tuple (\tau_0, \ldots, \tau_k))\).
The sort \t, with no arguments, abbreviates \(\ms(\tuple)\).\footnote{%
And so denotes the set of all tables
containing just occurrences of the empty tuple.
}
Following databases terminology,
we refer to \(\tau_0, \dots, \tau_k\) as the \define{columns} of 
\(\t(\tau_0, \ldots, \tau_k)\),
and to the elements of a given table as its \define{rows}.
For convenience, we index the columns of a table by natural numbers
(starting with 0), instead of alphanumeric names, as in SQL.

The symbols in the first five lines in Figure~\ref{fig:tables_sig} are 
from arithmetic and are interpreted as expected.
The next eleven function symbols operate on bags and are defined
as in Logozzo~et al.~\cite{bagsPaper2008}.
Specifically, for all sorts \(\tau\),
\(\bempty{\tau}\) is interpreted as the empty bag of elements of sort \(\tau\).
The term \(\bag(e, n) \) denotes a singleton bag with \(n\) occurrences
of the bag element \(e\) if \(n \geq 1\);
otherwise; it denotes \(\bempty{\tau}\) where \(\tau\) is the sort of \(e\).
The term \(\m{e}{s}\) denotes the \define{multiplicity} of element \(e\)
in bag \(s\), that is, the number of times \(e\) occurs in \(s\).
Its codomain is the set of natural numbers.
For convenience, we use \int as the codomain and, during reasoning,
assert \(\m{e}{s}\geq 0\) for each multiplicity term.
The term \(\setof(s)\) denotes the bag with the same elements as \(s\)
but without duplicates.
The predicate \(e \sqin s\) holds iff element \(e\) has
positive multiplicity in bag \(s\).
The predicate \(s \sqsubseteq t\) holds iff bag \(s\) is contained in bag \(t\)
in the sense that \(\m{e}{s}\leq \m{e}{t}\) for all elements \(e\).
The binary operators \(\sqcup,\squplus, \sqcap, \setminus, \dsetminus \) are
interpreted as functions that take two bags \(s\) and \(t\) and return
their max union, disjoint union, subtract difference, and remove difference,
respectively, making the following equalities valid in \(\tablesTheory\):
\[
  \begin{array}{rcl@{\qquad}rcl}
    \m{e}{s \sqcup t}     & \teq & \max(\m{e}{s}, \m{e}{t})           &
    \m{e}{s \squplus t}   & \teq & \m{e}{s} + \m{e}{t}                  \\
    \m{e}{s \sqcap t}     & \teq & \min(\m{e}{s}, \m{e}{t})           &
    \m{e}{s \setminus t}  & \teq & \max(\m{e}{s} - \m{e}{t}, 0)         \\
    \m{e}{s \dsetminus t} & \teq & \ite(\m{e}{t} \geq 1, 0, \m{e}{s}) &
  \end{array}
\]

The next two symbols in Figure~\ref{fig:tables_sig}
are 
the filter and map functionals.
These symbols require an SMT solver that supports higher-order logic,
which is the case for cvc5~\cite{ho}.
%
%
The term \(\filter(p, s)\) denotes the bag consisting of the elements
of bag \(s\) that satisfy predicate \(p\), with the same multiplicity
they have in \(s\).
%
%
The term \(\map(f, s)\) denotes the bag consisting of the elements \(f(e)\),
for all \(e\) that occur in \(s\).
The multiplicity of \(f(e)\) in \(\filter(f, s)\) is
the sum of the multiplicities (in \(s\)) of all the elements of \(s\)
that \(f\) maps to \(f(e)\).
Note that while \(\map(f,s)\) and \(s\) have the same cardinality,
i.e., the same number of element occurrences,
\(\m{f(e)}{\map(f,s)}\) may be greater than \(\m{e}{s}\) for some elements \(e\)
unless \(f\) is injective.

The last six symbols denote dependent families of functions
over tuples and tables.
The term \(\tup{e_1, \dots, e_k}\)
is interpreted as the tuple comprised of the elements \(e_1, \dots, e_k\),
in that order, with \(\tup{}\) denoting the empty tuple.
%
For \(0 \leq i \leq k\) where \(k+1\) is the size of a tuple \(t\),
\(\select_i(t)\) is interpreted as the element at position $i$ of \(t\).
%
Note that \(i\) in \(\select_i\) is a numeral, not a symbolic index.
This means in particular that \(\select_3(\tup{a,b})\) is an ill-sorted term.
\(\tupleProject\) takes an unbounded number
of integer arguments, followed by a tuple.
\(\tupleProject_{i_1 \ldots i_n}\) where $n \geq 1$ and  
each \(i_j\) is an element of \(\set{0, \ldots, k}\),
applies to any tuple of size at least \(k+1\)
and returns the tuple obtained by collecting the values 
at position \(i_1, \ldots, i_n\) in \(t\).
In other words, it is equivalent to
\( \tup{\select_{i_1}(t), \ldots, \select_{i_n}(t)} \).
Note that \(i_1, \ldots, i_n\) are not required to be distinct.
When \(n = 0\), the term is equivalent to the empty tuple.
\(\tableProject_{i_1 \ldots i_n}\) extends the notion of projection to tables.
It is similar to \(\tupleProject\) except that it takes a table 
instead of a tuple as argument.
%
Note that the cardinality of \(\tableProject_{i_1 \ldots i_n}(s)\) is the same
as the cardinality of \(s\).
%
The term \(t \product t'\) is interpreted as the \define{cross join}
of tables \(t\) and \(t'\),
with every tuple occurrence in \(t\) being concatenated
with every tuple occurrence in \(t'\).
The operator \(\tabjoin_{i_1j_1 \cdots i_nj_n}\) is indexed by \(n\) pairs of natural numbers.
It takes two tables as input, each with at least \(n\) columns, 
and outputs the inner join of these tables on the columns specified 
by these index pairs. 
The paired columns have to be of the same sort.
%
%
%
%
Notice that if \(n = 0\), the join is equivalent to a product.
%

\paragraph{Simplifying Assumptions}
\emph{To simplify the exposition} and the description of the calculus, from now on,
we will consider only bags whose elements are not themselves bags, and
only tuples whose elements are neither tuples nor bags.
Note that this \emph{non-nestedness} restriction applies to tables as well ---
as they are just bags of tuples.
This is enough in principle to model and reason about SQL tables.\footnote{%
Commercial databases do allow table elements to be tuples. 
We could easily support this capability in the future simply by providing two kinds 
of tuple sorts, one for rows and one for table elements.
}
We stress, however, that none of these restrictions are necessary in our approach, 
nor required by our implementation, where we rely on \cvc's ability to reason modularly
about arbitrarily nested sorts.
Finally, we will not formalize in the calculus 
how we process constraints containing table projections 
(i.e., applications of \(\tableProject\)).
Such constraints are reduced internally to map constraints, with mapping functions
generated on the fly, and added to the relevant background solver.
\begin{report}
For instance, a constraint of the form \(s \teq \tableProject_{2,0}(t)\)
where \(t\) has sort \(\t(\tau_0,\ldots,\tau_3)\), say,
is reduced to the constraint
\(s \teq \pi(f,t)\) where \(f\) is a function symbol defined internally
to be equivalent to \(\lambda e.\, \tupleProject_{2,0}(e)\).
\end{report}

%

%

\begin{figure}[tbp]
  \centering
  \(
  \begin{array}{rl@{\quad\qquad}rl@{\quad\qquad}rl}
    \m{e}{\bempty{\eleSort}} & \rewritesto 0
                          & s \squplus \bempty{\eleSort}               & \rewritesto s                          
    \\
    e \sqin s             & \rewritesto 1 \leq \m{e}{s} & \bempty{\eleSort} \squplus t &\rewritesto t
    \\
    \bag(e, -n)           & \rewritesto \bempty{\eleSort}
                          & s \sqsubseteq t                         & \rewritesto (s \setminus t) \teq \bempty{\eleSort}
    \\
  \end{array}
  \)
  \caption{%
    Simplification rules for \(\tablesSig\)-terms.
    In the last two rules, \(\eleSort\) is the sort of \(e\) and of the elements of \(s\), respectively; \(n\) is a numeral.
  }
  \label{fig:rewriting}
\end{figure}

\begin{definition}\label{def:table_constraints}
A (monomorphic) sort is an \define{element sort} 
if it is not an instance of  \(\ms(\alpha)\).
An \define{element term} is a term of an \define{element sort}.
A \define{tuple/bag/table term} is a term of tuple/bag/table sort,
respectively.
A \define{\(\tablesTheory\)-atom} is an atomic \(\tablesSig\)-formula of the form
\(t_1 \approx t_2\),
\(e \sqin s\), or
\(s_1 \sqsubseteq s_2\),
where \(t_1\) and \(t_2\) are terms of the same sort,
\(e\) is a term of some element sort \(\tau\) and \(s\), \(s_1\), and \(s_2\)
are terms of sort \(\ms(\tau)\).

  A \(\tablesSig\)-formula \(\varphi\) is a \define{table constraint}
  if it has the form \(s \teq t\) or \(s \tneq t\);
  it is an \define{arithmetic constraint}
  if it has the form \(s \teq t\), \(s \tneq t\), or \(s \leq t\),
  where
  \(s\), \(t\) are terms of sort \int;
  it is an \define{element constraint}
  if it has the form
  \(e_1 \teq e_2, e_1 \tneq e_2,p(e),f(e_1) \teq e_2\), 
  where \(e, e_1, e_2\) are terms of some element sort,
  \(p\) is a function symbol of sort \(\eleSort \rightarrow \bool\)
  for some element sort \(\eleSort\), and
  \(f\) is a function symbol of sort \(\eleSort_1 \rightarrow \eleSort_2\),
  for some element sorts \(\eleSort_1\) and \(\eleSort_2\).
\end{definition}
Note that table constraints include (dis)equalities
between terms of any sort.
%
This implies that (dis)equalities between terms of sort \int are
both table and arithmetic constraints.


\subsection{Calculus}\label{sec:tables_calculus}
We now describe a tableaux-style calculus with derivation rules 
designed to determine the satisfiability in \(\tablesTheory\) 
of quantifier-free \(\tablesSig\)-formulas \(\varphi\).
To simplify the description, we will pretend that 
all tables have columns of the same element sort, denoted generically by \(\eleSort\).

Without loss of generality, we assume that the atoms 
of \(\varphi\) are in reduced form with respect to the (terminating) rewrite system 
in Figure~\ref{fig:rewriting},  which means that $\varphi$ is a Boolean combination 
of only equality constraints and arithmetic constraints.
%
Thanks to the following lemma, we will further focus on just sets
of table constraints and arithmetic constraints.\footnote{%
\begin{paper}
Proofs of this lemma and later results can be found in 
a longer version of this paper~\cite{arxiv}.
\end{paper}  
\begin{report}
Proofs of this lemma and later results can be found in the appendix.
\end{report}  
}

\begin{restatable}{lemma}{bagArithConstraints}
  \label{lem:bagArithConstraints}
  For every quantifier-free \(\tablesSig\)-formula \(\varphi\),
  there are sets \(B_1,\ldots,B_n\) of table constraints,
  sets \(A_1,\ldots,A_n\) of arithmetic constraints,
  and sets \(E_1,\ldots,E_n\) of elements constraints
  such that
  \(\varphi\) is satisfiable in \(\tablesTheory\)
  iff \(A_i \cup B_i \cup E_i\) is satisfiable in \(\tablesTheory\) for some \(i\in[1,n]\).
\end{restatable}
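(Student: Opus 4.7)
The plan is to reduce $\varphi$ to a disjunction of conjunctions of pure, classified literals in three standard steps: rewriting, disjunctive normal form conversion, and Nelson-Oppen-style purification.

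First, I apply the terminating rewrite system of Figure~\ref{fig:rewriting} exhaustively to $\varphi$. Each rule is a $\tablesTheory$-valid equivalence, so satisfiability is preserved. The resulting formula contains no occurrences of $\sqin$ or $\sqsubseteq$ and has only atoms of the shapes $s \teq t$ and $s \leq t$. Next, I convert it into disjunctive normal form $\bigvee_{i=1}^n \Delta_i$, where each $\Delta_i$ is a conjunction of literals. Since $\tablesTheory$-satisfiability distributes over disjunction, it suffices to construct, for each $i$, sets $A_i$, $B_i$, $E_i$ such that $\Delta_i$ is equisatisfiable in $\tablesTheory$ with $A_i \cup B_i \cup E_i$.

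Second, I purify each $\Delta_i$. Recursively, for every non-variable proper subterm $u$ that occurs as an argument of a symbol from a different component layer (the arithmetic layer on $\int$, the bag/tuple/table layer, or the element layer), I introduce a fresh variable $x$ of the sort of $u$, replace the occurrence of $u$ by $x$, and add the defining equation $x \teq u$. Iterating to saturation yields literals all of which are \emph{flat} in the sense that the top symbol belongs to a single component and its immediate arguments are variables or constants. I then classify: $\leq$-literals and (dis)equalities between $\int$ terms whose top symbol is arithmetic go into $A_i$; (dis)equalities involving a bag, tuple, or table-layer top symbol --- including equations of the form $x \teq \m{y}{z}$ whose right-hand side is an arithmetic-sorted bag term --- go into $B_i$; and (dis)equalities, unary predicate applications, and function-application equalities purely over element sorts go into $E_i$, matching the shapes of Definition~\ref{def:table_constraints}. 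The non-nestedness assumption on bags and tuples guarantees that this classification is exhaustive and that element subterms never hide beneath a theory symbol that would force them outside $E_i$.

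The substantive thing to verify is equisatisfiability of $\Delta_i$ with $A_i \cup B_i \cup E_i$. The forward direction is routine: any $\tablesTheory$-model $\I$ of $\Delta_i$ extends to a model of the purified form by setting $\I(x) = \I(u)$ for each fresh abbreviation. For the converse, any $\tablesTheory$-model of the purified conjunction satisfies $\Delta_i$ because the defining equations allow each introduced variable to be folded back to the subterm it names. The main bookkeeping obstacle, and the step I expect to consume most attention, is keeping track of shared sorts, especially $\int$: arithmetic-sorted equations like $x \teq \m{y}{z}$ straddle the bag and arithmetic components, but as the paper notes immediately after Definition~\ref{def:table_constraints}, (dis)equalities between $\int$ terms are simultaneously table and arithmetic constraints, so placing such equations in $B_i$ while exposing their $\int$ side through $A_i$ via the shared variable $x$ is consistent with the definitions and gives the required equisatisfiability.
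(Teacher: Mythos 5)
Your proof is correct and follows essentially the same route as the paper's: rewrite to eliminate \(\sqin\) and \(\sqsubseteq\), convert to disjunctive normal form, and sort the literals of each disjunct into the three constraint sets. The purification pass you add is harmless but not actually needed, since Definition~\ref{def:table_constraints} already makes every (dis)equality --- between terms of \emph{any} sort --- a table constraint and every \(\leq\)-atom an arithmetic constraint, so each literal of a reduced-form disjunct is classifiable as it stands.
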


As a final simplification, we can also assume, without loss of generality, that 
for every term \(t\) of sort \(\tuple(\tau_0, \ldots, \tau_k)\) occurring 
in one of the sets \(B_i\) above,
\(B_i\) also contains the constraint \(t \teq \tup{x_0, \ldots, x_k}\)
where \(x_0, \ldots, x_k\) are variables of sort \(\tau_0, \ldots, \tau_k\),
respectively.

\paragraph{\bf Configurations and Derivation Trees.}
The calculus operates on data structures we call \emph{configurations}.
These are either the distinguished configuration \unsat or
triples \(\conf{\Ac,\Bc, \Ec}\) consisting of
a set \(\Ac\) of arithmetic constraints,
a set \(\Bc\) of table constraints,
and a set \(\Ec\) of element constraints.
Our calculus is a set of derivation rules that apply to configurations.

We assume we have a (possibly multi-theory) \define{element solver} 
that can decide the satisfiability of constraints in \(\Ec\).
This requires the computability of all predicates \(p\) and
functions \(f\) used as arguments in applications 
of filter (\(\sigma\)) and map (\(\pi\)), respectively.
We also define the set \(W\) to be an infinite set of fresh variables,
which will be used in specific derivation rules.

Derivation rules take a configuration and, if applicable to it, 
generate one or more alternative configurations.
A derivation rule \define{applies} to a configuration \(c\)
if all the conditions in the rule's premises hold for \(c\) \emph{and}
the rule application is not redundant.
An application of a rule is \define{redundant} if it has a conclusion
where each component in the derived configuration is a subset of
the corresponding component in the premise configuration.
%
%
%

A configuration other than \unsat is \define{saturated with respect to a set} 
\(R\) of derivation rules
if every possible application of a rule in \(R\) to it is redundant.
It is \define{saturated} if it is saturated with respect to all derivation rules
in the calculus.
A configuration \(\conf{\Ac,\Bc, \Ec}\) is \define{satisfiable} in \(\tablesTheory\)
if the set \(\Ac \cup \Bc \cup \Ec\) is satisfiable in \(\tablesTheory\).

A \define{derivation tree} is a (possibly infinite) tree 
where each node is a configuration whose (finitely-many) children, if any,
are obtained by a non-redundant application of a rule of the calculus
to the node.
A derivation tree is \define{closed} if it is finite and all its leaves are \unsat.
As we show later,
a closed derivation tree with root \(\conf{\Ac,\Bc, \Ec}\) is a proof that
\(\Ac \cup \Bc \cup \Ec\) is unsatisfiable in  \(\tablesTheory\).
In contrast, a derivation tree with root \(\conf{\Ac,\Bc, \Ec}\) and 
a saturated leaf with respect to all the rules of the calculus
is a witness that \(\Ac \cup \Bc \cup \Ec\) is satisfiable in \(\tablesTheory\).


\begin{figure}
 \centering
 \begin{prooftree}  
  \LeftLabel{\ruleAConf}
  \AxiomC{\(\Ac \models_{\nia} \bot \)}
  \UnaryInfC{\unsat}
  \DisplayProof \hskip 2em
  \LeftLabel{\ruleBConf}
  \AxiomC{\(t \tneq t \in \Bc^* \)}
  \UnaryInfC{\unsat}
  \DisplayProof \hskip 2em
  \LeftLabel{\ruleEConf}
  \AxiomC{\(\Ec \models_\E \bot \)}
  \UnaryInfC{\unsat}
 \end{prooftree}

 \begin{prooftree}  
  \LeftLabel{\ruleBAProp}
  \AxiomC{\(s \teq t \in \Bc^* \)}
  \AxiomC{\(s, t : \int \)}
  \BinaryInfC{\(\Ac := \Ac, s \teq t\)}
  \DisplayProof \hskip 1em
  \LeftLabel{\ruleBEProp}
  \AxiomC{\(e_1 \teq e_2 \in \Bc^*\)}
  \AxiomC{\(e_1, e_2\) are elem.~terms}
  \BinaryInfC{\(\Ec := \Ec, e_1 \teq e_2\)}
 \end{prooftree}
  
 \begin{prooftree}      
  \LeftLabel{\ruleEProp}
  \AxiomC{\(e_1,e_2 \in \ter{\Bc^*}\)}
  \AxiomC{\(e_1,e_2\) are element terms of the same sort}
  \BinaryInfC{
    \(\Bc := \Bc, e_1 \teq e_2
    ~~~\vert\vert~~~
    \Bc := \Bc, e_1 \tneq e_2\)}
 \end{prooftree}

 \begin{prooftree}      
  \LeftLabel{\ruleAProp}
  \AxiomC{\(\Ac \models_{\nia} s \teq t \)}
  \AxiomC{\(s , t \in \Ac \)}
  \AxiomC{\(s \text{ or } t\) is a multiplicity term}
  \TrinaryInfC{\(\Bc := \Bc, s \teq t\)}
 \end{prooftree}

 \begin{prooftree}  
  \LeftLabel{\ruleBagDisequality}
  \AxiomC{\(s \tneq t \in \Bc^* \)}
  \AxiomC{\(w\) is a fresh variable}
  \BinaryInfC{\(\Bc := \Bc, \m{w}{s} \tneq \m{w}{t} \quad \Ac := \Ac, \m{w}{s} \tneq \m{w}{t}
    \)}
  \DisplayProof \hskip 1em
  \LeftLabel{\ruleNonNegative}
  \AxiomC{\(\m{e}{s} \in \ter{\Bc^*}\)}
  \UnaryInfC{\(\Ac:= \Ac,0 \leq \m{e}{s}\)}
 \end{prooftree}

 \begin{prooftree}  
  \LeftLabel{\ruleBagConstructorOne}
  \AxiomC{\(s \teq \bag(e,n) \in \Bc^*\)}
  \AxiomC{\( n \leq 0 \notin \Ac \)}
  \AxiomC{\(1 \leq n \notin \Ac \)}
  \TrinaryInfC{\(
    \begin{array}{ll}
      & \Ac:= \Ac, n \leq 0, \m{e}{s} \teq 0 \quad
        \Bc := \Bc, s \teq \bempty{\eleSort} \\
      \vert\vert & 
      \Ac:= \Ac, 1 \leq n, \m{e}{s} \teq n \quad
      \Bc := \Bc, s \tneq \bempty{\eleSort}
    \end{array}
    \)}
 \end{prooftree}

 \begin{prooftree}  
  \LeftLabel{\ruleBagConstructorTwo}
  \AxiomC{\(s \teq \bag(e,n) \in \Bc^*\)}
  \AxiomC{\(x \tneq e \in \Bc^*\)}
  \BinaryInfC{\(
    \begin{array}{l}
      \Ac:= \Ac, \m{x}{s} \teq 0
    \end{array}
    \)
  }
  \DisplayProof \hskip 1em
  \LeftLabel{\ruleBagEmpty}
  \AxiomC{\(s \teq \bempty{\eleSort} \in \Bc^*\)}
  \AxiomC{\(\m{e}{s} \in \ter{\Bc}\)}
  \LeftLabel{\ruleBagEmpty}
  \BinaryInfC{\(\Ac:= \Ac,\m{e}{s} \teq 0\)}
 \end{prooftree}

 \begin{prooftree}  
  \LeftLabel{\ruleDisjointUnion}
  \AxiomC{\(s \teq t \squplus u \in \Bc^*\)}
  \AxiomC{\(\m{e}{v} \in \ter{\Bc}\)}
  \AxiomC{\(v \in \{s, t, u\}\)}
  \TrinaryInfC{\(
    \begin{array}{c}
      \Ac := \Ac,
      \m{e}{s} \teq \m{e}{t} + \m{e}{u}
    \end{array}
    \)}
 \end{prooftree}

 \begin{prooftree}  
  \LeftLabel{\ruleMaxUnion}
  \AxiomC{\( s \teq t \sqcup u \in \Bc^*\)}
  \AxiomC{\(\m{e}{v} \in \ter{\Bc}\)}
  \AxiomC{\(v \in \{s, t, u\}\)}
  \TrinaryInfC{\(
    \begin{array}{c}
      \Ac := \Ac,
      \m{e}{s} \teq \max(\m{e}{t}, \m{e}{u})
    \end{array}
    \)}
 \end{prooftree}

 \begin{prooftree}  
  \LeftLabel{\ruleBagIntersection}
  \AxiomC{\( s \teq t \sqcap u \in \Bc^*\)}
  \AxiomC{\(\m{e}{v} \in \ter{\Bc}\)}
  \AxiomC{\(v \in \{s, t, u\}\)}
  \TrinaryInfC{\(
    \begin{array}{c}
      \Ac := \Ac,
      \m{e}{s} \teq \min(\m{e}{t}, \m{e}{u})
    \end{array}
    \)}
 \end{prooftree}

 \begin{prooftree}  
  \LeftLabel{\ruleDifferenceSubtract}
  \AxiomC{\( s \teq t \setminus u  \in \Bc^*\)}
  \AxiomC{\(\m{e}{v} \in \ter{\Bc}\)}
  \AxiomC{\(v \in \{s, t, u\}\)}
  \TrinaryInfC{\(
    \begin{array}{ll}
                 & \Ac := \Ac,\m{e}{t} \leq \m{e}{u}, \m{e}{s} \teq 0 \\
      \vert\vert & \Ac := \Ac,\m{e}{t} > \m{e}{u}, \m{e}{s} \teq \m{e}{t} - \m{e}{u}
    \end{array}
    \)}
 \end{prooftree}

 \begin{prooftree}  
  \LeftLabel{\ruleDifferenceRemove}
  \AxiomC{\( s \teq t \dsetminus u \in \Bc^*\)}
  \AxiomC{\(\m{e}{v} \in \ter{\Bc}\)}
  \AxiomC{\(v \in \{s, t, u\}\)}
  \TrinaryInfC{\(
    \begin{array}{c}
      \Ac := \Ac, \m{e}{u} \teq 0, \m{e}{s} \teq \m{e}{t}
      ~~~\vert\vert~~~
      \Ac := \Ac, \m{e}{u} \tneq 0, \m{e}{s} \teq 0
    \end{array}
    \)}
 \end{prooftree}

  \begin{prooftree}  
    \LeftLabel{\ruleSetof}
    \AxiomC{\( s \teq \setof(t) \in \Bc^*\)}
    \AxiomC{\(\m{e}{v} \in \ter{\Bc^*}\)}
    \AxiomC{\(v \in \{s, t\}\)}
    \TrinaryInfC{\(
      \begin{array}{c}
        \Ac := \Ac,1 \leq \m{e}{t}, \m{e}{s} \teq 1
        ~~~\vert\vert~~~
        \Ac := \Ac,\m{e}{t} \leq 0, \m{e}{s} \teq 0
      \end{array}
      \)}
   \end{prooftree}

 \caption{Bag rules.}
 \label{fig:bag_rules}
\end{figure}

\begin{figure}[t]
  \centering

 \begin{prooftree}  
  \LeftLabel{\ruleTableProductUp}
  \AxiomC{\(\Ac \models_{\nia} 1 \leq \m{\tup{\seq x m}}{s}
            \wedge 1 \leq \m{\tup{\seq y n}}{t}\)
   }
  \AxiomC{\(s \product t \in \ter{\Bc}\)}
  \BinaryInfC{\(\Ac := \Ac,
    \m{\tup{{\seq x m},{\seq y n}}}{s \product t} \teq
    \m{\tup{\seq x m}}{s} * \m{\tup{\seq y n}}{t}
    \)}
 \end{prooftree}

 \begin{prooftree}  
  \LeftLabel{\ruleTableProductDown}
  \AxiomC{\(\Ac \models_{\nia} 1 \leq \m{\tup{\seq x m, \seq y n}}{s \product t}\)}
  \UnaryInfC{\(\Ac := \Ac,
    \m{\tup{\seq x m, \seq y n}}{s \product t} \teq
    \m{\tup{\seq x m}}{s} * \m{\tup{\seq y n}}{t}
    \)}
 \end{prooftree}

 \begin{prooftree}  
    \AxiomC{\(\begin{array}{c}
    \Ac \models_{\nia} 1 \leq \m{\tup{\seq x m}}{s} \wedge 1 \leq \m{\tup{\seq y n}}{t}
    \\[.7ex]
    s \tabjoin_{i_1j_1 \cdots i_pj_p} t \in \ter{\Bc} 
    \quad
    x_{i_1} \teq y_{j_1},  \dots,  x_{i_p} \teq y_{j_p} \in \Bc^* 
    \end{array}\)}
    \LeftLabel{\ruleTableJoinUp}
    \UnaryInfC{\(\Ac := \Ac,
      \m{\tup{\seq x m, \seq y n}}{s \tabjoin_{i_1j_1 \cdots i_pj_p} t} \teq
      \m{\tup{\seq x m}}{s} * \m{\tup{\seq y n}}{t}
      \)}
 \end{prooftree}
 
 \begin{prooftree}  
  \LeftLabel{\ruleTableJoinDown}
  \AxiomC{\(\Ac \models_{\nia} 1 \leq \m{\tup{\seq x m, \seq y n}}
  {s \tabjoin_{i_1j_1 \cdots i_pj_p} t}\)}
  \UnaryInfC{\(
  \begin{array}{c}
    \Ac := \Ac,
    \m{\tup{\seq x m, \seq y n}}{s \tabjoin_{i_1j_1 \cdots i_pj_p} t} \teq
    \m{\tup{\seq x m}}{s} * \m{\tup{\seq y n}}{t}  \\
    \Bc := \Bc,  x_{i_1} \teq y_{j_1},  \dots,  x_{i_p} \teq y_{j_p} 
  \end{array} 
    \)}
 \end{prooftree}

  \caption{Table rules. The syntax \(\bar{x}_m\) abbreviates \(x_0, \dots, x_m\).}
  \label{fig:table_rules}
\end{figure}


\paragraph{\bf The Derivation Rules.}
The rules of our calculus are provided
in Figures~\ref{fig:bag_rules}, \ref{fig:table_rules} and \ref{fig:tables_ho_rules}.
They are expressed in \define{guarded assignment form} where
the premise describes the conditions on the current configuration
under which the rule can be applied, and
the conclusion is either \unsat, or otherwise describes
changes to the current configuration.
Rules with two conclusions, separated by the symbol \(\vert \vert\),
are non-deterministic branching rules.

In the rules, we write \(\Bc, c\), as an abbreviation of \(\Bc \cup \{c\}\) and denote by
\(\mathcal{T}(\Bc)\) the set of all terms and subterms occurring in \(\Bc\).
Premises of the form \(\Ac \models_{\nia} c \), where \(c\) is
an arithmetic constraint, can be checked by a solver
for (nonlinear) integer arithmetic.\footnote{%
  A linear arithmetic solver is enough for problems not containing the \(\product\) operator.
  For problems with SQL joins, whose encoding to SMT does require the \(\product\) operator, a solver for nonlinear arithmetic is needed, 
  at the cost of losing decidability in that case.
}
Premises of the form \(\Ec \models_\E \bot\) are checked 
by the element solver discussed earlier.

We define the following closure for \(\Bc\) where
\(\models_{\text{tup}}\) denotes entailment in the
theory of tuples, which treats all other symbols as uninterpreted functions.
%
\begin{align}
  \mathcal{\Bc}^*  &= \ 
    \{s \teq t \ver s, t \in \ter{\Bc}, \Bc \models_{\text{tup}} s\teq t\} \:\cup\:
     \{m(e, s)\teq m(e,t) \mid \Bc \models_{\text{tup}} s \teq t,\ m(e,s) \in \ter{\Bc} \} \ \nonumber \\
     &\cup \
     \{m(e_1, s)\teq m(e_2,s) \mid \Bc \models_{\text{tup}} e_1 \teq e_2,\ m(e_1,s) \in \ter{\Bc} \} \ \nonumber \\
   {} &\cup \ \{s \tneq t \ver s, t \in \ter{\Bc},
  \Bc \models_{\text{tup}} s \teq s' \wedge t \teq t' \text{ for some } s' \tneq t' \in \Bc\} \label{eq:b_star}
\end{align}

\noindent
The set \(\Bc^*\) is computable by extending standard congruence closure procedures 
with rules for adding equalities of the form
\(\select_i(\tup{x_0, \ldots, x_i, \ldots, x_k}) \teq x_i\)
and rules for deducing consequences of equalities of the form
\(\tup{s_1, \dots, s_n} \teq  \tup{t_1, \dots, t_n}\).

%
%
%

Among the derivation rules in Figure~\ref{fig:bag_rules},
%
rules \ruleAConf, \ruleEConf are applied when conflicts
are found by the arithmetic solver or the element solver.
Likewise, rule \ruleBConf is applied when the congruence closure procedure
finds a conflict between an equality and a disequality constraint.
Rules \ruleBAProp, \ruleBEProp, and \ruleAProp communicate equalities between
the three solvers.
Rule \ruleBagDisequality handles disequality between two bags \(s,t\)
by stating that some element, represented by a fresh variable \(w\),
occurs with different multiplicities in \(s\) and \(t\).
Rule \ruleNonNegative ensures that all multiplicities are nonnegative.
Rule \ruleBagEmpty enforces zero multiplicity for elements to a bag
that is provably equal to the empty bag.

Rules \ruleBagConstructorOne and \ruleBagConstructorTwo reason about
singleton bags, denoted by terms of the form \(\bag(e,n)\).
The first one splits on whether \(n\) is positive or not to determine
whether \(\bag(e,n)\) is empty, and if not, it also determines the multiplicity
of element \(e\) to be \(n\).
The second one ensures that no elements different from \(e\) are
in \(\bag(e,n)\).
%
Rules \ruleDisjointUnion, \ruleMaxUnion, \ruleBagIntersection, \ruleDifferenceSubtract,
\ruleDifferenceRemove, and \ruleSetof correspond directly to the semantics of their operators.
For example, the \ruleDisjointUnion rule applies to any multiplicity term related to bags
\((t, u, t \squplus u)\) or their equivalence classes if \(t \squplus u \in \ter{B}\).

The rules in Figure~\ref{fig:table_rules} are specific to table operations.
\ruleTableProductUp and \ruleTableProductDown are upward and downward rules 
for the \(\product\) operator.
They are the ones which introduce nonlinear arithmetic constraints on multiplicities.
\ruleTableJoinUp and \ruleJoinDown are similar to the product rules.
However, they consider the equality constraints between joining columns, to account
for the semantics of inner joins.
%


\begin{figure}
\centering
 \begin{prooftree}
    \LeftLabel{\ruleBagFilterUp}
    \AxiomC{\( \Ac \models_{\nia} 1 \leq \m{e}{t}\)}
    \AxiomC{\( \m{e}{t} \in \ter{\Bc^*}\)}
    \AxiomC{\(s \teq \filter(p, t) \in \Bc^*\)}
    \TrinaryInfC{
      \(
        \Ec := \Ec, p(e) \quad \Ac := \Ac , \m{e}{s} \teq \m{e}{t}
        ~~\vert\vert~~
        \Ec := \Ec, \neg p(e) \quad \Ac := \Ac, \m{e}{s} \teq 0
      \)
    }
 \end{prooftree}

 \begin{prooftree}
    \LeftLabel{\ruleBagFilterDown}
    \AxiomC{\( \Ac \models_{\nia} 1 \leq \m{e}{s}\)}
    \AxiomC{\( \m{e}{s} \in \ter{\Bc^*}\)}
    \AxiomC{\( s \teq \filter(p, t) \in \Bc^*\)}
    \TrinaryInfC{\(\Ec := \Ec, p(e) \qquad \Ac := \Ac , \m{e}{s} \teq \m{e}{t}\)
    }
  \end{prooftree}

  \begin{prooftree}
    \LeftLabel{\ruleBagMapUp}
    \AxiomC{\( \Ac \models_{\nia} 1 \leq \m{e}{t}\)}
    \AxiomC{\( \m{e}{t} \in \ter{\Bc^*}\)}
    \AxiomC{\(s \teq \map(f,t) \in \Bc^*\)}
    \AxiomC{\(e \not\in W\)}
    \QuaternaryInfC{\(\Ac := \Ac , \m{e}{t} \leq \m{f(e)}{s}\) }
 \end{prooftree}

 \begin{prooftree}
    \LeftLabel{\ruleBagMapDownInjective}
    \AxiomC{\(\m{e}{s} \in \ter{\Bc^*}\)}
    \AxiomC{\(s \teq \map(f,t) \in \Bc^*\)}
    \AxiomC{\(f\) is injective}       
    \TrinaryInfC{\(\Ec := \Ec, f(w) \teq e \qquad \Ac:=\Ac,\m{e}{s} \teq \m{w}{t}\)}
 \end{prooftree}

 \begin{prooftree}
    \LeftLabel{\ruleBagMapUpNonInjectiveUp}
    \AxiomC{\(\Ac \models_{\nia} 1\leq \m{e}{t}\)}
    \AxiomC{\(s \teq \map(f,t) \in \Bc^*\)}
    \BinaryInfC{\(\Bc := \Bc, i \teq \elementIndex(e, t) \qquad \Ac := \Ac , 1 \leq i \leq \delem(t) \) }
 \end{prooftree}

 \begin{prooftree}
    \LeftLabel{\ruleBagMapUpNonInjectiveDown}
    \AxiomC{\(\Ac \models_{\nia} 1\leq \m{e}{s} \quad
            s \teq \map(f,t)\in \Bc^*\)}
    \UnaryInfC{\( \Ac := \Ac , \mapSum(e, t, \delem(t)) \teq \m{e}{s}, \mapSum(e, t, 0) \teq 0 \) }
 \end{prooftree} 
  \caption{Bag filter and map rules. \(w,i\) are fresh variables.}
  \label{fig:tables_ho_rules}
\end{figure}

The rules in Figure~\ref{fig:tables_ho_rules} reason about
the filter (\(\filter\)) and map (\(\map\)) operators.
\ruleBagFilterUp splits on whether an element \(e\) in \(s\) satisfies
(the predicate denoted by) \(p\) or not in order to determine its multiplicity 
in bag \(\filter(p, s)\).
%
\ruleBagFilterDown concludes that every element with positive multiplicity
in \(\filter(p, s)\) necessarily satisfies \(p\) and 
has the same multiplicity in \(s\).
\ruleBagMapUp applies the function symbol \(f\) to every element \(e\)
that is provably in bag \(t\).
Note that it cannot determine the exact multiplicity of \(f(e)\)
in bag \(\map(f, t)\)
since multiple elements can be mapped to the same one by \(f\)
if (the function denoted by) \(f\) is not injective.
Therefore, the rule just asserts that \(\m{f(e)}{\map(f, t)}\) is
at least \(\m{e}{t}\).
To prevent derivation cycles with \ruleBagMapDownInjective,
rule \ruleBagMapUp applies only if \(e\) is not a variable introduced
by the downward rule.

The downward direction for map terms is more complex, and expensive,
if \(f\) is not injective.
Therefore, before solving, we check the injectivity of each function symbol \(f\)
occurring in map terms.
This is done via a subsolver instance that checks the satisfiability
of the formula \(f(x) \teq f(y) \wedge x \tneq y\) for fresh variables \(x, y\).
If the subsolver returns \unsat, which means that \(f\) is injective,
%
we apply rule \ruleBagMapDownInjective
which introduces a fresh variable \(w\) for each term \(\m{e}{\map(f,t)}\)
such that \(e = f(w)\) and \(w\) is in \(t\) with the same multiplicity.
In contrast, if the subsolver returns \sat or \unknown, 
we treat \(f\) as non-injective and rely on a number of features of \cvc
to construct and process a set of quantified constraints which, %
informally speaking and mixing syntax and semantics here for simplicity,
formalize the following relationship between the multiplicity of an element \(e\) 
in a bag \(\map(f,t)\) and that of the elements of \(t\) that \(f\) maps to \(e\):
\begin{align}
  \m{e}{\map(f,t)} = \sum\set{\m{x}{t} \mid x \in t \,\wedge\, f(x) = e} \label{eq:map_multiplicity}
\end{align}
To encode this constraint, we introduce three uninterpreted symbols, 
\(\delem: \ms(\alpha) \rightarrow \int\), 
\(\elementIndex: \alpha \times \ms(\alpha) \rightarrow \int \), and 
\(\mapSum: \alpha \times \ms(\alpha) \times \int \rightarrow \int\).
The value \(\delem(t)\) represents the number of distinct elements 
in (the bag denoted by) \(t\);
\(\elementIndex(e, t)\) represents a unique index in the range \([1,\delem(t)]\)
for element \(e\) in bag \(t\);
for an element \(e\) in \(\map(f,t)\), 
\(\mapSum(e, t, i)\) accumulates the multiplicities of the elements of \(t\) 
with index in \([1, i]\) that \(f\) maps to \(e\).
Rule \ruleBagMapUpNonInjectiveUp ensures that every element in \(t\) is assigned
an index \(i\) in \([1, \delem(t)]\), 
whereas \ruleBagMapUpNonInjectiveUp constrains the multiplicity \(\m{e}{s}\) 
to be \(\mapSum(e, t, \delem(t))\) when \(e\) is in \(s\).  
\begin{paper}%
We do not describe the encoding of (\ref{eq:map_multiplicity}) here 
due to space limitations.
However, it is an axiom with bounded quantification over the interval 
\([1,\delem(t)]\)
that is processed by \cvc's model-based quantifier instantiation 
module~\cite{boundedQuantifiers}.
\end{paper}%
\begin{report}%
The encoding of the quantified formula is given in~\ref{eq:map_reduction}.
\end{report}

%

%
%
%

%
\begin{example}\label{ex:unsat_sat_map}
  %
  Suppose we have the constraints:
  \(\set{x \not\sqin s, y \sqin \map(f, s), y \teq x + 1}\)
  where \(x,y \) are integers and \(f\) is defined in the arithmetic solver to be
  the integer successor function (\(f(x) = x + 1\)). 
  After applying simplification rules in Figure~\ref{fig:rewriting},
  we end up with \(c_0 = \conf{\Ac_0, \Bc_0, \Ec_0}\), 
  where
  \(\Ac_0 = \set{\neg(1 \leq \m{x}{s}), 1 \leq \m{y}{\map(f, s)}, y \teq x + 1}\)
  and \(\Ec_0 = \set{ y \teq x + 1 }\).
  Applying rule \ruleNonNegative, we get \(c_1 = \conf{\Ac_1,\Bc_1, \Ec_1}\),
  where
  \(\Ac_1 =  \Ac_0 \cup \set{0 \leq \m{x}{s},  0 \leq \m{y}{\map(f, s)}}\)
  and \(\Ec_1 = \Ec_0\).
  Since \(f\) is injective, we can apply rule \ruleBagMapDownInjective
  to get \(c_2 = \conf{\Ac_2, \Bc_2, \Ec_2}\), where
  \(\Ac_2  = \Ac_1 \cup \set{\m{w}{s}  \teq \m{y}{\map(f, s)}}\)
  and \(\Ec_2 = \Ec_1 \cup \set{y  \teq w + 1}\).
  Next, we apply the propagation rule \ruleEProp followed by 
  \ruleBAProp to get \(\Ac_3 = \Ac_2\), 
  \(\Ac_4 = \Ac_3 \cup \set{y \teq w  + 1}\).
  Now, \(\Ac_4 \) is unsatisfiable because it entails 
  \(x \teq w\), \(\neg(1 \leq \m{x}{s})\), and \(1 \leq \m{w}{s}\).
  Hence the rule \ruleAConf applies, and we get \unsat.
  If we remove the constraint \(x \not\sqin s\), then 
  the problem is satisfiable
  and we can construct a model \(\I\) where
  \(\I(x)\) is any natural \(n\), \(\I(y) = n + 1\), and 
  \(\I(s)\) is a singleton bag containing \(n\) with multiplicity 1.
\end{example}
%

%

\subsection {Calculus Correctness} \label{sec:tables_correctness}
Logozzo et al.~\cite{bagsPaper2008} proved the decidability of
the theory of bags with linear constraints over bag cardinality and the operators:
$\bempty{}$, $\bag$, $\msym$, $\sqcup$, $\squplus$,
$\sqcap$, $\setminus$, $\dsetminus$, $\setof$, $\sqin$, $\sqsubseteq$.
In this work, we exclude the cardinality operator. 
However, we prove that adding the \(\filter\) operator with computable predicates
preserves decidability (See Proposition~\ref{prop:termination_no_map} below).
In contrast, the further addition of the $\product$ and  $\project$ operators makes 
the problem undecidable.
This is provable with a reduction from the undecidable equivalence problem 
for unions of conjunctive SQL queries~\cite{unionOfConjuctiveQueries}. 
%
%
%
Our calculus is not refutation complete in general in the presence of maps
because of the possibility of nontermination, as shown in the example below.
\begin{example}\label{ex:incomplete_map_tables}
  Suppose \(x\) is an integer variable, and \(s\) is an integer bag variable
  and consider just the constraint set
  \(\{\m{x}{s} \teq 1,\, s \teq \map(f, s)\}\) where \(f\) is again the successor function.
  The set is unsatisfiable in the theory of finite bags.
  However, this is not provable in our calculus as it allows 
  the repeated application of rules \ruleBagMapUp and \ruleAProp, 
  which add fresh elements \(x+1, x+2, x+3, \dots\) to \(s\).
\end{example}

Another source of refutation incompleteness is the presence of constraints
with the cross product operator \(\product\), which causes 
the generation of nonlinear constraints for the arithmetic subsolver,
making the entailment checks in rules such as \ruleAConf and \ruleBAProp
undecidable.

However, in the absence of \(\map\) and \(\product\), 
the calculus is both refutation and solution sound, as well as terminating.
The soundness properties are a consequence of the fact that
each derivation rule preserves models, as specified in the following lemma.

\begin{restatable}{lemma}{tablesEqSatisfiable}
  \label{lem:tables_eq_satisfiable}
  For all applications of a rule from the calculus, the premise configuration  
  is satisfiable in \(\tablesTheory\) if and only if one of the conclusion configurations 
  is satisfiable in \(\tablesTheory\).
\end{restatable}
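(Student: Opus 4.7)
The plan is to proceed by case analysis on the derivation rule $R$ applied in the step from the premise configuration $\conf{\Ac,\Bc,\Ec}$ to its conclusion(s). For each rule we establish the two directions of the biconditional separately: forward (premise satisfiable $\Rightarrow$ some conclusion satisfiable), which typically requires choosing a branch and/or extending the interpretation to fresh symbols; and backward (some conclusion satisfiable $\Rightarrow$ premise satisfiable), which is usually immediate from the observation that the premise constraints are a subset of the conclusion constraints (after restricting away fresh variables).

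The rules partition naturally into five groups. First, the conflict rules \ruleAConf, \ruleBConf, \ruleEConf fire only when the premise is already unsatisfiable (via entailment in $\nia$, congruence closure over $\mathcal{T}(\Bc)$, or the element solver), and their conclusion \unsat is unsatisfiable by convention, so the biconditional holds. Second, the entailment-propagation rules \ruleAProp, \ruleBAProp, \ruleBEProp, \ruleNonNegative, \ruleBagEmpty, \ruleBagConstructorTwo, \ruleDisjointUnion, \ruleMaxUnion, \ruleBagIntersection, \ruleBagFilterDown, \ruleTableProductUp, \ruleTableProductDown, and the $\Ac$-part of \ruleTableJoinUp only add constraints that are semantic consequences of the existing ones in $\tablesTheory$; hence any premise model is a conclusion model, and conversely since the premise is contained in the conclusion.

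Third, the case-splitting rules \ruleEProp, \ruleBagConstructorOne, \ruleDifferenceSubtract, \ruleDifferenceRemove, \ruleSetof, \ruleBagFilterUp, \ruleTableJoinUp, \ruleTableJoinDown are handled by showing that the disjunction over the branches is $\tablesTheory$-valid given the premise. For instance, for \ruleBagFilterUp, the semantics of $\filter$ in $\tablesTheory$ implies that for every element $e$ either $p(e)$ holds (and $\m{e}{\filter(p,t)} \teq \m{e}{t}$) or $\neg p(e)$ holds (and $\m{e}{\filter(p,t)} \teq 0$), so every premise model satisfies one of the two branch configurations; the backward direction is again immediate. Fourth, the rules introducing fresh variables with standard semantics, namely \ruleBagDisequality, \ruleBagMapDownInjective, and \ruleBagMapUp, are handled by witness construction: given a premise model $\I$, we extend $\I$ by interpreting the fresh variable as an element that witnesses the intended property (an element at which the multiplicities of $s$ and $t$ differ for \ruleBagDisequality; a preimage of $e$ under $f$ for \ruleBagMapDownInjective, using the injectivity of $f$ verified by the subsolver). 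Backwards, since the fresh variable does not occur in the premise, the restriction of $\I'$ to the premise variables is a model of the premise by the closure of $\Mo$ under variable reassignment.

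The main obstacle is the fifth group: the non-injective map rules \ruleBagMapUpNonInjectiveUp and \ruleBagMapUpNonInjectiveDown together with the bounded-quantifier axiom encoding relation (\ref{eq:map_multiplicity}). Here, in the forward direction, given a premise model $\I$ with $\I(t)$ a finite bag $B$, we must extend $\I$ by interpreting the uninterpreted symbols $\delem$, $\elementIndex$, and $\mapSum$ consistently: setting $\I(\delem(t))$ to the cardinality of the support of $B$, choosing an arbitrary bijection between this support and $\{1,\dots,\I(\delem(t))\}$ to define $\I(\elementIndex(\cdot,t))$, and defining $\I(\mapSum(e,t,i))$ inductively as the partial sum of multiplicities of those support elements with index at most $i$ that $f$ maps to $e$. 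We must then verify that this extended interpretation satisfies the rule-added constraints and the quantified axiom, and that (\ref{eq:map_multiplicity}) is in fact a $\tablesTheory$-valid consequence of the semantics of $\map$. The backward direction is again handled by restriction, since the auxiliary symbols do not occur in the premise. With all five cases discharged, the biconditional of the lemma follows.
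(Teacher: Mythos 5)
Your proposal is correct and follows essentially the same route as the paper's proof: a rule-by-rule case analysis in which the forward direction is handled by branch selection, semantic entailment, or extension of the model to fresh variables, and the backward direction follows from the fact that each conclusion's components contain the premise's components. You are in fact somewhat more systematic than the paper (which skips several rules as ``similar''), particularly in spelling out the interpretation of \(\delem\), \(\elementIndex\), and \(\mapSum\) for the non-injective map rules, which the paper defers to its model-construction argument.
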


The proof of this lemma provides actually a stronger result than stated
in the right-to-left implication above:
for each of the conclusion configurations \(C'\), every model of \(\tablesTheory\) 
that satisfies \(C'\) satisfies the premise configuration as well.
This implies that any model that satisfies a saturated leaf of a derivation tree 
satisfies the root configuration as well.

\begin{restatable}[Refutation Soundness]{proposition}{refutation}
  \label{prop:refutation}
  For every closed derivation tree with root node \(C\), configuration \(C\) is
  \(\tablesTheory\)-unsatisfiable.
\end{restatable}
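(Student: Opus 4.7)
The plan is to prove the statement by structural induction on the closed derivation tree, using Lemma~\ref{lem:tables_eq_satisfiable} as the only non-trivial ingredient. Since a closed derivation tree is finite by definition, induction on its height is well-founded, and since all its leaves are \unsat, the base case will trivially handle the leaves and the inductive step will propagate unsatisfiability upward through each rule application.

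For the base case, consider a closed derivation tree of height zero: its root is also its unique leaf, so it is \unsat. By the convention that \unsat represents the absence of any satisfying interpretation, the statement holds for this tree. For the inductive step, let \(C\) be the root of a closed derivation tree of positive height. Then \(C\) is not \unsat and must have children \(C_1, \ldots, C_n\) obtained by applying exactly one derivation rule of the calculus to \(C\). Each subtree rooted at \(C_i\) is itself a closed derivation tree (every leaf remains a leaf of the original tree, hence is \unsat), so by the inductive hypothesis each \(C_i\) is \(\tablesTheory\)-unsatisfiable. Applying the right-to-left direction of Lemma~\ref{lem:tables_eq_satisfiable} contrapositively, if none of \(C_1, \ldots, C_n\) is satisfiable in \(\tablesTheory\), then neither is \(C\). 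This yields the desired conclusion.

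The main thing to verify, which is less an obstacle than a sanity check, is that Lemma~\ref{lem:tables_eq_satisfiable} is stated in a form that uniformly covers all rules of the calculus, including the three conflict rules \ruleAConf, \ruleBConf, \ruleEConf whose sole conclusion is \unsat. For those rules the ``iff'' of the lemma degenerates to the assertion that the premise configuration is itself \(\tablesTheory\)-unsatisfiable; this is guaranteed by each rule's premise, which records a genuine conflict detected by the arithmetic, element, or congruence-closure subsolver. Once this uniform reading is in place, the induction above goes through without additional machinery, and the argument contains no case analysis over the individual rules.
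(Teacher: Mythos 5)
Your proposal is correct and matches the paper's own argument: a routine induction on the structure of the closed derivation tree, with the leaves handled by the convention that \unsat is unsatisfiable and the inductive step given by the right-to-left direction of Lemma~\ref{lem:tables_eq_satisfiable} read contrapositively. Your added sanity check that the lemma degenerates appropriately for the conflict rules \ruleAConf, \ruleBConf, and \ruleEConf is a reasonable observation the paper leaves implicit, but it does not change the approach.
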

\noindent
The proposition above implies that deriving a closed tree with a root
\(\tup{\Ac, \Bc, \Ec}\) is sufficient to prove the unsatisfiability in \(\tablesTheory\)
of the constraints \(\Ac \cup \Bc \cup \Ec \).
The proof of the proposition is a routine proof by induction on the structure 
of the derivation tree.

Solution soundness has a more interesting proof since it relies on the construction
of a satisfying interpretation for a saturated leaf of a derivation tree,
which is a witness to the satisfiability of the root configuration.
We describe next at a high level how to construct an interpretation \(\I\) 
for a saturated configuration \(C=\conf{\Ac, \Bc, \Ec}\).\footnote{%
The full model construction and its correctness are discussed in detail in the%
\begin{paper}
longer version of this paper~\cite{arxiv}.
\end{paper}
\begin{report}
appendix.
\end{report}
}
More precisely, we construct a \emph{valuation} \(\I\) of the variables and terms in \(C\)
that agrees with the semantics of the theory symbols and satisfies 
\(\Ac \cup \Bc \cup \Ec\).

Once again, to simplify the exposition, 
we assume, \emph{with} loss of generality, that any element sort \eleSort in the problem
can be interpreted as an infinite set.
The actual implementation uses \cvc's theory combination mechanism to allow
also sorts denoting finite sets, under mild restrictions on the theories involved~\cite{politeness}.
\medskip

\noindent\textbf{Model Construction Steps}
%
\begin{enumerate}\setlength{\itemsep}{.2ex}
\item Sorts: 
The meaning of the sort constructors
\(\ms\), \(\tuple\), and \(\int\) is fixed by the theory.
The element sort \(\eleSort\) is interpreted as an infinite set.
As a concrete representation of bags, and hence of tables, consistent with the theory,
we choose sets of pairs $(e,n)$, where $e$ is an element of the bag in question, and 
$n$ is its (positive) multiplicity.
\item \(\tablesSig\): 
\(\tablesTheory\) enforces the interpretation of all \(\tablesSig\)-symbols.
Saturation guarantees that equivalent bag/tuple terms will be interpreted 
by \(\I\) as the same bag/tuple.
\item Integer variables: 
Saturation guarantees that there is some model of \(\nth\) that satisfies \(\Ac\).
We define \(\I\) to interpret integer variables according to this model.
\item Variables of sort \eleSort:
The calculus effectively partitions them into equivalent classes
(where $x$ is in the same class as $y$ iff $x \teq y$ is entailed 
by \(\Bc\)).
Each class is assigned a distinct element from \(\I(\eleSort)\), 
which is possible since it is infinite.
\item Bag variables: 
$\I$ interprets each bag variable \(s\) as the set
\[ \I(s) = \set{(\I(e), n) \mid \m{e}{s} \in \ter{\Bc^*},\; \I(\m{e}{s}) = n > 0}.\]
A well-foundedness argument on $\I$'s construction guarantees that 
the equation above is well defined.
Note that \(\I(s)\) is the empty bag iff there is no term $\m{e}{s}$ in \(\ter{\Bc^*}\) 
that satisfies the conditions in the comprehension above.
\end{enumerate}
\begin{example}\label{ex:model_construction}
  From Example~\ref{ex:unsat_sat_map}, the set of constraints \(\set{y \sqin \map(f, s), y \teq x + 1}\)
  is satisfiable. 
  In this example the element sort is \(\sint\), interpreted as the integers. 
  Similar to Example~\ref*{ex:unsat_sat_map}, the terms 
  \(\m{y}{\map(f, s)},w, \m{w}{s}\) are generated during solving. 
  After saturation, and for simplicity, suppose we end up with the following sets of 
  equivalence classes for the terms involved:
  \(\{\m{y}{\map(f, s)}\), \(\m{x}{s}\}\), \(\{x, w\}\), \(\{y, x + 1\}\), \(\{s\}\), \(\{\map(f, s)\}\). 
  The theory of arithmetic assigns consistent concrete values 
  to the first three equivalence classes, say 1, 10, 11, respectively.
  In Step 5, \(s\) is interpreted as the set 
  \(\{(\I(x), \I(\m{x}{s})),\ (\I(w), \I(\m{w}{s}))\} = \{(10, 1)\}\), 
  and similarly  \(\I(\map(f, s)) = \{(\I(y), \I(\m{y}{\map(f, s)}))\} = \{(11,1)\}\). 
\end{example}
%

\begin{restatable}[Solution Soundness]{proposition}{solution}
  \label{prop:solution}
  For every derivation tree with root node \(C_0\) and a saturated leaf \(C\), 
  configuration \(C_0\) is
  satisfiable in \(\tablesTheory\).
\end{restatable}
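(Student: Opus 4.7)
The plan is to leverage Lemma~\ref{lem:tables_eq_satisfiable} to propagate satisfiability up from the saturated leaf $C$ to the root $C_0$. Concretely, by induction on the depth of the path from $C_0$ to $C$, it suffices to exhibit an interpretation of $\tablesTheory$ that satisfies $\Ac \cup \Bc \cup \Ec$, since the lemma guarantees that the satisfiability of any conclusion of a rule application lifts to the corresponding premise. The inductive step is immediate from the lemma; the real work is in the base case, where I must show that the saturated configuration $C = \conf{\Ac,\Bc,\Ec}$ admits a model.

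To build such a model, I would instantiate the five-step construction sketched before the proposition. First, I interpret sorts as prescribed (with $\eleSort$ infinite by the simplifying assumption), and I commit to the fixed interpretation of all $\tablesSig$-symbols. Next, since \ruleAConf does not apply to $C$, the set $\Ac$ is satisfiable in nonlinear integer arithmetic, so I pick a model $\arithmodel$ of $\Ac$ and let $\I$ agree with $\arithmodel$ on all integer terms (in particular on every multiplicity term $\m{e}{s} \in \ter{\Bc^*}$). For the element sort, since \ruleBConf and \ruleEConf do not apply, the congruence closure $\Bc^*$ is consistent with $\Ec$, and I can partition the element terms into equivalence classes induced by $\Bc^*$ and pick distinct witnesses from $\I(\eleSort)$ for each class (using infinity of $\eleSort$). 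Finally, I interpret each bag variable $s$ by the finite set $\{(\I(e),n) \mid \m{e}{s}\in\ter{\Bc^*},\ \I(\m{e}{s}) = n > 0\}$. Well-definedness of this assignment relies on the non-nestedness assumption for bags and on the $\Bc^*$ closure, which ensures that congruent elements and congruent bags yield identical multiplicity values under $\I$.

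I then need to verify that $\I$ actually satisfies every constraint in $\Ac\cup\Bc\cup\Ec$. For $\Ac$, this holds by choice of $\arithmodel$. For $\Ec$, it follows from the compatibility of $\arithmodel$ with the element solver and the failure of \ruleEConf and \ruleBEProp to apply non-redundantly. For $\Bc$, each constraint is handled by case analysis on its top-level symbol, using saturation under the matching rule: equalities $s \teq t$ between bags reduce to equality of the sets constructed in Step~5, which in turn holds because saturation of \ruleBagDisequality, together with the rules \ruleDisjointUnion, \ruleMaxUnion, \ruleBagIntersection, \ruleDifferenceSubtract, \ruleDifferenceRemove, \ruleSetof, \ruleBagConstructorOne, \ruleBagConstructorTwo, \ruleBagEmpty, \ruleTableProductUp/\ruleTableProductDown, \ruleTableJoinUp/\ruleTableJoinDown, and the filter/map rules forces the multiplicities of $s$ and $t$ to coincide on every element whose multiplicity term appears in $\ter{\Bc^*}$. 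Disequalities $s \tneq t$ are satisfied because \ruleBagDisequality introduced a fresh witness $w$ with $\m{w}{s}\tneq\m{w}{t}$ in $\Ac$, which is preserved by $\I$.

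The main obstacle, and the technically most delicate part, is establishing that the multiplicity terms introduced after \emph{all} rules have fired cover every element that matters: I must argue that $\I(s)$ computed from $\ter{\Bc^*}$ is a faithful witness of the semantics of each compound bag operator, in particular for the nonlinear table operators $\product$ and $\tabjoin$ (where the product of multiplicities constrained by \ruleTableProductUp/\ruleTableProductDown must match the set membership in $\I$) and for the higher-order operators $\filter$ and $\map$, where the map case further requires that the auxiliary symbols $\delem$, $\elementIndex$, and $\mapSum$ be interpreted coherently with the aggregation in equation~(\ref{eq:map_multiplicity}). Once faithfulness on these terms is established, the strengthened direction of Lemma~\ref{lem:tables_eq_satisfiable}, which states that a model of a conclusion is already a model of the premise, lifts $\I$ all the way back up the derivation tree and yields satisfiability of $C_0$.
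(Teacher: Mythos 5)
Your proposal follows essentially the same route as the paper: construct a model of the saturated leaf via the five-step valuation (arithmetic model for $\Ac$, distinct witnesses for element-sort equivalence classes, bags as sets of pairs $(\I(e),\I(\m{e}{s}))$ drawn from $\ter{\Bc^*}$), verify by per-operator case analysis using saturation that it satisfies $\Bc$, and lift it to the root by induction using the strengthened direction of Lemma~\ref{lem:tables_eq_satisfiable}. You also correctly flag the same delicate points the paper's appendix spends most of its effort on (faithfulness for $\product$, $\tabjoin$, $\filter$, and the non-injective $\map$ case with $\delem$, $\elementIndex$, $\mapSum$), so the plan is sound and matches the paper's proof.
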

%

%
%

While the calculus is not terminating in general, we can show that 
all derivations are finite in restricted cases.

\begin{restatable}[Termination]{proposition}{terminationNoMap}
  \label{prop:termination_no_map}
  Let \(C\) be a configuration containing no product, join, or map terms.
  All derivation trees with root \(C\) are finite.
\end{restatable}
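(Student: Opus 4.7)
The plan is to bound the space of reachable configurations and show that every non-redundant rule application strictly enlarges at least one component of the configuration, so every branch of the derivation tree has bounded length. Combined with the finite (at-most-binary) branching of each rule, this gives finiteness by K\"onig's lemma.

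First I would identify the set of applicable rules. Since $C$ contains no product, join, or map terms, and none of the remaining rules fabricates such a term, the table rules of Figure~\ref{fig:table_rules} and the map-related rules of Figure~\ref{fig:tables_ho_rules} never fire. The derivation can only use the bag rules of Figure~\ref{fig:bag_rules} together with \ruleBagFilterUp and \ruleBagFilterDown.

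Second I would bound the universe of terms that can occur in any reachable configuration. No rule invents a new bag-operator or filter term, so the set $T_b$ of bag/tuple subterms is a subset of the bag/tuple subterms of $C$ and is therefore finite. The only rule that introduces a fresh variable is \ruleBagDisequality, which contributes one element variable per bag disequality in $B^*$. New bag disequalities can only enter $B$ through the branching of \ruleBagConstructorOne (which adds $s \tneq \bempty{\eleSort}$) or through the splitting rule \ruleEProp, and in both cases the two bag sides are drawn from $T_b \cup \{\bempty{\eleSort}\}$. Hence the total number of fresh element variables ever introduced is bounded (roughly by $|T_b|^2$), and the set $T_e$ of element terms that can appear in any reachable configuration is finite.

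Third I would bound the set of available atoms. Multiplicity terms have the form $\m{e}{s}$ with $e \in T_e$ and $s \in T_b$; the integer terms built by the rule conclusions are shallow combinations (sums, mins, maxes, integer literals) of these, and $B$-atoms and $E$-atoms are (dis)equalities or predicate applications over terms from the same bounded pool. Consequently there is a finite universe $\mathcal{U}$ of atoms that can ever appear in $A \cup B \cup E$. The redundancy condition states that an application is redundant if the conclusion leaves every component as a subset of its premise, so every non-redundant application must strictly grow at least one of $A, B, E$. Thus the measure $\mu(\langle A, B, E\rangle) = |A| + |B| + |E|$ strictly increases along each edge and is bounded above by $|\mathcal{U}|$, yielding bounded depth. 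Each rule has at most two conclusions, so the tree is finitely branching, and finiteness follows.

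The hard part will be controlling the cascade triggered by \ruleBagConstructorOne and \ruleBagDisequality: a new disequality $s \tneq \bempty{\eleSort}$ enables a fresh $w$ via \ruleBagDisequality, which creates new multiplicity terms $m(w, s)$, which can trigger further applications of bag rules and potentially more disequalities. I would handle this by a careful closure argument, showing that any disequality produced involves only bag terms from the already-bounded set $T_b \cup \{\bempty{\eleSort}\}$, and that the per-disequality redundancy check on \ruleBagDisequality caps the number of fresh variables once and for all. Once this closure is shown to stabilize within the finite pool $T_b, T_e, \mathcal{U}$, the bounded-measure argument above completes the proof.
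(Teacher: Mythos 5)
Your proposal is correct and rests on the same foundational step as the paper's proof, but it closes the argument with a genuinely different termination measure. Both arguments begin by bounding the term universe in exactly the same way: no rule creates new bag terms, so the number \(b_c\) of bag terms is fixed; \ruleBagDisequality is the only source of fresh element variables and fires once per bag disequality, contributing at most \(b_c^2\) new element terms; multiplicity terms and the shallow arithmetic terms built from them are then bounded as well. Where you diverge is in how finiteness of branches is derived. The paper constructs a lexicographic tuple of seventeen per-rule ranking functions (Figure~\ref{fig:bag_ranking}), each of the form ``bound minus the number of constraints of the shape that rule produces,'' and verifies rule by rule that an application strictly decreases its own component while leaving the others unchanged. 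You instead invoke the redundancy condition directly: since a non-redundant application must strictly enlarge some component in every conclusion, the single measure \(\lvert\Ac\rvert+\lvert\Bc\rvert+\lvert\Ec\rvert\) strictly increases along every edge and is capped by the size of the finite atom universe. Your route is more economical and makes the role of the redundancy definition explicit, at the cost of having to argue once and globally that every constraint any rule can add lives in the bounded pool \(\mathcal{U}\); the paper's per-rule bookkeeping is more verbose but localizes that check to each rule and is robust even under weaker notions of redundancy. Two small points to tighten: \ruleEProp splits only on \emph{element} terms, so it is not a source of new bag disequalities (the initial configuration and congruence closure are); and both your proof and the paper's implicitly assume that \ruleBagDisequality is applied at most once per disequality --- the stated redundancy condition alone does not literally forbid a second application with a different fresh witness, so this convention should be made explicit in either version.
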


Under the restrictions in the proposition above,
the rules of the calculus will never generate nonlinear arithmetic constraints.
This means that for input problems that also have no such constraints,
any derivation strategy for the calculus yields a decision procedure.

%

\section{A Theory of Finite Relations with Filter and Map}\label{sec:relations_theory}
%

\begin{figure}[!tbp]
  %
  \begin{prooftree}
    \LeftLabel{\ruleRelationJoinUp}
    \AxiomC{\(\begin{array}{c}
     \tup{x_1, \dots, x_m} \sqin s \in \Sc^*
      \qquad 
      \tup{y_1, \dots, y_n} \sqin t \in \Sc^*
      \\[.7ex]
      s \tabjoin_{i_1j_1 \cdots i_pj_p} t \in \ter{\Sc}
      \qquad
      x_{i_1} \teq y_{j_1},  \dots,  x_{i_p} \teq y_{j_p} \in \Sc^* 
     \end{array}\)}
    \UnaryInfC{\(\Sc := \Sc,
      \tup{x_1, \dots, x_m,y_1, \dots, y_n} \sqin s \tabjoin_{i_1j_1 \cdots i_pj_p} t
      \)}
  \end{prooftree}

  \begin{prooftree}
    \LeftLabel{\ruleRelationJoinDown}
    \AxiomC{\(\tup{x_1, \dots, x_m,y_1, \dots, y_n} \sqin
    {s \tabjoin_{i_1j_1 \cdots i_pj_p} t} \in \Sc^*\)}
    \UnaryInfC{\(
      \begin{array}{c}
        \Sc := \Sc,        
        \tup{x_1, \dots, x_m} \sqin s, \tup{y_1, \dots, y_n} \sqin t,
         x_{i_1} \teq y_{j_1},  \dots,  x_{i_p} \teq y_{j_p}
      \end{array}
      \)}
  \end{prooftree}

  \begin{prooftree}
    \LeftLabel{\rn{Filter up}}
    \AxiomC{\(e \sqin s \in \Sc^*\)}
    \AxiomC{\(\filter(p, s) \in \ter{\Sc}\)}
    \BinaryInfC{
      \(\Ec:= \Ec, p(e) \quad \Sc := \Sc, e \sqin \filter(p, s)
       ~~~\vert\vert~~~
       \Ec:= \Ec, \neg p(e)\quad \Sc := \Sc, e \not\sqin \filter(p, s)\)
    }
  \end{prooftree}

  \begin{prooftree}
    \LeftLabel{\rn{Map up}}
    \AxiomC{\( e \sqin s \in \Sc^*\)}
    \AxiomC{\(\map(f,s) \in \ter{S}\)}
    \AxiomC{\(e \not\in W\)}
    \TrinaryInfC{\(\Sc := \Sc , f(e) \sqin \map(f,s)\) }
  \end{prooftree}

  \begin{prooftree}
    \LeftLabel{\rn{Map down}}
    \AxiomC{\(e \sqin \map(f,s) \in \Sc^*\)}
    \AxiomC{\(w \in W\)}
    \BinaryInfC{\(\Ec := \Ec, f(w) \teq e \quad \Sc:=\Sc, w \sqin s\)}
    \DisplayProof \hskip 1em
    \AxiomC{\( e \sqin \filter(p, s) \in \Sc^*\)}
    \LeftLabel{\rn{Filter down}}
    \UnaryInfC{\(\Ec := \Ec, p(e) \quad \Sc := \Sc, e \sqin s\)
    }
  \end{prooftree}

  \caption{Set filter and map rules. \(w\) is a fresh variable unique per \(e, f,s\).}
  \label{fig:relations_ho_rules}
\end{figure}

To reason about SQL queries under set semantics,
we rely on the theory of finite relations \(\relationsTheory\),
whose signature and calculus are described in Meng et al.~\cite{relationsPaper}.
We extend the signature with filter and map operators analogous 
to those in the theory of tables.
%
We overload the symbols 
\(\sqin,\sqsubseteq, \sqcup, \sqcap, \setminus, \product, \filter, \map,  
\tabjoin\)
for set operations membership, subset, union, intersection, difference, product,
filter, map, and (inner) join,\footnote{%
This is analogous to the inner join operator for tables
but differs from the relational join operator defined in Meng et al.~\cite{relationsPaper}.
} 
respectively. 
We use \(\relationProject\) for relation projection. 

A calculus to reason about constraints in \(\relationsTheory\) can be defined 
over configurations of the form \(\tup{\Sc, \Ec}\), where
\(\Sc\) is a set of \(\relationsTheory\)-constraints,
and \(\Ec\) is a set of element constraints similar to the one defined
in Section~\ref{sec:tables_theory}.
We again assume we have a solver that can decide the satisfiability 
of element constraints.
The \(\relationsTheory\)-constraints are
equational constraints of the form \(s \teq t\) and \(s \tneq t\) and
membership constraints of the form \(e \sqin s\) and \(e \not\sqin s\),
where \(e, s, t\) are \(\Sigma_{r}\)-terms.
Similar to bags, we also define the set \(W\) to be an infinite set of fresh variables
specific for map rules.
%
%
The closure \(\Sc^*\) for the \(\Sc\) component of a configuration is defined as:
\begin{align*}
  \mathcal{\Sc}^* = \ 
   & \{s \teq t \ver s, t \in \ter{\Sc}, \Sc \models_{\text{tup}} s\teq t\} \\
   \cup \ 
   & \{s \tneq t \ver s, t \in \ter{\Sc},
       \Sc \models_{\text{tup}} s \teq s' \wedge t \teq t' \text{ for some } s' \tneq t' \in \Sc\} \\
   \cup \ 
   & \{e \sqin s \ver e, s \in \ter{\Sc},
       \Sc \models_{\text{tup}} e \teq e' \wedge s \teq s' \text{ for some } e' \sqin s' \in \Sc\} 
\end{align*}

\begin{paper}%
For space constraints, we refer the reader to the extended version
of this paper~\cite{arxiv} for a full description of the extended signature and calculus.
\end{paper}%
\begin{report}%
The extended signature and the original derivation rules are provided 
in Appendix~\ref{sec:appendix_sets} for reference.
\end{report}
We provide in Figure~\ref{fig:relations_ho_rules} only our additional rules:
those for filter and map, 
which resemble the corresponding rules given in Figure~\ref{fig:tables_ho_rules} 
for bags, 
and those for inner joins.
%
%

\paragraph{Calculus correctness.}
Bansal et al.~\cite{DBLP:journals/lmcs/BansalBRT18} provide
a sound, complete, and terminating calculus 
for a theory of finite sets with cardinality.
The calculus is extended to relations but without cardinality
by Meng et al.~\cite{relationsPaper}.
That extension is refutation- and solution-sound but not terminating in general.
While it is proven terminating over a fragment of the theory of relations,
that fragment excludes the relational product operator,
which we are interested in here.

We have proved that the calculus by Meng et al.~extended with the filter rules 
in Figure~\ref{fig:relations_ho_rules} is terminating for constraints built 
with the operators \(\{\teq, \sqcup, \sqcap, \setminus, \product, \filter\}\)%
\begin{paper}%
~\cite{arxiv}.%
\end{paper}
\begin{report}%
(see Lemma~\ref{newlemma}).
\end{report}
Termination is lost with the addition of map rules,
also in Figure~\ref{fig:relations_ho_rules}.
Example~\ref{ex:incomplete_map_tables} works in this case too as a witness.

However, termination can be recovered when the initial configuration satisfies 
a certain \emph{acyclicity} condition.
To express this condition, we associate with each configuration \(C = \tup{\Sc, \Ec}\) 
an undirected multi-graph \(G\) 
whose vertices are the relation terms of sort \(\s\) occurring in $\Sc$, and 
whose edges are labeled with an operator in \(\{\teq, \sqcup, \sqcap, \setminus, \product, \tabjoin, \filter, \map\}\).
Two vertices have an edge in \(G\) if and only
if they are in the same equivalence class or a membership rule can be applied 
(either upward or downward) to \(C\) such that one of the vertices occurs 
as a child of the other vertex in the premises or conclusions of the rule.   
Furthermore, each edge between vertices \(\map(f, s)\) and \(s\) is colored red
while the remaining edges are colored black.

\begin{restatable}{proposition}{mapTermination}
  \label{thm:map_termination}
  Let \(C_0 = \tup{\Sc_0, \Ec_0}\) be an initial configuration and 
  let \(G_0\) be its associated multi-graph.
  The calculus is terminating for \(C_0\)
  %
  if there is no cycle in \(G_0\) with red edges, 
  and all cycles are located in a subgraph without map terms. 
\end{restatable}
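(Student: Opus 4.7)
The plan is to establish termination by defining a well-founded lexicographic measure on configurations that strictly decreases under every non-redundant rule application. The measure combines a bound on the derivable membership facts in the map-free portion with a level-indexed bound on the map-related activity.

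First I would dispatch the map-free portion of the calculus. Since by hypothesis every cycle in $G_0$ is confined to a subgraph containing no map term, termination of this portion follows by extending the argument of Meng et al.~\cite{relationsPaper} to the filter, product, and inner join rules of Figure~\ref{fig:relations_ho_rules}. The key observation is that, in the absence of $\map$, no rule introduces fresh variables except those uniquely paired with a membership fact that triggered the rule; each upward rule consumes existing membership facts over a fixed finite vocabulary of relation terms, yielding a finite bound on derivable constraints.

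Next I would exploit the no-red-cycle hypothesis to structure the map terms. The red edges connect each vertex $\map(f,s)$ to its argument $s$, so the absence of a red cycle makes the dependency relation ``$\map(f,s)$ depends on $\map(f',s')$ whenever $\map(f',s')$ is reachable from $s$ in $G_0$'' acyclic. Topologically sort the map terms as $\map(f_1, s_1), \ldots, \map(f_k, s_k)$ with dependencies flowing from higher to lower indices, and define the lexicographic measure $\mu(C) = (\mu_k(C), \ldots, \mu_1(C))$, where $\mu_i(C)$ counts the number of distinct membership facts of the form $e \sqin \map(f_i, s_i)$ in $\Sc^*$ for which Map down has not yet fired.

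The main obstacle is verifying that every rule application strictly decreases $\mu$ in the lexicographic order. Map down on $e \sqin \map(f_i, s_i)$ fires at most once per such fact (becoming redundant thereafter) and thus strictly decreases $\mu_i$; it may introduce a fresh $w \sqin s_i$ that propagates through non-map and lower-map rules, but by the DAG property only into $\map(f_j, s_j)$ with $j < i$, so only $\mu_j$ for $j < i$ can grow while $\mu_i, \ldots, \mu_k$ remain unchanged. Map up is blocked by the side condition $e \not\in W$ from re-firing on the fresh witnesses, which is precisely what rules out the pathology of Example~\ref{ex:incomplete_map_tables}. The remaining delicate point is bounding $\mu_k$ for the top-most map term: membership facts $e \sqin \map(f_k, s_k)$ can only arise from Map up applied to some $e' \sqin s_k$ with $e' \not\in W$, so $e'$ must be a term of the initial configuration or produced by the map-free subcalculus below $s_k$, whose termination and finite vocabulary we have already established. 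Well-foundedness of the lexicographic order on $(\mu_k, \ldots, \mu_1)$ then yields finiteness of every derivation from $C_0$.
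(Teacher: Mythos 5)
You take a genuinely different route from the paper. The paper's proof cuts the multi-graph into a subgraph $G_1$ containing all cycles (and no map terms) and a subgraph $G_2$ containing all map terms (and no cycles), proves termination of each piece separately via explicit ranking functions (Lemma~\ref{lem:relations_product_terminating} for $G_1$, a chain-of-maps worst case for $G_2$), and then argues that propagation across the finitely many cut edges happens only finitely often. You instead build a single global lexicographic measure indexed by a topological order on the map terms. Your approach is arguably cleaner in how it isolates the role of the two hypotheses, and it makes explicit the observation that \ruleSetMapDown fires at most once per membership fact while the $e \notin W$ guard on \ruleSetMapUp blocks the feedback of Example~\ref{ex:incomplete_map_tables}; the paper's cut-based argument instead buys a cruder but more uniform bound on the total number of element terms ever created.

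There are, however, two concrete soft spots. First, the claim that ``the absence of a red cycle makes the dependency relation acyclic'' is false as stated: take two terms $\map(f_1,s)$ and $\map(f_2,s)$ over the same argument $s$. Each is reachable from $s$, so each depends on the other under your definition, yet $G_0$ is a star with two red edges and no cycle at all. The topological sort you need therefore may not exist. (The situation is semantically harmless, because the fresh $w \sqin s$ produced by \ruleSetMapDown lies in $W$ and so cannot trigger \ruleSetMapUp into the sibling map term, but that means the dependency relation must be defined in terms of which paths can actually transmit a non-$W$ element into a map term, not raw reachability from $s_i$.) Second, your measure does not strictly decrease on \emph{every} rule application as required: the step that actually increases some $\mu_j$ (say, a \ruleUnionDown deriving $w \sqin \map(f_j,s_j)$ from a fact produced earlier at level $i > j$) is a separate application that decreases no $\mu_{i'}$ with $i' \geq j$. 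To repair this you need each component to be of the form ``a priori bound minus current count,'' where the bound on facts enterable at level $j$ is computed from the (already bounded) activity at higher levels and in the map-free part --- which is exactly the shape of the paper's ranking functions in Figure~\ref{fig:set_ranking}. You sketch such a bound only for the topmost level $\mu_k$; the same argument must be pushed down through all levels before well-foundedness follows.
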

Figure~\ref{fig:redCycle} shows the graph for Example~\ref{ex:incomplete_map_tables}
which has a cycle with a red edge.
Figure~\ref{fig:no_red_cycle} shows a graph with no red cycles. 
For the purposes of this paper, the acyclicity condition is not a serious restriction
because typical SQL queries translate to initial configurations
whose associated graph does not have cycles with red edges.\footnote{%
Examples of SQL queries that do have cycles with red edges are queries 
with recursive common table expressions.
}

\begin{figure}[t]
  \centering
  \begin{tikzpicture}[auto]
    \node[state, ellipse] (a)                  {\(s\)};
    \node[state, ellipse] (mapA)[right=of a]{\(\map(f, s)\)};
    \path[](a) edge [bend left]  node  [above] {\(\teq\)} (mapA);
    \path[red](a) edge [bend right]  node  [below] {\(\map\)} (mapA);
  \end{tikzpicture}
  \caption{A cycle that contains a red edge with \(\map\) label. }
  \label{fig:redCycle}
\end{figure}
\begin{figure}[t]
  \centering
  \begin{adjustbox}{width=.7\textwidth,center}
  \begin{tikzpicture}[auto]
    \node[state, ellipse] (a)  at (0,0)                {\(s_i\)};
    \node[state, ellipse] (b) at (-4, 0) {\(\map(f_i, s_i)\) };
    \node[state, ellipse] (x)[above left=of b,yshift=-1cm]{\(\cdots\)};
    \node[state, ellipse] (y)[below left=of b,yshift=0.5cm]{\(\cdots\)};
    \node[state, ellipse] (d)[above right=of a,yshift=-1cm]{\(t\)};
    \node[state, ellipse] (e)[below right=of a,yshift=0.5cm]{\(s_j\)};
    \node[state, ellipse] (c) [right=of e,xshift=1.3cm]  {\(\map(f_j, s_j)\) };
    \path[red](a) edge node  [below] {\(\map\)} (b);
    \path[red](c) edge node  [below] {\(\map\)} (e);
    \path[](b) edge node  [above] {\(\teq\)} (x);
    \path[](b) edge node  [below] {\(\sqcup\)} (y);
    \path[](a) edge node  [above] {\(\teq\)} (d);
    \path[](a) edge node  [left] {\(\sqcap\)} (e);
    \path[](d) edge node  [right] {\(\setminus\)} (e);
    \draw[] (-0.6,-1.8) -- (-0.6, 1.3) -- (2.5,1.3) -- (2.5,-1.8) -- (-0.6, -1.8);
  \end{tikzpicture}
\end{adjustbox}
  \caption{A graph with no cycles that contain red edges.}
  \label{fig:no_red_cycle}
\end{figure}
%

%

\section{Supporting Nullable Sorts}\label{sec:nullables}
SQL and relational databases allow tables with \emph{null values}.
To support them in SMT, one could rely in principle on SMT solvers 
that accept user-defined algebraic datatypes~\cite{datatypesPaper},
as nullable types are a form of option types.
That is, however, not enough since, to encode SQL operations on nullable types,
one also needs to lift all operators over a given sort to its nullable version. 
Since this is extremely tedious and error-prone when done at the user level,
we extended the datatypes solver in cvc5 by adding built-in
parametric \(\nullable\) sorts, along with selectors, testers, 
and a family of (second-order) lifting operators.
The signature of the corresponding theory of \define{nullable sorts}
is provided in Figure~\ref{fig:nullables_sig}.
The meaning of constructors \(\nullN\) and \(\someN\),
the selector \(\valN\), and the testers \(\isNull, \isSome\) is standard;
that of \(\lift\) is:
\begin{align*}
  \lift(f, x_1, \dots, x_k) =
  \begin{cases}
    \nullN                                  & \text{if } x_i = \nullN \text{ for some } i \in [1, k]     \\
    \someN(f(\valN(x_1), \dots, \valN(x_k))) & \text{otherwise}
  \end{cases}
\end{align*}
which is analogous to the semantics of eager evaluation in programming languages.
This semantics covers most SQL operations, except for \sql{OR} and \sql{AND}, 
as SQL adopts a three-valued logic for nullable Booleans~\cite{theCompleteBook},
interpreting, for instance, \sql{NULL OR TRUE} as \sql{TRUE} and
\sql{NULL AND FALSE} as \sql{FALSE}.  
We can support this short-circuiting semantics through an encoding from SQL to SMT
based on the \textit{if-then-else} operator.
%

\begin{figure}[t]
  \scriptsize\centering
  \(
    \begin{array}{l}
      \begin{array}{@{}l@{\quad}l@{\quad}l@{\quad}l@{}}
        \toprule
        \textbf{Symbol} & \textbf{Arity}                     & \textbf{SMTLIB}         & \textbf{Description}    \\
        \midrule
        \nullN          & \nullable(\alpha)                   & \verb|nullable.null|    & \text{Null constructor} \\
        \someN          & \alpha \rightarrow \nullable(\alpha) & \verb|nullable.some|    & \text{Some constructor} \\
        \valN           & \nullable(\alpha) \rightarrow \alpha & \verb|nullable.val|     & \text{Value selector}   \\
        \isNull         & \nullable(\alpha) \rightarrow \bool & \verb|nullable.is_null| & \text{Is null tester}   \\
        \isSome         & \nullable(\alpha) \rightarrow \bool & \verb|nullable.is_some| & \text{Is some tester}   \\
        \lift           & \tiny
        \left( \alpha_1 \times \dots \times \alpha_k \rightarrow \alpha \right)
        \times         \\
                        & \tiny
        \nullable(\alpha_1) \times \dots \times \nullable(\alpha_k)
        \rightarrow \nullable(\alpha)
                        & \verb|nullable.lift|
                        & \text{Lifting operators}
        \\
        \bottomrule
      \end{array}
      \\[4ex]
    \end{array}
  \)
  \caption{Signature for the theory of nullable sorts. \(\lift\) is a variadic function symbol.}
  \label{fig:nullables_sig}
\end{figure}


%

\section{Evaluation of Benchmarks on Sets and Bags}\label{sec:evaluation}

We evaluated our cvc5 implementation using a subset of the benchmarks~\cite{calcite-benchmarks} 
derived from optimization rewrites of  Apache \calcite, 
an open source database management framework~\cite{calcite}.
The benchmark set provides a database schema for all the benchmarks.
Each benchmark contains two queries over that schema 
which are intended to be equivalent under bag semantics, 
in the sense that they should result in the same table for any instance 
of the schema.
The total number of available benchmarks is 232. 
We modified many queries that had syntax errors or 
could not actually be parsed by \calcite.
We then excluded benchmarks with queries containing constructs 
we currently do not support, 
such as \sql{ORDER BY} clauses\footnote{%
Both \spes and \sqlSolver provide partial support for \sql{ORDER BY}.
They classify two input queries with \sql{ORDER BY} clauses as equivalent 
if they can prove their respective subqueries 
without the \sql{ORDER BY} clause equivalent.
Otherwise, they return \unknown.
}
or aggregate functions, or benchmarks
converted to queries with aggregate functions by the \calcite parser.
%
%
That left us with 88 usable benchmarks, that is, about 38\% of all benchmarks.
%
%
%
%
%
For the purposes of the evaluation we developed a prototype translator 
from those benchmarks into SMT problems over the theory of tables and of relations
presented earlier, following SQL's bag semantics and set semantics, respectively.
The translator uses different encodings for bag semantics and set semantics. 
Each SMT problem is unsatisfiable iff the SQL queries in the corresponding benchmark
are equivalent under the corresponding semantics.

We ran \cvc on the translated benchmarks on a computer with 128GB RAM and with a
12th Gen Intel(R) Core(TM) i9-12950HX processor.  We
compared our results with those of \sqlSolver and \spes analyzers ---
which can both process the calcite benchmarks directly. 
The results are shown in Figure~\ref{fig:eq_benchmarks}.
The first three lines show the results for the bags semantics encoding (b) 
while the fourth line shows the results for the set semantics encoding (s).
The column headers  \(\not\equiv\), \(\equiv\), and uk stand 
for inequivalent, equivalent, and \unknown respectively.
Unknown for \spes means it returns ``not proven,'' 
whereas for \cvc it means that it timed out after 10 seconds. 
\sqlSolver solves all benchmarks efficiently within few seconds.
However, it incorrectly claims that the two queries in one benchmark,
\texttt{testPullNull},
are equivalent, despite the fact that they return tables which differ
by the order of their columns.
The \sqlSolver developers acknowledged this issue as an error 
after we reported it to them, and they fixed it in a later version. 
%
%
\spes too misclassified benchmark \texttt{testPullNull}
as well as another one (\texttt{testAddRedundantSemiJoinRule}) 
where the two queries are equivalent only under set semantics.
The \spes authors acknowledged this as an error in their code. 
The tool is supposed to answer \unknown for the second benchmark, 
as the two queries have different structures, 
a case that \spes does not support.
\cvc gives the correct answer for these two benchmarks.
In each case, it provides counterexamples in the form of a small database 
over which the two SQL queries differ, something that we were able to verify
independently using the \postgres database server~\cite{postgresql}.
Under set semantics, \cvc 
found 83 out of 88 benchmarks to contain equivalent queries, 
found 1 to contain inequivalent queries (\texttt{testPullNull}), and 
timed out on 4 benchmarks.
%
%
Under bag semantics, \cvc proved fewer benchmarks than both \spes and \sqlSolver.
However, it found the benchmark whose queries are inequivalent 
under bag semantics but equivalent under set semantics.

Since the \calcite benchmark set is heavily skewed towards equivalent queries,
we mutated each of the 88 benchmarks to make the mutated queries inequivalent.
The mutation was performed manually but \emph{blindly},
to avoid any bias towards the tools being compared.
For the same reason, we excluded the two original benchmarks 
that \spes misclassifies.
The results of running the three tools on the mutated benchmarks are shown 
in Figure~\ref{fig:neq_benchmarks}.
\sqlSolver was able to solve all of them correctly.
As expected, \spes returned \unknown on all of them 
since it cannot verify that two queries are inequivalent.

Note that \cvc's performance improves with the mutated benchmarks 
because many of them use non-injective map functions, and 
it is easier to find countermodels for such benchmarks than it is to prove equivalence. 
Consistent with that,
we observe that \cvc times out for more benchmarks under set semantics 
than for bag semantics.
We conjecture that this is because more queries are equivalent 
under set semantics than they are under bag semantics. 
%

Looking at \cvc's overall results, more benchmarks were proven equivalent by \cvc under set
semantics than bag semantics. 
%
How to improve performance under bag semantics requires further investigation.
However, we find it interesting and useful for the overall development 
of SQL analyzers that \cvc was able to expose a couple of bugs in the compared tools.

\begin{figure}[t]
\quad~
    \begin{subfigure}[b]{0.37\textwidth}
      \begin{tabular}{llrrrr}
        \toprule
                 &  & \(\not\equiv\) & \(\equiv\) & uk & Total \\
        \midrule
        \sqlSolver &(b) & 1              & 87         &         & 88    \\
        SPES     &(b)  &                & 54         & 34      & 88    \\        
        cvc5 &(b) & 2              & 42         & 44      & 88    \\
        \hline
        cvc5 &(s) & 1              & 83         & 4       & 88    \\
        \bottomrule
      \end{tabular}
      \caption{}
      \label{fig:eq_benchmarks}
    \end{subfigure}
    \qquad\qquad
    \begin{subfigure}[b]{0.4\textwidth}
      \begin{tabular}{llrrrr}
        \toprule
                   & & \(\not\equiv\) & \(\equiv\) & uk & Total \\
        \midrule
        \sqlSolver &(b)& 86             &            &         & 86    \\
        SPES       &(b)&                &            & 86      & 86    \\        
        cvc5 &(b) & 81             &            & 5       & 86    \\
        \hline
        cvc5 &(s)  & 67             & 9          & 10      & 86    \\
        \bottomrule
      \end{tabular}
      \caption{}
      \label{fig:neq_benchmarks}
    \end{subfigure}
  \caption{Left table is the original benchmarks, right table is the mutated one.
    SPES can only answer equivalent or \unknown.
  }
  \label{fig:sat_results}
\end{figure}

%

\section{Conclusion and Future Work}\label{sec:conclusion}
We showed how to reason about SQL queries automatically using a number of theories in SMT solvers.
We introduced a theory of finite tables to support SQL's bag semantics.
We extended a theory of relations 
to support SQL's set semantics.
We handled null values by extending a theory of algebraic datatypes with nullable sorts and
a generalized lifting operator.
We also showed how to translate SQL queries without aggregation into these theories.
Our comparative evaluation has shown that our implementation is not yet
fully competitive performance-wise with that of specialized SQL analyzers, 
particularly on equivalent queries with non-injective mapping functions.
We plan to address this in future work.

We are experimenting with adding support for \emph{fold} functionals
in order to encode and handle SQL queries with aggregation operators.
%
%
Another direction for future work is adding filter and map operators
to a theory of sequences~\cite{sequencesPaper} to reason about SQL queries
with \emph{order-by} clauses, which current tools support only in part.

%

\paragraph{Acknowledgements}
This work was supported in part by a gift from Amazon Web Services.
We are grateful to Qi Zhou and Joy Arulraj, the authors of \spes,
for their help, detailed answers to our questions, and 
for sharing the source code and benchmarks of \spes, which we used in
this paper.
We are also grateful to Haoran Ding, the first author of \sqlSolver,
for his prompt answers to our questions and his clarifications on some aspects 
of the tool. 
Finally, we thank Abdalrhman Mohamed 
\begin{paper}%
and the anonymous reviewers for their 
\end{paper}%
\begin{report}%
for his
\end{report}%
feedback and suggestions for improving the paper.

\bibliographystyle{splncs04}
\bibliography{biblio.bib}
\begin{report}
\appendix
\section{Proofs for Tables Theory}\label{sec:appendix_tables}
%

\subsection{Proof of Lemma~\ref{lem:bagArithConstraints}}
\bagArithConstraints*
\begin{proof}
  \(\varphi\) can be transformed into an equisatisfiable disjunctive normal form
  \(\varphi_1 \vee \dots \vee \varphi_n\) using standard techniques,
  where \(\varphi_i\) is a conjunction of \(\varphi_i^1, \dots, \varphi_i^{k_i}\)
  of literals for \(i \in [1,n]\).
  Each literal is either a table constraint, an arithmetic constraint,
  or an element constraint.
  For each \(i \in [1,n]\) define:
  \begin{align*}
    \Bc_i & = \{\varphi_i^j \ver \varphi_i^j \text{ is a table constraint}\}       \\
    \Ac_i & = \{\varphi_i^j \ver \varphi_i^j \text{ is an arithmetic constraint}\} \\
    \Ec_i & = \{\varphi_i^j \ver \varphi_i^j \text{ is an element constraint}\}
  \end{align*}
  where \(j \in [1, k_i]\) for each \(i \in [1,n]\).
  It is clear that \(\varphi\) is satisfiable if and only if
  \(\Bc_i \cup \Ac_i \cup \Ec_i\) is satisfiable for some \(i\in [1,n]\).
\end{proof}
\subsection{Proof of Lemma~\ref{lem:tables_eq_satisfiable}}
\tablesEqSatisfiable*
\noindent
It follows from Lemma~\ref{lem:tables_eq_satisfiable} that if all children of a configuration \(C\)
are unsatisfiable, then \(C\) is unsatisfiable.
Hence by induction on the structure of the derivation tree,
if a derivation tree of root \(c_0\) is closed,
then
\(c_0\) itself is \(\tablesTheory\)-unsatisfiable.
In the proofs, we abuse the notation \(\m{e}{s}\) to denote multiplicity term in the syntax,
and also denote multiplicity in the interpretation.
\begin{proof}
  Suppose \(c=\conf{\Ac, \Bc, \Ec}\) is satisfied by an interpretation
  \(\I\), we show that it also satisfies at least one child 
  \(c'=\conf{\Ac', \Bc', \Ec'}\) of \(C\).
  This is clear for communication rules
  \ruleBAProp, \ruleBEProp, \ruleEProp, and \ruleAProp which only add equality
  constraints satisfied by \(\I\).
  For rule \ruleBagDisequality, every model that satisfies \(s\tneq t\),
  there exists an element \(z\) in \(\I(\eleSort)\) that witnesses this disequality
  such that \(\m{z}{\I(s)} \neq \m{z}{\I(t)}\).
  \(c'\) is satisfiable by interpreting the fresh variable \(w\) as \(\I(w) = z\).
  In rules \ruleNonNegative and \ruleBagEmpty, \(c'\) is satisfiable from the semantics of \(\m{e}{s}\)
  and \(\bempty{\eleSort}\) in Section~\ref{sec:tables_theory}.
  For \ruleBagConstructorOne, the model value \(\I(\bag(e,n))\) is either
  empty or nonempty.
  If it is the former, it must be the case that \(\I(n) \leq 0\), and
  hence the left branch is satisfiable.
  The right branch is satisfied by the latter. 
  For \ruleBagConstructorTwo, if \(x \tneq e \in \Bc^*\)
  then \(\I(x \tneq e) = true\), which means \(\I(x) \neq \I(e)\) holds.
  Therefore, any child branch \(c'\) is satisfied 
  because \(\I(\m{x}{\bag(e,n)}) = 0\).
  We skip the proofs for rules \ruleDisjointUnion, \ruleMaxUnion,
  \ruleBagIntersection, \ruleDifferenceSubtract, \ruleSetof since they
  are quite similar to the cases above.
  For rule \ruleTableProductUp,
  suppose \(\I(\m{\tup{x_1, \dots, x_m}}{s}) = i\)
  and \(\I(\m{\tup{y_1, \dots, y_n}}{t}) = j\).
  Then the conclusion is satisfied by 
  \(\I(\m{\tup{x_1^\I, \dots, x_m^\I, y_1^\I, \dots, y_n^\I}}{s \product t}) = i * j \).
  The proofs of the remaining rules are similar.

  For the other direction, suppose a child configuration \(c'\)
  is satisfiable. 
  In all rules, we have
  \(\Ec \subseteq \Ec'\).
  So if \(\Ec'\) is satisfiable, then \(\Ec\) is also
  satisfiable.
  Similarly, we have \(\Ac  \subseteq \Ac'\) 
  and \(\Bc  \subseteq \Bc'\). 
  Therefore, \(C\) is satisfiable. 
\end{proof}
\begin{cor}
  \label{cor:root_is_sat}
If a leaf in a derivation tree is satisfiable, then 
the root \(c_0\) is satisfiable. 
\end{cor}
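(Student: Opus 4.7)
The plan is to prove this by straightforward induction on the length of the (unique) path from the root \(c_0\) down to the satisfiable leaf in the derivation tree. The base case, when the satisfiable leaf coincides with the root (path of length zero), is immediate since the statement reduces to itself. For the inductive step, I will consider a satisfiable leaf at depth \(n+1\) and let \(p\) denote its parent at depth \(n\). By the definition of a derivation tree, the leaf was obtained by a non-redundant application of some rule of the calculus to \(p\), so the leaf is one of the conclusion configurations of that rule application and \(p\) is the premise.

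At this point I invoke Lemma~\ref{lem:tables_eq_satisfiable} in its right-to-left direction: since one of the conclusion configurations (namely, the leaf) is satisfiable in \(\tablesTheory\), the premise configuration \(p\) is satisfiable in \(\tablesTheory\) as well. Now the path from \(c_0\) to \(p\) has length \(n\), and \(p\) is satisfiable, so the inductive hypothesis yields that \(c_0\) is satisfiable in \(\tablesTheory\), as desired.

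Since Lemma~\ref{lem:tables_eq_satisfiable} is stated as a biconditional, essentially all of the real content of this corollary lives in that lemma; the corollary itself is just the iterated application of the ``conclusion satisfiable implies premise satisfiable'' direction along a finite path. Consequently, I do not expect any genuine obstacle. The only thing worth double-checking is a bookkeeping point: that every node along the path from \(c_0\) to the leaf is indeed the premise of a rule application whose conclusion lies on that same path, which is guaranteed by the very definition of derivation tree given in the paper. Since paths from a leaf up to the root are finite by construction, the induction is well-founded and the argument goes through.
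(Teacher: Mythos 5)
Your proof is correct and follows exactly the paper's route: the paper likewise proves this corollary by induction along the derivation tree, using the right-to-left direction of Lemma~\ref{lem:tables_eq_satisfiable} (a satisfiable conclusion configuration implies a satisfiable premise configuration) at each step. Your version merely spells out the induction on path length that the paper leaves implicit.
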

\begin{proof}
  This follows by induction from Lemma~\ref{lem:tables_eq_satisfiable}. 
\end{proof}

\subsection{Proof of Proposition~\ref{prop:refutation}}

\refutation*

\begin{proof}
   We prove by induction on the structure of the derivation tree.
   Suppose \(c=\conf{\Ac, \Bc, \Ec}\) is closed. 
   Then all leaves are \(\unsat\). 
   For the base case, suppose \(c'\) is the deepest leaf.
   Then it is \unsat because of our assumption.

   For the inductive case, suppose all children of a configuration 
   \(C\) are unsatisfiable. 
   Then \(C\) must be unsatisfiable because if it is not, 
   then from Lemma~\ref{lem:tables_eq_satisfiable} 
   one of its children is satisfiable which is a contradiction. 

   We conclude that \(C\) is \(\tablesTheory\)-unsatisfiable. 
\end{proof}

\subsection{Proof of Proposition~\ref{prop:solution}}

\solution*

%

To prove Proposition~\ref{prop:solution}, we build a model \(\I\) of \(\tablesTheory\)
that satisfies \(C\) by construction.
Then, we can argue that \(C_0\) is satisfiable by Corollary~\ref{cor:root_is_sat}.

Consider a saturated configuration $C = \conf{\Ac, \Bc, \Ec}$.
We define an interpretation \(\I\) that satisfies \(\Ac \cup \Bc \cup \Ec\).
The following definitions completely define the model $\I$.
Concretely,
\Cref{def:domains} defines the domains of $\I$;
\Cref{def:mc-fun} defines the function symbols of $\tablesSig$;
\Cref{def:mc-int-elem} defines the interpretations of variables of sort \(\int\) and \(\eleSort\);
Finally, \Cref{def:mc_bag_terms} defines the interpretations of variables of sort \(\ms\).
We start by assigning domains to the sorts of $\tablesTheory$.

\begin{definition}[Model construction: domains] \label{def:domains}
\ 

  \begin{enumerate}
    \item $\I(\sint)=\mathbb{Z}$, the set of integers.
    \item $\I(\eleSort)$ is some countably infinite set.
    \item $\I(\ms)$ is the set of finite bags whose elements are taken from $\I(\eleSort)$.
  \end{enumerate}
\end{definition}
%

The exact identity of $\eleSort$ is not important, and so we set
its elements to be arbitrary.
In contrast, its cardinality is important, and is set to be infinite.

\begin{definition}[Model construction: function symbols]
  \label{def:mc-fun}
  The symbols shown in Figure~\ref{fig:tables_sig} are interpreted in such a way that
  $\I$ is both a \(\nth\)-interpretation and a $\tablesTheory$-interpretation,
  as described in Figure~\ref{fig:tables_sig} and in Section~\ref{sec:tables_theory}.
\end{definition}

\noindent
Now we assign values to variables of sort \(\int\) and \(\eleSort\).
\begin{definition}[Model construction: $\sint$ and $\eleSort$ variables]
  \label{def:mc-int-elem}
\ 

  \begin{enumerate}
    \item\label{item:mc-arith} Let $\arithmodel$ be a \(\nth\)-interpretation with $\arithmodel(\sint)=\mathbb{Z}$
    that satisfies \(\Ac\). Then
    $\I(x):=\arithmodel(x)$ for every variable $x$ of sort $\sint$.
    \item\label{item:mc-elem} 
    We consider equivalence classes defined as follows:
    two terms \(t_1, t_2\) are in the same equivalence class if
    \(\Bc \models_{\tupleSort} t_1 \teq t_2\).
    Let \(a_1,a_2,\ldots\) be an enumeration of \(\I(\eleSort)\), and let
    \(e_1,\ldots,e_n\) be an enumeration of the equivalence classes of
    \(\bagEq{\Bc}\) whose variables have sort \(\eleSort\).
    Then, for every $i\in [1,n]$ and every variable $x\in e_i$, $\I(x):=a_i$.        
    %
    %
    %
  \end{enumerate}
\end{definition}
\noindent
We now turn to assigning values to variables of sort \(\ms\).
We interpret them as a set of pairs \(\tup{e, n}\)
where \(e \in I(\eleSort)\) and  \(n\) is a positive integer.
We use \(\cup\) to denote the union operation over sets.
Likewise, we define the disjoint union \(\uplus\) of these sets as follows:
\begin{align}
  u \uplus v & := \; \{\tup{e, n} \mid \tup{e, n} \in u, \tup{e, m} \notin v \text{ for all } m \} \; \cup \nonumber  \\
             & \qquad \{\tup{e, n} \mid \tup{e, n} \in v, \tup{e, m} \notin u \text{ for all } m \} \; \cup \nonumber \\
             & \qquad \{\tup{e, n} \mid \tup{e, n_1} \in u, \tup{e, n_2} \in v, n = n_1 + n_2  \}
  \label{eq:disjoint_union}
\end{align}
\begin{definition}[Model construction: \(\ms\) terms]
  \label{def:mc_bag_terms}
  For any variable \(s \in \vars(\Bc)\) of sort \(\ms\),
  and every \(e\) of sort \(\eleSort\),
  \begin{align}
    \I(s) & := \{\tup{\I(e), \I(\m{e}{s})} \mid {\m{e}{s} \in \ter{\Bc^*}}, \I(\m{e}{s}) > 0 \}. \label{eq:i_of_s}
  \end{align}
  There is no cycle in~\ref{eq:i_of_s} since
  \(\I(\m{e}{s})\) is an integer that has already been assigned a numeral
  in Definition~\ref{def:mc-int-elem}.
  It does not matter which element is picked 
  from its equivalence class since \(e_1 \teq e_2\) implies \(\tup{\I(e_1), \I(\m{e_1}{s})} = \tup{\I(e_2), \I(\m{e_2}{s})}\)
  from Definition~\ref{def:mc-int-elem} and~\ref{eq:b_star}. 
  %
  Equation~\ref{eq:i_of_s} is well-defined as we prove below that 
  \(s \teq t \) implies \(\I(s) = \I(t)\). 

  %
  We define the interpretation of other terms inductively as follows:
  \begin{align*}
    \I(\bempty{\eleSort}) & := \; \{\}                                                                                           \\
    \I(\bag(e, n))     & := \; \{\tup{\I(e), \I(n)} \mid \I(n) > 0 \}                                                               \\
    \I(s \sqcup t)     & := \; \{\tup{e, n} \mid \tup{e, n} \in \I(s), \tup{e, m} \notin \I(t) \text{ for all } m \} \; \cup  \\
                       & \qquad \{\tup{e, n} \mid \tup{e, n} \in \I(t), \tup{e, m} \notin \I(s) \text{ for all } m \} \; \cup \\
                       & \qquad \{\tup{e, n} \mid \tup{e, n_1} \in \I(s), \tup{e, n_2} \in \I(t), n = \max(n_1, n_2)  \}      \\
    \I(s \squplus t)   & := \; \I(s) \uplus \I(t)                                                                             \\
    \I(s \sqcap t)     & := \; \{\tup{e, n} \mid \tup{e, n_1} \in \I(s), \tup{e, n_2} \in \I(t), n = \min(n_1, n_2)  \}       \\
    \I(s \setminus t)  & := \; \{\tup{e, n} \mid \tup{e, n} \in \I(s), \tup{e, m} \notin \I(t)\text{ for all } m  \} \; \cup  \\
                       & \qquad \{\tup{e, n} \mid \tup{e, n_1} \in \I(s), \tup{e, n_2} \in \I(t), n = n_1 - n_2, n > 0  \}    \\
    \I(s \dsetminus t) & := \; \{\tup{e, n} \mid \tup{e, n} \in \I(s), \tup{e, m} \notin \I(t)\text{ for all } m \}           \\
    \I(\setof(s))      & := \; \{\tup{e, 1} \mid \tup{e, n} \in \I(s)\}                                                       \\
    \I(s \product t)   & := \; \{\tup{(e_1, e_2), n} \mid \tup{e_1, n_1} \in \I(s), \tup{e_2, n_2} \in \I(t), n = n_1 \times n_2\}         \\       
    \I(\filter(p, s))  & := \; \{\tup{e, n} \mid \tup{e, n} \in \I(s), \I(p)(e)\}                                             \\
    \I(\map(f, s))     & := \; \biguplus_{\tup{e, n} \in \I(s)}\{\tup{\I(f)(e), n}\}                                          \\
    %
  \end{align*}
where \((e_1, e_2)\) in \(\I(s \product t)\) is a shorthand for the concatenation of elements of tuple \(e_1\) and 
elements of tuple \(e_2\), and
\( e_{1_{i_k}}, e_{2_{j_k}}\) are elements at positions \(i_k, j_k\)
in tuples \(e_1, e_2\) respectively. 
The interpretation of \(s \tabjoin_{i_1j_1 \cdots i_pj_p} t\) and \(\tableProject_{i_1, \dots, i_k}(s)\) are skipped since 
the former is reduced to a filter of a product, and the latter is reduced to a map term. 
\end{definition}

\begin{proposition}[\(\I\) is a model]  \label{prop:model_proof}
  If a configuration \(C = \conf{\Ac, \Bc, \Ec}\) is saturated,
  then the interpretation \(\I\) defined in Definitions \ref{def:domains} to \ref{def:mc_bag_terms}
  satisfies all constraints in \(\Ac \cup \Bc \cup \Ec\).
\end{proposition}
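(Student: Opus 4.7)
The plan is to verify separately that $\I$ satisfies $\Ac$, $\Ec$, and $\Bc$, in that order, exploiting the saturation hypothesis for each. For $\Ac$, satisfaction is essentially by construction of $\I$ in Definition~\ref{def:mc-int-elem}: since \ruleAConf does not apply we have $\Ac \not\models_{\nia} \bot$, so the witnessing arithmetic model $\arithmodel$ exists, and $\I$ agrees with $\arithmodel$ on all integer-sorted terms, including multiplicity terms. For $\Ec$, saturation of \ruleEConf similarly guarantees some element-theory model. The delicate point is that $\I$ assigns element variables via the $\Bc$-induced equivalence classes rather than via an arbitrary $\Ec$-model; here I would appeal to saturation under \ruleBEProp (which forces every element equality derivable in $\Bc^*$ into $\Ec$) and \ruleEProp (which forces $\Ec$ to decide the polarity of every pair of element terms of the same sort), so that the $\Bc$-classes refine the $\Ec$-classes and picking distinct values from the infinite set $\I(\eleSort)$ yields a consistent assignment.

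For $\Bc$, I would split atoms into equalities, disequalities, and the membership/subset cases. The last two have already been rewritten to equalities by the simplification rules in Figure~\ref{fig:rewriting}, so it suffices to handle $s \teq t$ and $s \tneq t$. For tuple equalities I would argue by structural induction using the congruence properties built into the closure~$\Bc^*$ (cf.\ equation~(\ref{eq:b_star})). For bag equalities the core claim is that for every term $\m{e}{s} \in \ter{\Bc^*}$, the multiplicity of $\I(e)$ in $\I(s)$ as given by Definition~\ref{def:mc_bag_terms} coincides with $\arithmodel(\m{e}{s})$, and moreover that this value is independent of the chosen representatives of the equivalence classes of $e$ and $s$; this well-definedness is exactly what the second and third clauses defining $\Bc^*$ guarantee. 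Then for each bag operator I would read off the semantics from the corresponding rule: e.g.\ if $s \teq t \squplus u \in \Bc^*$, saturation of \ruleDisjointUnion forces $\m{e}{s} \teq \m{e}{t} + \m{e}{u}$ in $\Ac$ for every relevant $e$, matching the $\uplus$ clause in Definition~\ref{def:mc_bag_terms}; analogous arguments handle $\sqcup, \sqcap, \setminus, \dsetminus, \setof, \bag, \bempty{\eleSort}$, and further the table rules for $\product$ and $\tabjoin$ and the higher-order rules for $\filter$ and injective $\map$. For a disequality $s \tneq t$, saturation of \ruleBagDisequality produces a fresh variable $w$ with $\m{w}{s} \tneq \m{w}{t}$ in both $\Bc$ and $\Ac$, which, evaluated under $\I$, yields a concrete element distinguishing $\I(s)$ from $\I(t)$.

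The hard part will be the non-injective map case. There, the interpretation of $\map(f,t)$ given by Definition~\ref{def:mc_bag_terms} is a disjoint union over the elements of $\I(t)$, and establishing that it agrees with $\arithmodel(\m{e}{\map(f,t)})$ requires that the auxiliary symbols $\delem$, $\elementIndex$, and $\mapSum$ introduced by \ruleBagMapUpNonInjectiveUp and \ruleBagMapUpNonInjectiveDown have been fully propagated, together with the quantified axiom encoding of equation~(\ref{eq:map_multiplicity}). I would factor this into a separate lemma stating that in a saturated configuration the arithmetic values of $\mapSum(e,t,\delem(t))$ indeed sum the multiplicities of the preimage of $e$ under $f$, so that the two sides of the equation line up term-by-term.
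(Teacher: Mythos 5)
Your proposal follows essentially the same route as the paper's proof: satisfaction of $\Ac$ and $\Ec$ by construction of $\I$, a case analysis on bag operators in which saturation of each rule forces the corresponding multiplicity constraints into $\Ac$ so that $\arithmodel$ makes $\I$ agree with the compositional semantics of Definition~\ref{def:mc_bag_terms}, the fresh-witness argument via \ruleBagDisequality for disequalities, and a separate treatment of non-injective $\map$ through $\delem$, $\elementIndex$, $\mapSum$ and the quantified encoding of equation~(\ref{eq:map_multiplicity}). Your explicit attention to the consistency between the $\Bc$-induced element classes and the $\Ec$-model (via \ruleBEProp and \ruleEProp) is if anything more careful than the paper, which dismisses the $\Ac \cup \Ec$ part as clear by construction.
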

\begin{proof}
  It is clear from construction of \(\I\) that it satisfies all constraints in \(\Ac \cup \Ec\).
  It also satisfies (dis)equalities in \(\Bc\) of terms with sorts \(\Ec\) and  \(\int\) since
  these constraints are in \(\Ac \cup \Ec\).
  So we focus here on (dis)equalities  in \(\Bc\) of terms with \(\ms\) sorts.
  We assume for every non-variable bag term \(t' \in \ter{\Bc^*}\),
  there exists a variable \(t\) such that \(t \teq t' \in \Bc^*\).
  This assumption is without loss of generality as any set of constraints
  can easily be transformed into an equisatisfiable set satisfying the assumption
  by the addition of fresh variables and equalities.
  %
  %
  For constraints \(s \teq t\), we prove by induction
  on the structure of \(t\) that \(\I(s) = \I(t)\).
  We show this for each case of \(t\).
  In the proofs below, we assume \(s, t, u\) are variables.
  %
  %
  \begin{enumerate}[leftmargin=*]
    \item
          \(s \teq t\) where \(t\) is a variable.
          The proof is by contradiction. 
          Suppose \(\I(s) \neq \I(t)\).
          This means there exists a pair \(\tup{\I(e), \I(\m{e}{t})} \) in \(\I(t)\)
          and not in \(\I(s)\) or vice versa.
          We prove the first case, and the proof for the second case is similar.
          Suppose \(\tup{\I(e), \I(\m{e}{t})} \in \I(t)\) and
          \(\tup{\I(e), \I(\m{e}{t})} \notin \I(s)\).
          From the definition of \(\I(t)\), we have \(\m{e}{t}\in \ter{\Bc^*}\)
          such that \(\I(\m{e}{t})  > 0\).
          Since \(s \teq t\), from the definition of \(\Bc^*\) in~\ref{eq:b_star}
          we have \(\m{e}{s}\in \ter{\Bc^*}\) and \(\m{e}{s} \teq \m{e}{t} \in \Bc^*\).
          Now the constraint \(\m{e}{s} \teq \m{e}{t} \in \Ac\), because otherwise
          \(C\) is not saturated and we can apply rule \ruleBAProp.
          Therefore, \(\I(\m{e}{s}) = \I(\m{e}{t})\).
          Now from the definition of \(\I(s)\) we have
          \(\tup{\I(e), \I(\m{e}{s})} = \tup{\I(e), \I(\m{e}{t})} \in \I(s)\)
          which contradicts \(\tup{\I(e), \I(\m{e}{t})} \notin \I(s)\).
          %

    \item
          \(s \teq \bempty{\eleSort}\).
          Suppose by contradiction that \(\I(s) \neq \I(\bempty{\eleSort}) = \{\}\).
          This means there exists a pair \(\tup{\I(e), \I(\m{e}{s})} \in \I(s)\).
          The constraint \(\m{e}{s} \teq 0 \in \Ac\) because otherwise,
          \(C\) is not saturated, and we can apply rule \ruleBagEmpty to add it to \(\Ac\).
          This means \(\I(\m{e}{s}) = 0\) and therefore,
          \(\tup{\I(e), \I(\m{e}{s})} \notin \I(s) \) which contradicts our assumption
          \(\tup{\I(e), \I(\m{e}{s})} \in \I(s)\).
          \newline
    \item
          \(s \teq \bag(e, n)\).
          Here we have two cases depending on whether \(1 \leq \I(n)\):
          \begin{enumerate}
            \item \(\I(n) \leq 0\) which means \(\I(\bag(e,n)) = \{\}\).
                  It must be the case that \(n \leq 0 \in \Ac\) and  \(s \teq \bempty{\eleSort} \in \Bc\)
                  because otherwise we can
                  apply rule \ruleBagConstructorOne to add these constraints.
                  From \(s \teq \bempty{\eleSort} \in \Bc\) we have already proven \(\I(s) = \{\}\)
                  in the previous step.
                  We can not apply the right branch of the rule because
                  it adds the constraint \(1 \leq n \) to \(\Ac\) which is not satisfied by \(\I(n) \leq 0\).
            \item \(1 \leq \I(n)\) which means \(\I(\bag(e,n)) = \{\tup{\I(e), \I(n)}\}\).
                  Suppose \(\I(s) \neq \{\tup{\I(e), \I(n)}\}\).
                  This implies either \(\tup{\I(e), \I(n)} \notin \I(s)\)
                  or there exists an element \(x\) such that \\
                  \(\tup{\I(x), \I(\m{x}{s})} \in \I(s)\)
                  and \(\tup{\I(x), \I(\m{x}{s})} \notin \{\tup{\I(e), \I(n)}\}\)
                  \begin{itemize}
                    \item Suppose \(\tup{\I(e), \I(n)} \notin \I(s)\).
                          Note that this condition also includes the case \(\tup{\I(e), m} \in \I(s), \I(n) \neq m\).
                          We have the constraints \(1 \leq n , \m{e}{s} \teq n \) in \(\Ac\) because
                          otherwise \(C\) is not saturated, and we can apply the \ruleBagConstructorOne.
                          This means \(\I(\m{e}{s}) = \I(n)\) which implies 
                          \(\tup{\I(e), \I(n)} \in \I(s)\). 
                          This contradicts our assumption.
                          We can not apply the left branch because it adds the constraint \(n \leq 0\)
                          which is not satisfied by  \(1 \leq \I(n)\).
                    \item Suppose there exists an element \(x\)
                          such that
                          \(\tup{\I(x), \I(\m{x}{s})} \in \I(s)\) and \\
                          \(\tup{\I(x), \I(\m{x}{s})} \notin \{\tup{\I(e), \I(n)}\}\).
                          From the definition of \(\I(s)\) we have \(\m{x}{s} \in \ter{\Bc^*}\),
                          and hence \(x \in \ter{\Bc^*}\).
                          We also know \(e \in \ter{\Bc^*}\) because the constraint \(s \teq \bag(e, n) \in \Bc\).
                          Now either \(x \teq e \in \Bc\) or  \(x \tneq e \in \Bc\)
                          because otherwise we can apply the splitting rule
                          \ruleEProp to \(C\).
                          Suppose \(x \teq e \in \Bc\) which means \(\I(x) = \I(e)\).
                          This implies \(\tup{\I(e), \I(\m{e}{s})} \notin \{\tup{\I(e), \I(n)}\}\).
                          This means \(\I(\m{e}{s}) \neq \I(n)\).
                          This implies
                          \(\tup{\I(e), \I(n)} \notin \I(s)\) which was proved impossible in the previous case.
                          Now suppose \(x \tneq e \in \Bc\).
                          Then, it must be the case that \(\m{x}{s}\teq 0 \in \Ac\) because otherwise we can
                          apply rule \ruleBagConstructorTwo to \(C\).
                          This implies \(\I(\m{x}{s}) = 0\) which contradicts our assumption that
                          \(\tup{\I(x), \I(\m{x}{s})} \in \I(s)\) which requires
                          \(\I(\m{x}{s}) > 0\).
                  \end{itemize}
          \end{enumerate}
          We conclude \(\I(s) = \I(\bag(e,n))\).

          %
          %
    \item
          \(s \teq \setof(t) \). We want to show
          \(\I(s) = \I(\setof(t))\).
          We know from the definition of \(\I\) that
          \(\I(\setof(t)) = \{\tup{\I(e), 1} \mid \tup{\I(e), n} \in \I(t)\}\).
          \begin{enumerate}
            \item
                  Suppose  \(\tup{\I(e), 1} \in \I(\setof(t))\).
                  %
                  Since \(t\) is a variable,  we have  \(\m{e}{t} \in \ter{\Bc^*} \) and
                  \(\I(\m{e}{t}) = n \geq 1\).
                  Now \(\Ac\) contain the constraints \(1 \leq\m{e}{t}, \m{e}{s} \teq 1 \),
                  because otherwise we can apply rule \ruleSetof.
                  The right branch can not be applied because it adds the constraint \(\m{e}{t} \leq 0 \)
                  to \(\Ac\) which is not satisfied by \(\I(\m{e}{t}) = n > 0\).
                  Therefore, \(\m{e}{s} \teq 1\) and \(\I(\m{e}{s}) = 1\).
                  Therefore, \(\tup{\I(e), 1} \in \I(s) \).

            \item
                  For the other direction suppose \(\tup{\I(e), n} \in \I(s)\).
                  From the definition of \(\I(s)\), we have \(\m{e}{s} \in \ter{\Bc^*} \) and
                  \(\I(\m{e}{s}) = n > 0\).
                  Now \(\Ac\) contains the constraints
                  \(1 \leq \m{e}{t}, \m{e}{s} \teq 1\) because otherwise we can apply rule \ruleSetof
                  and add them to \(\Ac\) using the left branch.
                  We can not apply the right branch because it adds the constraint
                  \(\m{e}{s} \teq 0\) which is not satisfied by \(\I(\m{e}{s}) = n > 0\).
                  Therefore, \(\I(\m{e}{s}) = n = 1\), and \(1 \leq \I(\m{e}{t})\).
                  Hence, \(\tup{\I(e), \I(\m{e}{t})} \in \I(t)\)
                  which means \(\tup{\I(e), 1} \in \I(\setof(t))\).
          \end{enumerate}
          We conclude \(\I(s) = \I(\setof(t))\).

    \item
          \(s \teq t \squplus u\).
          We want to show \( \I(s) = \I(t) \uplus \I(u) \).
          \begin{enumerate}
            \item%
            Suppose  \(\tup{\I(e), n} \in \I(t \squplus u) = \I(t) \uplus \I(u)\).
            %
            Since \(t, u\) are variables,
            \(\ter{\Bc^*}\) either contains \(\m{e}{t}\) or  \(\m{e}{u} \).
            Either way, \(\Ac\) contains the constraint
            \(\m{e}{s} \teq \m{e}{t} + \m{e}{u}\) because otherwise
            we can apply rule \ruleDisjointUnion and add it to \(\Ac\).
            Furthermore, the three multiplicity terms are in \(\ter{\Bc^*}\) because otherwise
            we can apply rule \ruleAProp to add the constraint to \(\Bc\).
            This means they are assigned values by the interpretation \(\I\)
            in \(\I(s)\), \(\I(t)\), and \(\I(u)\).
            Suppose  \(\I(\m{e}{t}) = n_1\), \( \I(\m{e}{u}) = n_2\)
            such that \(n = n_1 + n_2\).
            Now both \(n_1 , n_2\) are non-negative because rule \ruleNonNegative is
            saturated and \(0 \leq \m{e}{t}, 0 \leq \m{e}{u} \in \Ac\).
            Since \(n >0\) we have three cases that correspond to the three
            disjoint sets in the definition of \(\I(t) \uplus \I(u)\) in~\ref{eq:disjoint_union}:
            \((n = n_1, n_2 = 0)\),
            \((n = n_2, n_1 = 0)\) and
            \((n = n_1 + n_2, n_1, n_2 > 0)\).
            In all cases,
            we have \(\I(\m{e}{s})  = \I(\m{e}{t})+\I(\m{e}{u}) = n\).
            Hence, \(\tup{\I(e), n} \in \I(s) \).
            \item
                  Suppose  \(\tup{\I(e), n} \in \I(s)\).
                  From the definition of \(\I(s)\) we have
                  \(\m{e}{s} \in \ter{\Bc^*}\).
                  Now \(\Ac\) has the constraint
                  \(\m{e}{s} \teq \m{e}{t} + \m{e}{u}\)
                  because otherwise we can apply rule \ruleDisjointUnion
                  and add it to \(\Ac\).
                  The arithmetic solver must satisfy this constraint, and
                  we have \(\I(\m{e}{s})  = \I(\m{e}{t})+\I(\m{e}{u}) = n\).
                  Suppose  \(\I(\m{e}{t}) = n_1\),
                  \( \I(\m{e}{u}) = n_2\) such that \(n = n_1 + n_2\).
                  We know \(n_1, n_2\) are non-negative because of the saturation of
                  rule \ruleNonNegative.
                  There are 3 cases:  \((n = n_1, n_2 = 0)\),
                  \((n = n_2, n_1 = 0)\) and
                  \((n = n_1 + n_2, n_1, n_2 > 0)\).
                  In all cases we have
                  \(\tup{\I(e), n} \in \I(t \squplus u)\) from the definition of \(\I(t \squplus u)\).
          \end{enumerate}
          We conclude \(\I(s) = \I(t) \uplus \I(u)\).
          \newline
    \item  The proofs for other cases \(\sqcup, \sqcap, \setminus, \dsetminus, \product\)
          are quite similar to the proof of \(\squplus\).
          \newline
    \item
          \(s \teq \filter(p, t)\).
          Here we assume all element constraints of the form  \(p(e)\) or \(\neg p(e)\) are satisfied by \(\I\).
          \begin{enumerate}
            \item
                  Suppose  \(\tup{\I(e), n} \in \I(\filter(p, t))\).
                  From the definition of \(\I(\filter(p, t))\)
                  we have \(\tup{\I(e), n} \in \I(t) \)
                  and \(\I(p)(\I(e))\).
                  From the definition of \(\I(t)\), we have
                  \(\m{e}{t} \in \ter{\Bc^*}, \I(\m{e}{t}) > 0\).
                  Now \(\m{e}{s} \teq \m{e}{t} \in \Ac,p(e) \in \Ec\)
                  because otherwise we can apply rule \ruleBagFilterUp.
                  We can not apply the second branch because the constraint
                  \(\neg p(e)\)  is not satisfied by \(\I\).
                  Hence,  \(\I(\m{e}{s}) = \I(\m{e}{t}) = n \)
                  and \(\tup{\I(e), n} \in \I(s)\).
            \item  Suppose  \(\tup{\I(e), n} \in \I(s)\).
                  From the definition of \(\I(s)\),
                  \(\m{e}{s} \in \ter{\Bc^*}\) and \(\I(\m{e}{s}) = n > 0\).
                  Now we have \(\m{e}{t} \teq \m{e}{s}\in \Ac, p(e) \in \Ec \)
                  because otherwise we can apply rule \ruleBagFilterDown.
                  The constraints in \(\Ec\) are satisfied because \(C\) is saturated,
                  and we can not apply rule \ruleEConf.
                  Hence, \( n = \I(\m{e}{s}) = \I(\m{e}{t})\)
                  and \(\I(p)(e)\) holds.
                  Therefore, \(\tup{\I(e), n} \in \I(\filter(p, t))\).
          \end{enumerate}
          We conclude \(\I(s) = \I(\filter(p, t))\).
          \newline
    \item
          \(s \teq \map(f, t)\).
          We have two cases for \(f\):
          
          \begin{enumerate}
            \item           
           \(f\) is injective.
          We want to show \(\I(s) = \I(\map(f, t))\).
          \begin{enumerate}
            \item
                  Suppose  \(\tup{\I(e), n} \in \I(\map(f, t))\).
                  From the definition of \(\I(\map(f, t))\),
                  there exists \(x\) such that \(\I(f)(\I(x)) = \I(f(x)) = \I(e)\),
                  and \(\tup{\I(x), n} \in \I(t) \) since \(f\) is injective
                  and \(\I(\m{x}{t})= n\).
                  This implies \(\m{x}{t} \in \ter{\Bc^*}\).
                  Now \(\m{f(x)}{s} \in \ter{\Bc^*}\), and
                  \(\I(\m{x}{t}) \leq \I(\m{f(x)}{s}) \in \Ac\)
                  because otherwise we can apply rule \ruleBagMapUp.
                  Furthermore,
                  \(\m{f(x)}{s} \teq \m{w}{t} \in \Ac, f(w) \teq f(x) \in \Ec\)
                  for some fresh \(w\) because otherwise we can apply rule \ruleBagMapDownInjective.
                  Since \(f\) is injective, it must be the case that \(w \teq x \in \Bc\) because
                  otherwise we can apply the splitting rule \ruleEProp and add it to \(\Bc\).
                  We could not have applied the right branch because \(f\) is injective and
                  \(w \tneq x\) implies \(f(w) \tneq f(x)\) which would lead to a conflict.
                  Now \(w \teq x\) implies \(\I(\m{w}{t}) = \I(\m{x}{t}) = n\)
                  from~\ref{eq:b_star}.
                  This means
                  \(\I(\m{f(x)}{s}) = \I(\m{w}{t}) =  n\)
                  and hence
                  \(\tup{\I(e), n} \in \I(s)\).

            \item
                  Suppose  \(\tup{\I(e), n} \in \I(s)\).
                  From the definition of \(\I(s)\),
                  we have \(\m{e}{s} \in \ter{\Bc^*}\).
                  Now
                  \(\m{e}{s} \teq \m{w}{t} \in \Ac, f(w) \teq e \in \Ec\)
                  for some fresh \(w\) because otherwise we can apply rule \ruleBagMapDownInjective.
                  This means
                  \(\I(\m{e}{s}) = \I(\m{w}{t}) =  n\)
                  and, therefore,
                  \(\tup{\I(w), n} \in \I(t)\).
                  Since \(f\) is injective, \(\tup{\I(f)(\I(w)), n} \in \I(\map(f, t))\)
                  which means
                  \(\tup{\I(e), n} \in \I(\map(f, t))\).                  
          \end{enumerate}
          \item \(f\) is not injective. 
          Here, we assume the quantifier subsolver terminated satisfying 
          the formula below which constrains the functions 
          \(\elementIndex\) and \(\mapSum\) derived from constraint~\ref{eq:map_multiplicity}. 
          We also assume that all instantiated terms, and constraints related to \(\tablesTheory\) have been 
          added to configuration \(C\). 
          %
          \begin{align}
                   \begin{array}{ll}
             \forall \, x, i .\, & 1 \leq i \leq \delem(t) \wedge i \teq \elementIndex(x, t) \rightarrow \\
               &\quad \m{x}{t} \geq 1 \;  \\                        
               & \wedge  \left( \begin{pmatrix}
                 e \teq f(x) \;\wedge            \\
                 \mapSum(e, t, i) \teq \mapSum(e, t, i-1) + \m{x}{t}      
               \end{pmatrix} \vee
               \begin{pmatrix}
                 e \tneq f(x) \;\wedge            \\      
                 \mapSum(e, t, i) \teq \mapSum(e, t, i-1)
               \end{pmatrix} \right) \\
               & \wedge \; \forall \, y, j \; .\;  i < j  \leq \delem(t)  \wedge j \teq \elementIndex(y, t)  
                                   \rightarrow x \tneq y                
                    \end{array} \label{eq:map_reduction}
            \end{align}
          %
          \begin{enumerate}
            \item
            Suppose  \(\tup{\I(e), n} \in \I(\map(f,t))\).  
            We want to show that \(\tup{\I(e), n} \in \I(s)\).
            %
            %
            Recall \(\I(\map(f, t)) := \; \biguplus_{\tup{e, n} \in \I(t)}\{\tup{\I(f)(e), n}\}\).
            Then, there exists at least a pair \(\tup{y, n_y} \in \I(t)\)
            such that \(\I(f)(y) = \I(e)\).
            From the definition of \(\I(t)\) there exists 
            \(x\) such that \(\m{x}{t} \in \ter{\Bc^*}\),
            \(\I(x) = y\), and \(\I(\m{x}{t}) = n_y\).
            Now \(\m{f(x)}{s} \in \ter{\Bc^*}\), and
             \(\m{x}{t} \leq \m{f(x)}{s} \in \Ac\)
             because otherwise we can apply rule \ruleBagMapUp.
            The two terms \(f(x), e \in \ter{\Bc^*}\)
            are assigned the same value because 
            \(\I(f(x)) = \I(f)(\I(x)) = \I(f)(y) = \I(e)\).
            From Definition~\ref{def:mc-int-elem} they 
            should be in the same equivalence class, which 
            implies 
            %
            \(f(x) \teq e \in \Bc^*\). 
            %
            From the definition of \(\Bc^*\) in~\ref{eq:b_star}, we 
            have \(\m{f(x)}{s} \in \ter{\Bc^*}\) and \(f(x) \teq e\) which implies 
            \(\m{e}{s} \in \ter{\Bc^*}\).
            Now,
            \(\mapSum(e, t, \delem(t)) \teq \m{e}{s} \in \Ac\)
            because otherwise we can apply rule \ruleBagMapUpNonInjectiveDown.
            We prove by contradiction that \(\I(\m{e}{s}) = n \). 
            Suppose \(\I(\m{e}{s}) \neq n \), and hence, \(\I(\mapSum(e, t, \delem(t)) ) \neq n \). 
            Let
            \begin{align*}
              U = \{ &\tup{\I(x), \I(\m{x}{t})} \mid \\
              &\I(f(x)) \teq \I(e), i \teq \elementIndex(x, t), 1 \leq i \leq \delem(t),
              \m{x}{t} \in \ter{\Bc^*}\}.  
            \end{align*} 
            Then we have:
            \begin{align*}
              \I(\m{e}{s}) &= \I(\mapSum(e, t, \delem(t))) \\
              & = \sum\limits_{\tup{\I(x), \I(\m{x}{t})} \in U}{\I(\m{x}{t})}
              \neq 
              n  
              =  \sum\limits_{               
                \tup{y, n_y} \in \I(t) , 
                \I(f)(y) = \I(e)}
                n_y.
            \end{align*}            
            There are two possible reasons for the disequality: 
            \begin{enumerate}
              \item There exists element \(y\) such that 
              \(\tup{y, n_y} \in \I(t), \I(f)(y) = \I(e) , \tup{y, n_y} \not\in U\). 
              We show that this can not happen because \(C\) is saturated.
              \(\tup{y, n_y} \in \I(t)\) implies there exists \(x\) such that \(\m{x}{t} \in \ter{\Bc^*}\),
            \(\I(x) = y,\I(\m{x}{t}) = n_y\).
              Since we can not apply rule \ruleBagMapUpNonInjectiveUp,
              there exists \(1 \leq i \leq \delem(t)\)
              such that 
              \(i \teq \elementIndex(x, t)\).
              Furthermore, \(\I(f(x)) = \I(f)(\I(x)) = \I(f)(y) = \I(e)\).
              This implies  
              \(\tup{y, n_y} = \tup{\I(x), \I(\m{x}{t})} \in U\), which is a contradiction.
              \item There exists \(\tup{\I(x), \I(\m{x}{t})} \in U\) such that
              \(\I(f)(\I(x)) = \I(e)\) but \\
              \(\tup{\I(x), \I(\m{x}{t})} \not\in \I(t)\).
              This means there exists \(i\) such that 
              \(i \teq \elementIndex(x,t), 1 \leq i \leq \delem(t),
              \m{x}{t} \in \ter{\Bc^*}            
              \).
              We have
              \(\m{x}{t} \geq 1\) 
              from~\ref{eq:map_reduction},
              which implies \(\I(\m{x}{t}) > 0\).
              This means 
              \(\tup{\I(x), \I(\m{x}{t})} \in \I(t)\)
              from the definition of \(\I(t)\), a contradiction. 
            \end{enumerate}
            Therefore,  \(\I(\m{e}{s}) = n \) and hence 
            \(\tup{\I(e), n} \in \I(s)\).
            
            \item The proof for the other case is similar. 
          \end{enumerate}
        \end{enumerate}
          We conclude \(\I(s) = \I(\map(f, t))\).
  \end{enumerate}
  We now switch to disequality constraints \(s \tneq t\). 
  Without loss of generality, 
  we assume \(s, t\) are bag variables, because if they are not, 
  we can convert the constraint to an equisatisfiable one 
  that uses fresh variables \(s' \teq s \wedge t' \teq t \wedge
  s' \tneq t'\). 
   Then \(\I(s') = \I(s), \I(t') = \I(t)\).
  If \(s \tneq t \in \Bc^*\), we want to prove
  \(\I(s) \neq \I(t)\).
  For the sake of contradiction,  
  suppose \(\I(s) = \I(t)\).
  This means
  \begin{align*}
     & \{\tup{\I(e), \I(\m{e}{s})} \mid {\m{e}{s} \in \ter{\Bc^*}}, \I(\m{e}{s}) > 0 \} = \\
     & \{\tup{\I(e), \I(\m{e}{t})} \mid {\m{e}{t} \in \ter{\Bc^*}}, \I(\m{e}{t}) > 0 \}.
  \end{align*}
  Now there exists an element term \(w\) such that both \(\Ac, \Bc\) contain  the constraint
  \(\m{w}{s} \tneq \m{w}{t}\) because otherwise we can apply rule \ruleBagDisequality.
  This implies \(\I(\m{w}{s}) \neq \I(\m{w}{t})\) and at least one of them is positive
  because of the saturation of rule \ruleNonNegative.
  Without loss of generality, suppose \(\I(\m{w}{s}) > 0 \).
  This means \(\tup{\I(w),\I(\m{w}{s}) } \in \I(s)\)
  and  \(\tup{\I(w),\I(\m{w}{s}) } \notin \I(t),\)
  which contradicts our assumption.
  We conclude \(\I(s) \neq \I(t)\).
\end{proof}

\subsection{Proof of Proposition~\ref{prop:termination_no_map}}
\terminationNoMap*
\begin{proof}
  We show that any tree starts with \(C\) does not grow indefinitely. 
  Let \(C = \conf{\Ac, \Bc, \Ec}\) and 
  %
  %
  %
  \(\ter{\eleSort_c}, \ter{\int_c}, \ter{\ms_c}\)
  be all the finitely-many terms at \(C\) of sorts  \(\eleSort, \int, \ms\) respectively.
  Suppose \(e_c, i_c, b_c\) are the numbers of these terms
  respectively.
  None of the calculus rules introduces new \(\ms\) terms.
  So \(b_c\) remains the same in all derivation trees with root \(C\).
  Only rule \ruleBagDisequality generates new \(\eleSort\) terms, but it is only applied
  once for each disequality constraint.
  At the worst case all bag terms are distinct.
  So \(e_c\) can increase by at most \(b_c^2\) terms
  in all derivation trees.
  Define \(e'_c = e_c + b_c^2\).
  For multiplicity terms, assume each one of the \(e'_c\) element terms
  occurs in a multiplicity term for each bag term at the worst case.
  This means we have at most \(m_c = e'_c \times b_c\) multiplicity terms.
  So the number of integer terms would be at most \(i'_c = O(m_c^2) + i_c\)
  with \(O(m_c^2)\) accounting for terms introduced by binary arithmetic operators.
  We define the order relation \(\succ \) over configurations
  as follows:
  \begin{itemize}
    \item \(c \succ c'\) if \(c \neq \unsat\) and \(c' = \unsat\).
    \item \(c \succ c'\) if \(c \neq \unsat\) and \(c' \neq \unsat\) and
          \(
          \tup{f_1(c), \dots, f_{\terminatingBagCount}(c)} >_{lex}^{\terminatingBagCount}
          \tup{f_1(c'), \dots, f_{\terminatingBagCount}(c')}
          \)
          where \(f_i\) are the ranking functions defined in Figure~\ref{fig:bag_ranking}.
    \item   \(c \not\succ c'\) otherwise.
  \end{itemize}

  \renewcommand{\arraystretch}{1.5}
  \begin{figure}
    \small
    \begin{center}
      \begin{tabular}{l|l|l}
        \toprule
        \(f_i\)    & Rule                    & Definition                                                                          \\
        \midrule
        \(f_1\)    & \ruleBAProp             & \({i'}_c^2 - \card{\{s \teq t \mid s \teq t \in \Ac, s,t:\int \}} \)                   \\
        \(f_2\)    & \ruleBEProp             & \({e'_c}^2 - \card{\{s \teq t \mid s \teq t \in \Ec, s, t :\eleSort \}} \)             \\
        \(f_3\)    & \ruleEProp              & \(
        \begin{matrix}
          {e'_c}^2  - \card{\{s \teq t \mid s \teq t \in \Bc, s, t :\eleSort \}}  
                    - \card{\{s \tneq t \mid s \teq t \in \Bc, s, t :\eleSort \}}
        \end{matrix}
        \)                                                                                                                         \\
        \(f_4\)    & \ruleAProp              & \({i'}_c^2 - \card{\{s \teq t \mid s \teq t \in \Bc, s,t:\int \}} \) \\
        \(f_5\)    & \ruleBagDisequality     & \(b_c^2 - \card{\{\m{w}{s} \tneq \m{w}{t} \mid  \m{w}{s} \tneq \m{w}{t} \in \Ac\}} \)   \\
        \(f_6\)    & \ruleNonNegative        & \(m_c - \card{\{\m{e}{s} \mid 0 \leq \m{e}{s} \in \Ac\}} \)                         \\
        \(f_7\)    & \ruleBagEmpty           & \(e'_c \times b_c - \card{\{\m{e}{s} \mid 0 \teq \m{e}{s} \in \Ac\}} \)             \\
        \(f_8\)    & \ruleBagConstructorOne  & \(
        \begin{array}{ll}
          b_c^2 & - \card{\{s \teq \bempty{\eleSort} \mid s \teq \bempty{\eleSort} \in \Bc, s \teq \bag(e,n) \in \Bc^*\}}   \\
                & - \card{\{s \tneq \bempty{\eleSort} \mid s \tneq \bempty{\eleSort} \in \Bc, s \teq \bag(e,n) \in \Bc^*\}}
        \end{array}
        \)                                                                                                                         \\
        \(f_9\)    & \ruleBagConstructorTwo  & \( m_c
        - \card{\{\m{e}{s} \teq 0 \mid \m{e}{s} \teq 0 \in \Ac, s \teq \bag(e,n)\}} \)                                             \\
        \(f_{10}\) & \ruleDisjointUnion      & \( e'_c \times b_c^3
        - \card{\{\varphi = \m{e}{s} \teq \m{e}{t} + \m{e}{u}  \mid \varphi \in \Ac, s \teq t \squplus u \in \Bc^* \}} \)          \\
        \(f_{11}\) & \ruleMaxUnion           & \( e'_c \times b_c^3
        - \card{\{\varphi = \m{e}{s} \teq \max(\m{e}{t}, \m{e}{u})  \mid \varphi \in \Ac, s \teq t \sqcup u \in \Bc^* \}} \)       \\
        \(f_{12}\) & \ruleBagIntersection    & \( e'_c \times b_c^3
        - \card{\{\varphi = \m{e}{s} \teq \min(\m{e}{t}, \m{e}{u})  \mid \varphi \in \Ac, s \teq t \sqcap u \in \Bc^*  \}} \)      \\
        \(f_{13}\) & \ruleDifferenceSubtract & \(
        \begin{array}{ll}
          e'_c \times b_c^3 & - \card{\{\tup{\varphi_1, \varphi_2} \mid
            \varphi_1 = \m{e}{t} \leq \m{e}{u},
            \varphi_2 = \m{e}{s} \teq 0,
          \varphi_i \in \Ac, s \teq t \setminus u \in \Bc^* \}}                \\
                & - \card{\{\tup{\varphi_1, \varphi_2} \mid
            \varphi_1 = \m{e}{t} > \m{e}{u},
          \varphi_2 = \m{e}{s} \teq \m{e}{t} - \m{e}{u},                       \\
                & \qquad \varphi_i \in \Ac, s \teq t \setminus u \in \Bc^* \}}
        \end{array}
        \)                                                                                                                         \\
        \(f_{14}\) & \ruleDifferenceRemove   & \(
        \begin{array}{ll}
          e'_c \times b_c^3 & - \card{\{\tup{\varphi_1, \varphi_2} \mid
            \varphi_1 = \m{e}{u} \teq 0,
            \varphi_2 = \m{e}{s} \teq \m{e}{t},
          \varphi_i \in \Ac, s \teq t \dsetminus u \in \Bc^* \}} \\
                & - \card{\{\tup{\varphi_1, \varphi_2} \mid
            \varphi_1 = \m{e}{u} \tneq 0,
            \varphi_2 = \m{e}{s} \teq 0, \varphi_i \in \Ac, s \teq t \dsetminus u \in \Bc^* \}}
        \end{array}
        \)                                                                                                                         \\
        \(f_{15}\) & \ruleSetof              & \(
        \begin{array}{l@{~}l}
          e'_c \times b_c^2 & - \card{\{\tup{\varphi_1, \varphi_2} \mid
            \varphi_1 = 1 \leq \m{e}{t},
            \varphi_2 = \m{e}{s} \teq 1,
          \varphi_i \in \Ac, s \teq \setof(t) \in \Bc^* \}} \\
                & {} - \card{\{\tup{\varphi_1, \varphi_2} \mid
            \varphi_1 = \m{e}{t} \leq 0,
            \varphi_2 = \m{e}{s} \teq 0, \varphi_i \in \Ac, s \teq \setof(t) \in \Bc^* \}}
        \end{array}
        \)                                                                                                                         \\
        \(f_{16}\) & \ruleBagFilterUp        & \(
        \begin{array}{ll}
          e'_c \times b_c^2 & - \card{\{\tup{\varphi_1, \varphi_2} \mid
            \varphi_1 = p(e),
            \varphi_2 = \m{e}{s} \teq \m{e}{t},
          \varphi_1 \in \Ec, \varphi_2 \in \Ac, s \teq \filter(p, t) \in \Bc^* \}} \\
                & - \card{\{\tup{\varphi_1, \varphi_2} \mid
            \varphi_1 = \neg p(e),
            \varphi_2 = \m{e}{s} \teq 0,
            \varphi_1 \in \Ec, \varphi_2 \in \Ac, s \teq \filter(p, t) \in \Bc^* \}}
        \end{array}
        \)                                                                                                                         \\
        \(f_{17}\) & \ruleBagFilterDown      & \(
          e'_c \times b_c^2 - \card{\{\tup{\varphi_1, \varphi_2} \mid
          \varphi_1 = p(e),
          \varphi_2 = \m{e}{s} \teq \m{e}{t},
          \varphi_1 \in \Ec, \varphi_2 \in \Ac, s \teq \filter(p, t) \in \Bc^* \}}
        \)                                                                                                                         \\
        \bottomrule
      \end{tabular}
    \end{center}
    \caption{Ranking functions for table rules.}
    \label{fig:bag_ranking}
  \end{figure}

  Each application of any rule would only decrease its corresponding function
  and leave the other functions unchanged.
  In other words, \(f_i(c') = f_i(c) - 1\), and \(f_j(c') = f_j(c) \)
  for \(i, j \in [1, \terminatingBagCount], i \neq j\).
  It is clear that \(c \succ c'\).
  This means the derivation tree is finite, and therefore
  the calculus with the above assumptions is terminating.
\end{proof}

\section{Theory of Finite Relations}\label{sec:appendix_sets}
\begin{figure}[!htbp]
  \scriptsize
  \[
    \begin{array}{l}
      \begin{array}{@{}l@{\quad}l@{\quad}l@{\quad}l@{}}
        \toprule
        \textbf{Symbol} & \textbf{Arity}                                                             & \textbf{SMTLIB}                & \textbf{Description}                          \\
        \midrule       
        \sempty         & \s(\alpha)                                                                 & \verb|set.empty|               & \text{Empty set}                              \\
       \opsingleton{\text{-}} & \alpha \rightarrow \s(\alpha)                                   & \verb|set.singleton|                     & \text{Set constructor}                        \\        
        \sqcup          & \s(\alpha) \times \s(\alpha) \rightarrow \s(\alpha)                        & \verb|set.union|           & \text{Union}                              \\        
        \sqcap          & \s(\alpha) \times \s(\alpha) \rightarrow \s(\alpha)                        & \verb|set.inter|           & \text{Intersection}                           \\
        \setminus       & \s(\alpha) \times \s(\alpha) \rightarrow \s(\alpha)                        & \verb|set.minus| & \text{Difference}                    \\                
        \sqin           & \alpha \times \s(\alpha) \rightarrow \bool                                  & \verb|set.member|              & \text{Member }                                \\
        \sqsubseteq     & \s(\alpha) \times \s(\alpha) \rightarrow \bool                             & \verb|set.subset|              & \text{Subset}                                 \\
        \midrule
        %
        \sigma          & 
        \left(\alpha \rightarrow \bool\right) \times \s(\alpha) \rightarrow \s(\alpha)                & \verb|set.filter|              & \text{Set filter}                             \\
        \pi             & 
        \left(\alpha_1 \rightarrow \alpha_2\right) \times \s(\alpha_1) \rightarrow \s(\alpha_2)        & \verb|set.map|                 & \text{Set map}                                \\
        \midrule
        \tup{ \ldots} &
        \alpha_0 \times \dots \times \alpha_k \rightarrow \tuple(\alpha_0, \ldots, \alpha_k)                  & \verb|tuple|                    & \text{Tuple constructor}                      \\ 
        
        \select_i          & \tuple(\alpha_0, \ldots, \alpha_k) \rightarrow \alpha_i        & \verb|(_ tuple.select i)|    & \text{Tuple selector}                         
        \\[.8ex]
        \tupleProject_{i_1 \ldots i_n}        & 
        \tuple(\alpha_0, \ldots, \alpha_k) \rightarrow \tuple(\alpha_{i_1}, \ldots, \alpha_{i_n})
          & (\verb|_ tuple.proj | i_1 \cdots i_n)    & \text{Tuple projection}                         \\
        \midrule
        \product        & \r(\bm{\alpha}) \times \r(\bm{\beta}) \rightarrow \r(\bm{\alpha}, \bm{\beta}) & \verb|rel.product|           & \text{Relation cross join}                       \\
        \tabjoin_{i_1j_1 \cdots i_pj_p}      & 
          \r(\bm{\alpha}) \times \r(\bm{\beta}) \rightarrow \r(\bm{\alpha}, \bm{\beta}) 
        & (\verb|_ rel.join | i_1j_1 \cdots i_pj_p)             & \text{Relation inner join}
        \\[.8ex]        
        \relationProject_{i_1 \ldots i_n}   &
          \r(\alpha_0, \ldots, \alpha_k) \rightarrow \r(\alpha_{i_1}, \ldots, \alpha_{i_n})
            & (\verb|_ rel.proj | i_1 \cdots i_n)    & \text{Relation projection}                         \\
        \bottomrule
      \end{array}
      \\[4ex]
    \end{array}
  \]
  \caption{
    Signature \(\Sigma_r\) for the theory of relations.
    Here \(\r(\bm{\alpha}, \bm{\beta})\) is a shorthand for \(\r(\alpha_0, \ldots, \alpha_p, \beta_0, \ldots, \beta_q)\)
    when \(\bm{\alpha} = \alpha_0, \ldots, \alpha_p\) and \(\bm{\beta} = \beta_0, \ldots, \beta_q\).
  }
  \label{fig:relations_sig}
\end{figure}

\begin{figure}[!htbp]
  \begin{tabular}{c}
    \inferR[\ruleInterUp]
    {x \opin s \in \Scclosed \quad
      x \opin t \in \Scclosed  \quad
      s \opinter t \in \termsof{\Sc} }
    {\Sc := \cadd{\Sc}{x \opin s \opinter t}}
    \quad
    \inferR[\ruleInterDown]
    {x \opin s \opinter t \in \Scclosed}
    {\Sc := \cadd{\cadd{\Sc}{x \opin s}}{x \opin t}}{}
    \\[4ex]
    \inferR[\ruleUnionUp]
    {x \opin u \in \Scclosed \quad
      u \in \{s,t\}            \quad
      s \opunion t  \in \termsof{\Sc}}
    {\Sc := \cadd{\Sc}{x \opin s \opunion t}}
    \quad
    \inferR[\ruleUnionDown]
    {x \opin s \opunion t \in \Scclosed}
    {\Sc := \cadd{\Sc}{x \opin s}
      \ \parallel\
      \Sc := \cadd{\Sc}{x \opin t}}
    \\[4ex]
    \inferR[\ruleDifferenceUp]
    {x \opin s \in \Scclosed \quad
      s \opsetminus t \in \termsof{\Sc} }
    {\Sc := \cadd{\Sc}{x \opin t}
      \ \parallel\
      \Sc := \cadd{\Sc}{x \opin s \opsetminus t}}
    \quad
    \inferR[\ruleDifferenceDown]
    {x \opin s \opsetminus t \in \Scclosed}
    {\Sc := \cadd{\cadd{\Sc}{x \opin s}}{x \opnotin t}}
    \\[4ex]
    \inferR[\ruleSingleUp]
    {\opsingleton{x} \in \termsof{\Sc}}
    {\Sc := \cadd{\Sc}{x \opin \opsingleton{x}}}
    \quad
    \inferR[\ruleSingleDown]
    {x \opin \opsingleton{y} \in \Scclosed}
    {\Sc := \cadd{\Sc}{x \opequal y}}
    \quad
    \inferR[\ruleEmptyUnsat]
    {x \opin \opemptyset \in \Scclosed}
    {\unsat}
    \\[4ex]
    \inferR[\ruleSetDiseq]
    {s \not\opequal t \in \Scclosed
    }
    {\Sc := \cadd{\cadd{\Sc}{z \opin s}}{z \opnotin t}
      \quad\parallel\quad
      \Sc := \cadd{\cadd{\Sc}{z \opnotin s}}{z \opin t}}

    \quad
    \inferR[\ruleEqUnsat]
    {(t \not\opequal t) \in \Scclosed}
    {\unsat}
    \\[4ex]
    \inferR [\ruleProductUp]
    {\tup{x_1, \ldots, x_m} \sqin R_1 \in \Scclosed \quad
      \tup{y_1, \ldots, y_n} \sqin R_2 \in \Scclosed \quad
      R_1 \opprod R_2 \in \termsof{\Sc}
    }
    {\Sc := \cadd{\Sc}{\tup{x_1, \ldots, x_m, y_1, \ldots, y_n} \sqin R_1 \opprod R_2}}
    \\[4ex]
    \inferR [\ruleProductDown]
    {\tup{x_1, \ldots, x_m, y_1, \ldots, y_n} \sqin R_1 \opprod R_2 \in \Scclosed \quad
      \arity(R_1) = m}
    {\Sc := \cadd{\cadd{\Sc}{\tup{x_1, \ldots, x_m} \sqin R_1}}{\tup{y_1, \ldots, y_n} \sqin R_2}}
  \end{tabular}
  \caption{Basic rules for set intersection, union, difference, singleton, disequality and contradiction.
    In \ruleSetDiseq, $z$ is a fresh variable.
  }
  \label{fig:relations_rules}
\end{figure}

\subsection{Proofs for Relations Theory}\label{sec:sets_proofs}
\begin{restatable}{lemma}{lem:relationsProductTerminating}
  \label{lem:relations_product_terminating}
  The calculus for \(\relationsTheory\) is terminating with operators
  \(\{=, \sqcup, \sqcap, \setminus, \product, \filter\}\).
\end{restatable}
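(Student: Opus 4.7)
The plan is to adapt the termination argument for Proposition~\ref{prop:termination_no_map} to the set-theoretic setting, taking advantage of the fact that without $\map$ the set of reachable terms stays finite even in the presence of $\product$. I would begin by pinning down the population of terms that can occur in any configuration reachable from an initial $C_0 = \tup{\Sc_0, \Ec_0}$. For the allowed operators $\{\teq, \sqcup, \sqcap, \setminus, \product, \filter\}$, none of the rules in Figures~\ref{fig:relations_rules} and \ref{fig:relations_ho_rules} (restricted to the filter-only part of the higher-order rules) introduces a fresh relation term, so the set of relation-sorted subterms stays equal to $\termsof{\Sc_0}$ throughout. Only \ruleSetDiseq introduces fresh element variables, and at most one per set disequality in $\Sc_0^*$; since no rule produces new set disequalities under these operators, the total number of fresh variables is bounded a priori. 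The only rule that can enlarge the tuple population is \ruleProductUp, which concatenates two existing memberships into a tuple over an existing product relation $R_1 \product R_2$; because arities are fixed by $\termsof{\Sc_0}$ and the element pool is bounded, only finitely many distinct tuples of each reachable arity can ever be formed.

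Next I would observe that, consequently, the pool of atomic facts that can be non-redundantly added to $\Sc$ or $\Ec$ is finite: these facts consist of positive and negative memberships $e \sqin R$, element (dis)equalities between the bounded element pool, and filter atoms $p(e)$ and $\neg p(e)$. Every rule application monotonically adds such facts, so a configuration can be described (up to deductive closure) by a finite subset of this bounded universe.

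I would then define a ranking function $f$, following the pattern of Figure~\ref{fig:bag_ranking}, whose value at a configuration $c$ is a tuple of nonnegative integers indexed by rule name. For each rule, the associated coordinate counts the number of instances of that rule whose conclusion is not yet present in $\Sc$ or $\Ec$; by the calculation in the previous paragraph, each such count is bounded by a constant depending only on $C_0$. A non-redundant application of a rule $r$ strictly decreases coordinate $r$, while leaving all other coordinates either unchanged or, where new atoms are introduced, bounded above by the same universal constant. Ordering configurations lexicographically with \ruleProductUp weighted highest, followed by the remaining rules, yields a well-founded ordering $\succ$ on configurations with $c \succ c'$ whenever $c'$ is obtained from $c$ by a non-redundant rule application.

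The main technical obstacle is accounting for \ruleProductUp, the only rule that enriches the tuple universe and hence can enable new instances of \ruleInterUp, \ruleUnionUp, \ruleDifferenceUp, \rn{Filter up}, \rn{Filter down}, and \ruleProductDown. The key point to verify is that the total enabling capacity of \ruleProductUp is finite a priori: the tuples it can ever produce are exactly the concatenations of component tuples from initial membership facts on factors of products, and these, together with their arities, form a finite set computable from $C_0$. Hence the downstream coordinate increases are bounded by a constant, and placing \ruleProductUp first in the lexicographic order absorbs them. A similar but simpler check dispatches the branching in \rn{Filter up}: each branch commits to exactly one of $p(e)$, $\neg p(e)$, making reapplication of the same instance redundant thereafter. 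Since $f$ is bounded below by zero and strictly decreases with every rule application, every derivation tree starting at $C_0$ is finite, establishing termination.
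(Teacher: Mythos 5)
Your proposal is correct and follows essentially the same route as the paper's proof: first bound the universe of set terms (which never grows) and element/tuple terms (which grow only boundedly via \ruleSetDiseq, \ruleProductUp, and \ruleProductDown), then define one ranking function per rule counting not-yet-derived conclusions and combine them into a well-founded lexicographic order that strictly decreases with every non-redundant rule application. The only cosmetic difference is that the paper computes all coordinate bounds a priori from the worst-case term counts so every coordinate is monotonically non-increasing, whereas you absorb the enabling effect of \ruleProductUp by placing it highest in the lexicographic order; both devices work.
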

\begin{proof}
  %
  %
  Suppose we start with a configuration \(c = \conf{\Sc,\Ec}\).
  Without loss of generality assume all set terms are relations,
  and all element terms are tuples.
  Now let \(\ter{\eleSort_c}, \ter{\Sc_c}\)
  be all the finitely-many terms at \(c\) of sorts  \(\tupleSort, \s\) respectively.
  Suppose \(e_0, s_0\) are the numbers of these terms respectively.
  None of the calculus rules introduces new \(\s\) terms.
  So \(s_0\) remains constant in all derivation trees with root \(c\).
  Rule \ruleSetDiseq generates new element terms, but it is only applied
  once for each disequality constraint.
  At the worst case all set terms are distinct.
  So \(e_0\) can increase by at most \(s_0^2\) terms
  in all derivation trees using \ruleSetDiseq rule.
  Define \(e_1 = e_0 + s_0^2\).
  Rule \ruleProductUp introduces at most \(e_1^2\) new terms.
  Rule \ruleProductDown introduces at most \(2 e_1\) new element terms.
  Define \(e_2 = e_1^2 + 2e_1\).
  So the number of element terms in all derivation is at most \(e_2\).
  %
  %
  We define the order relation \(\succ \) over configurations
  as follows:
  \begin{itemize}
    \item \(c \succ c'\) if \(c \neq \unsat\) and \(c' = \unsat\).
    \item \(c \succ c'\) if \(c \neq \unsat\) and \(c' \neq \unsat\) and
          \(
          \tup{f_1(c), \dots, f_{\terminatingSetCount}(c)} >_{lex}^{\terminatingSetCount}
          \tup{f_1(c'), \dots, f_{\terminatingSetCount}(c')}
          \)
          where \(f_i\) are ranking functions defined in Figure~\ref{fig:set_ranking}.
    \item   \(c \not\succ c'\) otherwise.
  \end{itemize}

  \renewcommand{\arraystretch}{1.5}
  \begin{figure}
    \begin{center}
      \begin{tabular}{l|l|l}
        \toprule
        \(f_i\)    & Rule                  & Definition                                                                                                          \\
        \midrule
        \(f_1\)    & \ruleInterUp          & \(e_2 \times s_0^2 - \card{\{x \sqin s \sqcap t \mid x \sqin s \sqcap t \in \Sc\}} \)                               \\
        \(f_2\)    & \ruleInterDown        & \(e_2 \times s_0^2  - \card{\{x \sqin s \mid x \sqin s \in \Sc\}} \)                                                \\
        \(f_3\)    & \ruleUnionUp          & \(e_2 \times s_0^2 - \card{\{x \sqin s \sqcup t \mid x \sqin s \sqcup t \in \Sc\}} \)                               \\
        \(f_4\)    & \ruleInterDown        & \(e_2 \times s_0^2  - \card{\{x \sqin s \mid x \sqin s \in \Sc\}} \)                                                \\
        \(f_5\)    & \ruleDifferenceUp     & \(e_2 \times s_0^2  - \card{\{x \sqin s \mid x \sqin s \in \Sc\}} \)                                                \\
        \(f_6\)    & \ruleDifferenceDown   & \(e_2 \times s_0^2  - \card{\{x \not\sqin s \mid x \not\sqin s \in \Sc\}} \)                                        \\
        \(f_7\)    & \ruleSingleUp         & \(s_0  - \card{\{x \sqin [x] \mid x \sqin [x] \in \Sc\}} \)                                                         \\
        \(f_8\)    & \ruleSingleDown       & \(e_2 \times s_0  - \card{\{x \teq y \mid x \sqin [y] \in \Sc\}} \)                                                 \\
        \(f_9\)    & \ruleSetDiseq         & \(s_0^2  - \card{\{\tup{ z_{s,t} \sqin s, z_{s,t} \not\sqin t} \mid s \tneq t \in \Sc^*\}} \)                       \\
        \(f_{10}\) & \ruleProductUp        & \(e_2^2 \times s_0^2  - \card{\{\tup{ \vec{x}, \vec{y}} \mid \tup{\vec{x}, \vec{y}} \sqin s\product t \in\Sc\}} \)  \\
        \(f_{11}\) & \ruleProductDown      & \(e_2^2 \times s_0^2  - \card{\{\tup{ \vec{x}, \vec{y}} \mid \vec{x} \sqin s \in \Sc, \vec{y} \sqin t \in \Sc\}} \) \\
        \(f_{12}\) & \ruleRelationJoinUp   & \(e_2^2 \times s_0^2  - \card{\{\tup{ \vec{x}, \vec{y}} \mid \tup{\vec{x}, \vec{y}} \sqin s\product t \in\Sc\}} \)  \\
        \(f_{13}\) & \ruleRelationJoinDown & \(e_2^2 \times s_0^2  - \card{\{\tup{ \vec{x}, \vec{y}} \mid \vec{x} \sqin s \in \Sc, \vec{y} \sqin t \in \Sc\}} \) \\
        \(f_{14}\) & \ruleSetFilterUp      & \(
        \begin{array}{ll}
          e_2 \times s_0 & - \card{\{\tup{p(e), e \sqin \filter(p, s)}  \mid p(e) \in \Ec, e \sqin \filter(p, s) \in \Sc \}}                   \\
                         & - \card{\{\tup{\neg p(e), e \not\sqin \filter(p, s)}  \mid \neg p(e) \in \Ec, e \not\sqin \filter(p, s) \in \Sc \}}
        \end{array}
        \)                                                                                                                                                       \\
        \(f_{15}\) & \ruleSetFilterDown    & \(
        e_2 \times s_0 - \card{\{\tup{p(e), e \sqin s} \mid p(e) \in \Ec, e \sqin s \in \Sc\}}
        \)                                                                                                                                                       \\
        \bottomrule
      \end{tabular}
    \end{center}
    \caption{Ranking functions for relation rules.
    }
    \label{fig:set_ranking}
  \end{figure}

  Each application of any rule would only reduce its corresponding function
  and leave the rest functions unchanged.
  In other words, \(f_i(c') = f_i(c) - 1\), and \(f_j(c') = f_j(c) \)
  for \(i, j \in [1, \terminatingSetCount], i \neq j\).
  It is clear that \(c \succ c'\).
  This means the derivation tree is finite, and therefore
  the calculus with the above assumptions is terminating.
\end{proof}

\mapTermination*
\begin{proof}
  Note that the graph \(G\) does not change in all configurations
  since the calculus rules do not introduce new set terms, or new edges.
  We cut the graph \(G\) into two subgraphs \(G_1, G_2\)
  such that subgraph \(G_1\) includes all cycles in \(G\)
  and all map terms are in subgraph \(G_2\).
  %
  %
  Without loss of generality, we assume all edges in the cut set are only connected
  to map terms in \(G_2\).
  Otherwise, we would have a black edge \((u, v)\) such that \(v \in V(G_2)\) is not a map
  term, and it can be safely added to \(G_1\) because \(v\) is not in any cycle.
  From Lemma~\ref{lem:relations_product_terminating}
  the calculus is terminating in \(G_1\).
  Suppose the number of element terms in \(G_1\) after termination is \(e_1\).
  We show the calculus is terminating in \(G_2\)
  when it has no cycles.
  Suppose \(G_2\)  has \(e_2\) element terms, and \(s_2\)
  set terms.
  In the worst case, in the sense of fresh variables,
  assume all these \(s_2\) terms are map terms \(m_i \) where \(i \in [1,s_2]\)
  such that \(m_i = \map(f_i, m_{i-1}), i > 1\).
  Applying rules \ruleSetMapUp and \ruleSetMapDown would introduce at most
  \(e_2 \times s_2\) new element terms.
  Define \(e'_2 = e_1 + e_2 \times s_2\).
  Both rules \ruleSetMapUp and \ruleSetMapDown
  could be applied at most \(e'_2\) times,
  and the ranking functions would be
  \(e'_2 - \card{\{f(e) \sqin \map(f,s) \mid e \sqin s, \map(f,s) \in \ter{\Sc}\}}\)
  and
  \(e'_2 - \card{\{\tup{\varphi_1, \varphi_2} \mid
    \varphi_1 = f(w) \teq e,
    \varphi_2 = w \sqin s,
    \varphi_1 \in \Ec, \varphi_2 \in \Sc, e \sqin \map(f,s) \in \Sc^* \}}\)
  respectively.
  Therefore, the calculus is terminating in  \(G_2\).
  The calculus is still terminating if there are non-map terms in \(G_2\), 
  since the ranking functions would be similar to the ones in
  Figure~\ref{fig:set_ranking} after accounting for the new terms introduced by
  map terms.  

  Since both subgraphs \(G_1, G_2\) are terminating separately,
  we turn our attention to edges in the cut set.
  First, the \(e_2'\) elements in \(G_2\) could be propagated to \(G_1\)
  with edges of any label in  \(\{=, \sqcup, \sqcap, \setminus, \product, \filter, \map\}\).
  \(G_1\) would still be terminating with the additional \(e'_2\) elements.
  Suppose the number of elements in \(G_1\) after termination is \(e'_1\).
  %
  These 
  \(e'_1\) elements could be propagated back and forth through the edges
  in the cut set.
  However, this process is only repeated once for each edge in the cut set which is finite.
  After that, all elements would have been propagated in both \(G_1\) and \(G_2\).
  Now both subgraphs are terminating separately which concludes the proof.
\end{proof}
\section{Examples of SQL Translation}\label{sec:examples}
Figure~\ref{fig:sql_smt} shows how the theory of tables and theory of nullables 
are used to translate many SQL queries. 
Exampele~\ref{ex:sql_benchmark} shows how to translate 
one of \calcite's benchmarks into these theories. 
\begin{figure}[!htbp]
  \footnotesize
  \centering
  \begin{tabular}{l|l}
    \toprule
    SQL                                & SMT                                                                            \\
    \midrule
    int, varchar                        & \(\nullable(\int), \nullable(\stringSort)\)                                    \\
    \hline
    select * from (values ('a'),('a')) & \(\bag(\tup{"a"}, 2)\)                                                         \\
    \hline
    is null, is not null               & \(\isNull, \isSome\)                                                           \\
    \hline
    select column1 + 1 from s          & \(\map(\lambda t. \lift(+, \select_0( t), \someN(1)), s)\)                      \\
    \hline
    select * from s where column1 \(>\) 5  & \(\sigma(\lambda t. \isSome(\select_0( t)) \wedge \valN(\select_0( t)) > 5, s)\) \\
    \hline
    select * from s cross join t       & \(s \product t\)                                                               \\
    \hline
    \(\begin{array}{l}
        \text{select * from s union } \\
        \text{select * from t}
      \end{array}\)
                                       & \(s \sqcup t\)                                                                 \\
    \hline
    \(\begin{array}{l}
        \text{select * from s union all} \\
        \text{select * from t}
      \end{array}\)
                                       & \(s \squplus t\)                                                               \\
    \hline
    \(\begin{array}{l}
        \text{select * from s intersect} \\
        \text{select * from t}
      \end{array}\)
                                       & \(s \sqcap t\)                                                                 \\
    \hline
    \(\begin{array}{l}
        \text{select * from s except } \\
        \text{select * from t}
      \end{array}\)
                                       & \(s \dsetminus t\)                                                             \\
    \hline
    \(\begin{array}{l}
        \text{select * from s except all} \\
        \text{select * from t}
      \end{array}\)
                                       & \(s \setminus t\)                                                              \\
    \hline
    select distinct * from s
                                       & \(\setof(s)\)                                                                  \\
    \hline
    select * from s inner join t on p
                                       & \(\filter(p, s \product t)\)                                                   \\
    \hline
    select * from s left join t on p
                                       &
    \(\begin{array}{l}
        \filter(p, s \product t) \squplus                                                                  \\
        \map(\lambda t. \tup{\select_0( t), \dots \select_{m-1}(t), \underbrace{\nullN, \dots \nullN}_{n}}, \\
        s \dsetminus (\tableProject_{0, \dots, m-1}(\filter(p, s \product t))))
      \end{array}
    \)                                                                                                                  \\
    \hline
    select * from s right join t on p
                                       &
    \(\begin{array}{l}
        \filter(p, s \product t) \quad \squplus                                                                  \\
        \map(\lambda t. \tup{\underbrace{\nullN, \dots \nullN}_{m}, \select_0( t), \dots \select_{n-1}(t)}, \\
        t \dsetminus (\tableProject_{m, \dots, m+n-1}(\filter(p, s \product t))))
      \end{array}
    \)                                                                                                                  \\
    \hline
    select * from s full join t on p
                                       &
    \(\begin{array}{l}
        \filter(p, s \product t) \quad \squplus \\
        \map(\lambda t. \tup{\select_0( t), \dots \select_{m-1}(t), \underbrace{\nullN, \dots \nullN}_{n}}, \\
        s \dsetminus (\tableProject_{0, \dots, m-1}(\filter(p, s \product t)))) \quad \squplus                    \\                                                                \\
        \map(\lambda t. \tup{\underbrace{\nullN, \dots \nullN}_{m}, \select_0( t), \dots \select_{n-1}(t)}, \\
        t \dsetminus (\tableProject_{m, \dots, m+n-1}(\filter(p, s \product t))))
      \end{array}
    \)  \\
    \bottomrule
  \end{tabular}
  \caption{Examples of SQL translation.
  \(m, n\) are the number of columns in \(s, t\) respectively. 
  }
  \label{fig:sql_smt}
\end{figure}

\newcommand{\dept}{\textit{dept}}
\newcommand{\emp}{\textit{emp}}

\definecolor{codegreen}{rgb}{0,0.6,0}  

    \lstset{upquote=true}

    \lstdefinestyle{mystyle}{   
        basicstyle=\footnotesize\ttfamily,        
        commentstyle=\color{codegreen},
        keywordstyle=\color{blue},        
        breakatwhitespace=false,                        
        captionpos=b,                    
        keepspaces=true,                        
        showspaces=false,                
        showstringspaces=false,
        showtabs=false,      
    }
    \lstset{style=mystyle} 

\begin{example}[testPushFilterThroughSemiJoin]\label{ex:sql_benchmark}
\begin{lstlisting}[language=SQL]

q1: SELECT * FROM DEPT AS DEPT INNER JOIN (SELECT EMP.DEPTNO FROM EMP AS EMP) AS t 
ON DEPT.DEPTNO = t.DEPTNO WHERE DEPT.DEPTNO <= 10;

q2: SELECT * FROM (SELECT * FROM DEPT AS DEPT0 WHERE DEPT0.DEPTNO <= 10) AS t1 
INNER JOIN (SELECT EMP0.DEPTNO FROM EMP AS EMP0) AS t2 
ON t1.DEPTNO = t2.DEPTNO;

\end{lstlisting}
\normalsize
The two queries above are translated into theory of tables as follows:
\begin{align*}
  q1 &= \bagFilter(p_1, \bagFilter(p_0, \dept \product \tableProject_7(\emp))) \\
  q2 &= \bagFilter(p_3, \tableProject_{0,1}(\bagFilter(p_2, \dept)) \product \tableProject_7(\emp))\\
  \lift_0 & = \lift (\lambda x, y \; . \; x \teq y, \select_0(t_0), \tSelect_2(t_0)) \\
  \lift_1 & = \lift (\lambda x, y \; . \; x \leq y, \select_0(t_1), \someN(10)) \\
  \lift_2 & = \lift (\lambda x, y \; . \; x \leq y, \select_0(t_2), \someN(10)) \\
  \lift_3 & = \lift (\lambda x, y \; . \; x \teq y, \select_0(t_0), \tSelect_2(t_3)) \\
  p_i     & = \lambda t_i . \isSome(\lift_i) \wedge \valN(\lift_i), i \in \{0,1,2,3\}  
\end{align*}

Here lambda variables \(t_0, \dots, t_3\) may have different types. 
We prove the two queries are equivalent by asserting the formula \(q_1 \tneq q_2\) in \cvc which answers \unsat. 
\end{example}
\end{report}

\end{document}